\providecommand{\U}[1]{\protect\rule{.1in}{.1in}}
\newtheorem{theorem}{Theorem}[section]
\newtheorem{condition}[theorem]{Condition}
\newtheorem{definition}[theorem]{Definition}
\newtheorem{lemma}[theorem]{Lemma}
\newtheorem{proposition}[theorem]{Proposition}
\newtheorem{remark}[theorem]{Remark}
\newenvironment{proof}[1][Proof]{\noindent\textbf{#1.} }{\ \rule{0.5em}{0.5em}}
\numberwithin{equation}{section}
\let\pdfoutput=\undefined\fi
\begin{document}

\title{A quantum evolution problem in the regime of quantum wells in a semiclassical
island with artificial interface conditions.}
\author{Andrea Mantile\thanks{Laboratoire de Math\'{e}matiques, Universit\'{e} de
Reims - FR3399 CNRS, Moulin de la Housse BP 1039, 51687 Reims, France.}}
\date{}
\maketitle

\begin{abstract}
We introduce a modified Schr\"{o}dinger operator where the semiclassical
Laplacian is perturbed by artificial interface conditions occurring at the
boundaries of the potential's support. The corresponding dynamics is analysed
in the regime of quantum wells in a semiclassical island. Under a suitable
energy constraint for the initial states, we show that the time propagator is
stable w.r.t. the non-selfadjont perturbation, provided that this is
parametrized through infinitesimal functions of the semiclassical parameter
'$h$'.

It has been recently shown that $h$-dependent artificial interface conditions
allow a new approach to the adiabatic evolution problem for the shape
resonances in models of resonant heterostructures. Our aim is to provide with
a rigorous justification of this method.

\end{abstract}

\begin{description}
\item[AMS Subject Classification: ] 81Q12, 47A40, 58J50, 81Q20.
\end{description}

\section{\label{Section_Intro}Introduction}

Artificial interface conditions have been used in \cite{FMN2} to develop an
alternative approach to the adiabatic evolution problem for shape resonances.
This consists in replacing the usual kinetic part of a 1D Schr\"{o}dinger
operator with a modified Laplacian whose domain is a restrictions of
$H^{2}(\mathbb{R}\backslash\left\{  a,b\right\}  )$ to functions fulfilling
the $\theta$-dependent conditions%
\begin{equation}
\left\{
\begin{array}
[c]{l}%
\smallskip e^{-\frac{\theta}{2}}u(b^{+})=u(b^{-})\,,\quad e^{-\frac{3}%
{2}\theta}u^{\prime}(b^{+})=u^{\prime}(b^{-})\,,\\
\\
e^{-\frac{\theta}{2}}u(a^{-})=u(a^{+}),\quad e^{-\frac{3}{2}\theta}u^{\prime
}(a^{-})=u^{\prime}(a^{+})\,,
\end{array}
\right.  \label{BC_theta}%
\end{equation}
occurring at the boundary of the interval $\left(  a,b\right)  $ (here
$u(x^{\pm})$ denote the right and left limits of $u$ in $x$). For $\theta
\in\mathbb{C}\backslash\left\{  0\right\}  $, the corresponding operator
$\Delta_{\theta}$ describes a singularly perturbed Laplacian, with
non-selfadjoint point interactions acting in $\left\{  a,b\right\}  $, while,
for $\theta=0$, the usual selfadjoint realization of $\Delta$ on
$H^{2}(\mathbb{R})$ is recovered.

The interest in quantum models arising from $\Delta_{\theta}$ stands upon the
fact that, for $\theta=i\tau$ and $\tau>0$, the sharp exterior complex
dilation mapping $x\rightarrow e^{\theta1_{\mathbb{R}\backslash\left(
a,b\right)  }(x)}x$, transforms $-i\Delta_{\theta}$ into the accretive
operator: $\left.  -ie^{-2\theta\,1_{\mathbb{R}\backslash\left(  a,b\right)
}(x)}\Delta_{2\theta}\right.  $ (being $1_{\mathbb{R}\backslash\left(
a,b\right)  }$ the characteristic function of the exterior domain). For
potentials $\mathcal{V}$ having compact support over $\left[  a,b\right]  $,
the complex deformed Schr\"{o}dinger operators: $\mathcal{H}_{\theta}\left(
\mathcal{V},\theta\right)  =-ie^{-2\theta\,1_{\mathbb{R}\backslash\left(
a,b\right)  }(x)}\Delta_{2\theta}+\mathcal{V}$, although nonselfadjoint, are
the generators of semigroup of contractions (we refer to Lemma 3.1 in
\cite{FMN2}). Since, according to the complex dilation technique (see
\cite{AgCo}, \cite{BaCo}), the quantum resonances of%
\begin{equation}
\mathcal{H}_{\theta}\left(  \mathcal{V}\right)  =-\Delta_{\theta}%
+\mathcal{V}\,, \label{H_mod}%
\end{equation}
identify with the spectral points of $\mathcal{H}_{\theta}\left(
\mathcal{V},\theta\right)  $ in the sector of the second Riemann sheet defined
by \newline$\left\{  z\in\mathbb{C}\,,\ -2\operatorname{Im}\theta<\arg
z\leq0\right\}  $, the adiabatic evolution problem for the resonances of
$\mathcal{H}_{i\tau}\left(  \mathcal{V}\right)  $, $\tau>0$,\ rephrases as an
adiabatic problem for the corresponding eigenstates of $\mathcal{H}_{i\tau
}\left(  \mathcal{V},i\tau\right)  $ and, accounting the contractivity
property of $e^{-it\mathcal{H}_{i\tau}\left(  \mathcal{V},i\tau\right)  }$, a
'standard' adiabatic theory can be developed (e.g. in \cite{Nenciu}). This
approach has been introduced in \cite{FMN2} where an adiabatic theorem is
obtained for shape resonances in the regime of quantum wells in a
semiclassical island.

Shape resonances play a central r\^{o}le in the mathematical analysis of
semiconductor heterostructures (like tunneling diodes or possibly more complex
structures; see \cite{BNP1}, \cite{BNP2}, \cite{BNP3}, and references
therein). In particular, the adiabatic approximation of the quantum evolution
appears to be a key point in the derivation of reduced models for the dynamics
of transverse quantum transport with concentrated non-linearities (e.g. in
\cite{JoPrSj}, \cite{PrSj}, \cite{PrSj1}; see also \cite{FMN3} for a
linearized case). The purpose of this work is to justify, under suitable
conditions, the use of Hamiltonians of the type $\mathcal{H}_{\theta}\left(
\mathcal{V}\right)  $ in the modelling of quantum transport through resonant heterostructures.

The relevance of the artificial interface conditions (\ref{BC_theta}) in the
description of physical systems, stands upon the fact that they are expected
to introduce small errors controlled by $\left\vert \theta\right\vert $. The
quantum dynamics generated by $\mathcal{H}_{\theta}\left(  \mathcal{V}\right)
$ has been considered in \cite{Man1}, where an accurate resolvent analysis,
and explicit formulas for the generalized eigenfunctions of the modified
operator, allow to obtain a small-$\theta$ expansion of the stationary waves
operators for couple $\left\{  \mathcal{H}_{\theta}\left(  \mathcal{V}\right)
,\mathcal{H}_{0}\left(  \mathcal{V}\right)  \right\}  $, provided that
$\mathcal{V}\in L^{2}\left(  \mathbb{R}\right)  $ is compactly supported on
$\left[  a,b\right]  $ and $1_{\left[  a,b\right]  }\mathcal{V}>0$. Then, the
quantum evolution group generated by $i\mathcal{H}_{\theta}\left(
\mathcal{V}\right)  $ is determined by conjugation from $e^{-it\mathcal{H}%
_{0}\left(  \mathcal{V}\right)  }$ and an uniform-in-time estimate for the
'distance' between the two dynamics follows from the expansion (see Theorem
1.2 in \cite{Man1})%
\begin{equation}
e^{-it\mathcal{H}_{\theta}\left(  \mathcal{V}\right)  }=e^{-it\mathcal{H}%
_{0}\left(  \mathcal{V}\right)  }+\mathcal{R}\left(  t,\theta\right)  \,.
\label{Exp_Man1}%
\end{equation}
For $\left\vert \theta\right\vert $ suitably small, the reminder
$\mathcal{R}\left(  t,\theta\right)  $ is strongly continuous w.r.t. $t$ and
$\theta$, exhibits the group property w.r.t. the time variable and is such
that
\begin{equation}
\sup_{t\in\mathbb{R\,}}\left\Vert \mathcal{R}\left(  t,\theta\right)
\right\Vert _{\mathcal{L}\left(  L^{2}(\mathbb{R})\right)  }=\mathcal{O}%
\left(  \left\vert \theta\right\vert \right)  \,. \label{Exp_Man1_rem}%
\end{equation}

In the modelling of 1D resonant heterostructures, the interaction
$\mathcal{V}$ usually describes a potential island whose profile is formed by
multiple barriers (e.g. in \cite{BNP1}). The barriers depth fixes the time
scale for the dispersion of the resonant states related to the shape
resonances; in those physical situations where this parameter can be rather
large compared to the size of the wave pockets (as in quantum tunnelling
diods), resonances have small imaginary parts and resonant states behave as
quasi-stationary. An unitarily equivalent description of the model consists in
replacing the kinetic part of the Hamiltonian with the 'semiclassical'
Laplacian, $-h^{2}\Delta$, while the potential is the superposition of a
barrier, supported on the bounded interval $\left[  a,b\right]  $, and
potential wells with support of size $h$ inside $\left(  a,b\right)  $. The
parameter $h$ now fixes the quantum scale of the system and, coherently with
the features of the physical model, is assumed to be small. The artificial
interface conditions (\ref{BC_theta}) are integrated to this framework through
a multiparameter family of operators%
\begin{equation}
\mathcal{H}_{\theta}^{h}\left(  \mathcal{V}^{h}\right)  =-h^{2}\Delta_{\theta
}+\mathcal{V}^{h}\,, \label{H_h_theta}%
\end{equation}
where $\mathcal{V}^{h}$ describes quantum wells in a semiclassical island. The
use of modified Hamiltonians of the type $\mathcal{H}_{\theta}^{h}\left(
\mathcal{V}\right)  $ in modelling realistic physical situations has to be
justified. In particular, we need to control the difference between the
modified dynamics and the unitary evolution generated by the selfadjoint
operator $\mathcal{H}_{0}^{h}\left(  \mathcal{V}^{h}\right)  $ both w.r.t.
$\left\vert \theta\right\vert $ and $h$ in a neighbourhood of the origin. Our
aim is to generalize to this framework the analysis leading to the expansion
(\ref{Exp_Man1}) in the $h$-independent case.

As it has been shown in \cite{Man1}, the key to obtain small-$\theta$
expansions of the generalized eigenfunctions, and then of the wave operators,
consists in controlling the boundary values of Green's functions as $z$
approaches the continuous spectrum. Introducing quantum wells in the model,
produces resonances with exponentially small imaginary parts as $h\rightarrow
0$. This means that the Green's functions will be exponentially large w.r.t.
$h$ somewhere in the potential structure when $z$ is close to the shape
resonances. Nevertheless, using energy estimates with exponential weights, it
is possible to show that their values on the boundary of the potential's
support exhibit only a polynomial growth in $1/h$, for $h\rightarrow0$, (an
explicit example of this mechanism can be found in \cite{FMN3}). Studying the
Green function around a resonant energy requires the introduction of a
Dirichlet problem in order to resolve the spectral singularity and to match
the complete problem with some combination of this spectral problem with the
filled wells spectral problem. Following \cite{HeSj1}, \cite{Hel}, the Grushin
technique can be used for handling this matching and obtain resolvent approximations.

In the next Section, non-mixed interface conditions occurring at the
boundaries of the potential's support are introduced. The corresponding
modified operators, forming a larger class which includes both the cases of
$\mathcal{H}_{\theta}^{h}\left(  \mathcal{V}\right)  $ and $\left(
\mathcal{H}_{\theta}^{h}\left(  \mathcal{V}\right)  \right)  ^{\ast}$, will be
considered along our work. In the Section \ref{Section_Resonances}, the
potential profile's is specified in order to describe quantum wells in a
semiclassical island. Here, we fix the spectral assumptions yielding the
existence of shape resonances in a suitable energy range (see Condition
\ref{condition_1} below), and implement the Grushin technique, with
semiclassical resolvent estimates, in order to obtain trace estimates for the
Green's functions as $h\rightarrow0$. These estimates are used in the Section
\ref{Section_evolution} where a similarity between the modified and the
selfadjoint Hamiltonians is proved in a spectral subspace. This result
requires a standard assumption of lower bounds for the imaginary parts of the
resonances (see eq. (\ref{Fermi_golden_0})), as well as an upper bound for the
interface-condition parameters, which are assumed to be polynomially small
w.r.t. $h$ as it goes to zero. Finally, the modified quantum dynamical system
is defined by conjugation and, in the Theorem \ref{Theorem_1}, we provide with
an expansion of the propagator in the limit $h\rightarrow0$. At the end of the
Section \ref{Section_evolution}, some comments about this result and the
related perspectives are proposed. Technical details concerned with the proofs
in the Subsections \ref{Section_resest}, \ref{Section_traceest} and
\ref{Section_simil} are provided with in the Appendix \ref{App_expest} and
\ref{App_jostest}.

\subsection{Notation}

We use a generalization of the Landau notation $\mathcal{O}\left(
\cdot\right)  $ and an $h$-dependent $H^{n}$ norm defined according to:

\begin{definition}
\label{Landau_Notation}Let be $X$ a metric space and $f,g:X\rightarrow
\mathbb{C}$. Then $f=\mathcal{O}\left(  g\right)  $
$\overset{def}{\Longleftrightarrow}$ $\forall\,x\in X$ it holds: $\left.
f(x)=p(x)g(x)\right.  $, being $p$ a bounded map $X\rightarrow\mathbb{C}$.
\end{definition}

\begin{definition}
\label{def_h_norms}$H^{n,h}$ is the space of $n$-times weakly differentiable
$L^{2}$-functions equipped with the norm%
\begin{equation}
\left\Vert u\right\Vert _{H^{n,h}}^{2}=\sum_{j\leq n}\left\Vert \left(
h\partial_{x}\right)  ^{j}u\right\Vert _{L^{2}}^{2} \label{h_norm}%
\end{equation}

\end{definition}

In what follows: $\mathcal{B}_{\delta}(p)$ is the open disk of radius $\delta$
centered in a point $p\in\mathbb{C}$. $\mathbb{C}^{\pm}$ are the upper and
lower complex half-plane. $1_{\Omega}(\cdot)$ is the characteristic function
of a domain $\Omega$. $d\left(  X,Y\right)  $ is the distance between the sets
$X,Y\subset\mathbb{R}$ or $\mathbb{C}$. $\partial_{j}f$, denotes the
derivative of $f$ w.r.t. the $j$-th variable. Moreover, the notation
'$\lesssim$', appearing in some of the proofs, denotes the inequality: '$\leq
C$' being $C$ a suitable positive constant.

\section{\label{Section_interface_conditions}A semiclassical island with
non-mixed interface conditions}

We start considering the family of modified Schr\"{o}dinger operators
$Q_{\theta_{1},\theta_{2}}^{h}(\mathcal{V})$, depending on the parameters
$h>0$, $\left(  \theta_{1},\theta_{2}\right)  \in\mathbb{C}^{2}$ and on a
potential $\mathcal{V}$ which is assumed to be selfadjoint and compactly
supported on the bounded interval $\left[  a,b\right]  $. In particular, we
set%
\begin{equation}
\mathcal{V}\in L^{2}(\mathbb{R},\mathbb{R})\,,\qquad\text{supp }%
\mathcal{V}=\left[  a,b\right]  \,. \label{V}%
\end{equation}
The parameters $\theta_{1}$ and $\theta_{2}$ fix the interface conditions,%
\begin{equation}
\left\{
\begin{array}
[c]{ccc}%
e^{-\frac{\theta_{1}}{2}}u(b^{+})=u(b^{-})\,, &  & e^{-\frac{\theta_{2}}{2}%
}u^{\prime}(b^{+})=u^{\prime}(b^{-})\,,\\
&  & \\
e^{-\frac{\theta_{1}}{2}}u(a^{-})=u(a^{+})\,, &  & e^{-\frac{\theta_{2}}{2}%
}u^{\prime}(a^{-})=u(a^{+})\,,
\end{array}
\right.  \label{B_C_1}%
\end{equation}
occurring at the boundary of the potential's support and $Q_{\theta_{1}%
,\theta_{2}}^{h}(\mathcal{V})$ is defined as follows%
\begin{equation}
Q_{\theta_{1},\theta_{2}}^{h}(\mathcal{V}):\left\{
\begin{array}
[c]{l}%
D\left(  Q_{\theta_{1},\theta_{2}}^{h}(\mathcal{V})\right)  =\left\{  u\in
H^{2}\left(  \mathbb{R}\backslash\left\{  a,b\right\}  \right)  \,\left\vert
\ \text{(\emph{\ref{B_C_1}}) holds}\right.  \right\}  \,,\\
\\
\left(  Q_{\theta_{1},\theta_{2}}^{h}(\mathcal{V})\,u\right)  (x)=-h^{2}%
u^{\prime\prime}(x)+\mathcal{V}(x)\,u(x)\,,\qquad x\in\mathbb{R}%
\backslash\left\{  a,b\right\}  \,.
\end{array}
\right.  \label{Q_teta}%
\end{equation}
The set $\left\{  Q_{\theta_{1},\theta_{2}}^{h}(\mathcal{V})\,,\ \left(
\theta_{1},\theta_{2}\right)  \in\mathbb{C}^{2}\right\}  $ is closed w.r.t.
the adjoint operation: a direct computation shows that%
\begin{equation}
\left(  Q_{\theta_{1},\theta_{2}}^{h}(\mathcal{V})\right)  ^{\ast}%
=Q_{-\theta_{2}^{\ast},-\theta_{1}^{\ast}}(\mathcal{V})\,. \label{Q_teta_adj}%
\end{equation}
The subset of selfadjoint operators in this class is identified by the
conditions: for $\theta_{j}=r_{j}e^{i\varphi_{j}}$, $j=1,2$,%
\begin{equation}
\left\{
\begin{array}
[c]{l}%
\varphi_{1}+\varphi_{2}=\pi+2\pi k\,,\quad k\in\mathbb{Z}\,,\\
r_{1}=r_{2}\,.
\end{array}
\right.  \label{Selafadj_cond}%
\end{equation}
When (\ref{Selafadj_cond}) are not satisfied, the corresponding operator
$Q_{\theta_{1},\theta_{2}}^{h}(\mathcal{V})$ is neither selfadjoint nor
symmetric, since in this case: $Q_{\theta_{1},\theta_{2}}^{h}(\mathcal{V}%
)\not \subset \left(  Q_{\theta_{1},\theta_{2}}^{h}(\mathcal{V})\right)
^{\ast}$. For each couple $\left\{  \theta_{1},\theta_{2}\right\}  $,
$Q_{\theta_{1},\theta_{2}}^{h}\left(  \mathcal{V}\right)  $ identifies with a
(possibly non-selfadjoint) extension of the \emph{simple} symmetric operator
$Q^{0,h}(\mathcal{V})$%
\begin{equation}
D\left(  Q^{0,h}(\mathcal{V})\right)  =\left\{  u\in H^{2}\left(
\mathbb{R}\right)  \,\left\vert \ u(\alpha)=u^{\prime}(\alpha)=0\,,\ \alpha
=a,b\right.  \right\}  \,,
\end{equation}
whose action is defined as in (\ref{Q_teta}) (recall that a symmetric operator
is simple if there is no non-trivial subspace \ which reduces it to a
selfadjoint operator). In particular, $Q_{\theta_{1},\theta_{2}}^{h}\left(
\mathcal{V}\right)  $ defines an explicitly solvable model w.r.t. the
selfadjoint Hamiltonian $Q_{0,0}^{h}\left(  \mathcal{V}\right)  $.

\subsection{\label{Section_Resolvent_1}Boundary triples and Krein-like
resolvent formulas.}

Following \cite{Man1}, we next use the \emph{boundary triples} approach to
obtain explicit resolvent and generalized eigenfunction formulas for
$Q_{\theta_{1},\theta_{2}}^{h}\left(  \mathcal{V}\right)  $.

Point perturbation models, as $Q_{\theta_{1},\theta_{2}}^{h}(\mathcal{V})$,
can be described as restrictions of a \emph{larger} operator through linear
relations on an Hilbert space. Let introduce $Q^{h}(\mathcal{V})$%
\begin{equation}
\left\{
\begin{array}
[c]{l}%
D(Q^{h}(\mathcal{V}))=H^{2}\left(  \mathbb{R}\backslash\left\{  a,b\right\}
\right)  \,,\\
\\
\left(  Q^{h}(\mathcal{V})\,u\right)  (x)=-h^{2}u^{\prime\prime}%
(x)+\mathcal{V}(x)\,u(x)\qquad\text{for }x\in\mathbb{R}\backslash\left\{
a,b\right\}  \,,
\end{array}
\right.  \label{Q}%
\end{equation}
with $\mathcal{V}$ defined according to (\ref{V}), and let $Q^{0,h}%
(\mathcal{V})$ be such that: $\left(  Q^{0,h}(\mathcal{V})\right)  ^{\ast
}=Q^{h}(\mathcal{V})$. Explicitly, $Q^{0,h}(\mathcal{V})$ identifies with the
symmetric restriction of $Q^{h}(\mathcal{V})$ to the domain $D\left(
Q^{0,h}(\mathcal{V})\right)  $. The related defect spaces, $\mathcal{N}%
_{z,h}=\ker(Q^{h}(\mathcal{V})-z)$, are $4$-dimensional subspaces of
$D(Q^{h}(\mathcal{V}))$ generated, for $z\in\mathbb{C}\backslash\mathbb{R}$,
by the independent solutions of the problem%
\begin{equation}
\left\{
\begin{array}
[c]{l}%
(-h^{2}\partial_{x}^{2}+\mathcal{V}-z)u(x)=0\,,\quad x\in\mathbb{R}%
\backslash\left\{  a,b\right\}  \,,\\
\\
u\in D(Q^{h}(\mathcal{V}))\,.
\end{array}
\right.  \label{defect_fun}%
\end{equation}
A \emph{boundary triple} $\left\{  \mathbb{C}^{4},\Gamma_{0},\Gamma
_{1}\right\}  $ for $Q^{h}(\mathcal{V})$ is defined with two linear boundary
maps $\Gamma_{i=1,2}:D(Q^{h}(\mathcal{V}))\rightarrow\mathbb{C}^{4}$, possibly
depending on $h$, fulfilling for any $\psi,\varphi\in D(Q^{h}(\mathcal{V}))$
the equation%
\begin{equation}
\left\langle \psi,Q^{h}(\mathcal{V})\varphi\right\rangle _{L^{2}(\mathbb{R}%
)}-\left\langle Q^{h}(\mathcal{V})\psi,\varphi\right\rangle _{L^{2}%
(\mathbb{R})}=\left\langle \Gamma_{0}\psi,\Gamma_{1}\varphi\right\rangle
_{\mathbb{C}^{4}}-\left\langle \Gamma_{1}\psi,\Gamma_{0}\varphi\right\rangle
_{\mathbb{C}^{4}}\,, \label{BVT_1}%
\end{equation}
and such that the transformation $\left(  \Gamma_{0},\Gamma_{1}\right)
:D(Q^{h}(\mathcal{V}))\rightarrow\mathbb{C}^{4}\times\mathbb{C}^{4}$ is
surjective. With the help of this structure, it is possible to parametrize the
\emph{proper extensions} of $Q^{0,h}(\mathcal{V})$, i.e. the closed operators
$Q^{h}$ such that $Q^{0,h}(\mathcal{V})\subseteq Q^{h}\subseteq Q^{h}%
(\mathcal{V})$, in terms of \emph{linear relations} on $\mathbb{C}^{4}$ (that
is the linear subspaces of $\mathbb{C}^{4}\times\mathbb{C}^{4}$). The
extension $Q_{\theta}^{h}\left(  \mathcal{V}\right)  $, corresponding to
$\theta\in\mathbb{C}^{4}\times\mathbb{C}^{4}$, is defined by the restriction
(we refer to \cite{BeMaNe}, Proposition 2.2)%
\begin{equation}
Q_{\theta}^{h}\left(  \mathcal{V}\right)  =Q^{h}(\mathcal{V})\upharpoonright
\left\{  u\in D\left(  Q^{h}(\mathcal{V})\right)  \,\left\vert \ \left(
\Gamma_{0}u,\Gamma_{1}u\right)  \in\theta\right.  \right\}  \,.
\label{parametrization_teta}%
\end{equation}
Let us denote with $\tilde{Q}^{h}(\mathcal{V})$ the particular extension
associated to the subspace $\left\{  0,\mathbb{C}^{4}\right\}  $: it plays the
r\^{o}le of a 'reference extension' of $Q^{0,h}(\mathcal{V})$. This operator
depends on the choice of the maps $\Gamma_{i=0,1}$ (which is not unique) and
coincides with the restriction of $Q^{h}(\mathcal{V})$ to the functions
fulfilling the condition: $\Gamma_{0}u=0$. According to the relation
(\ref{BVT_1}), $\tilde{Q}^{h}(\mathcal{V})$ is selfadjoint and $\mathbb{C}%
\backslash\mathbb{R}\subset\mathcal{\rho}\left(  \tilde{Q}^{h}(\mathcal{V}%
)\right)  $.

If a proper extension $Q_{\theta}^{h}\left(  \mathcal{V}\right)  $ is
\emph{disjoint} with $\tilde{Q}^{h}(\mathcal{V})$ (i.e.: $D\left(  Q_{\theta
}^{h}\left(  \mathcal{V}\right)  \right)  \cap D\left(  \tilde{Q}%
^{h}(\mathcal{V})\right)  =D\left(  Q^{0,h}(\mathcal{V})\right)  $), then
$\theta$ identifies with the graph of a linear operator on $\mathbb{C}^{4}$,
i.e. it exists $M\in\mathbb{C}^{4,4}$ s.t. $\theta=\left\{  \left(
u,Mu\right)  \,,\ u\in\mathbb{C}^{4}\right\}  $. In this case, $Q_{\theta}%
^{h}\left(  \mathcal{V}\right)  $ rephrases as%
\begin{equation}
Q_{M}^{h}(\mathcal{V})=Q^{h}(\mathcal{V})\upharpoonright\ker\left(
M\Gamma_{0}-\Gamma_{1}\right)  \,, \label{parametrization_M}%
\end{equation}
and the characterization: $\left(  Q_{M}^{h}(\mathcal{V})\right)  ^{\ast
}=Q_{M^{\ast}}^{h}(\mathcal{V})\,$, holds (see \cite{BeMaNe}, Proposition
2.2). A resolvent formula expresses the difference $\left.  \left(  Q_{M}%
^{h}(\mathcal{V})-z\right)  ^{-1}-\left(  \tilde{Q}^{h}(\mathcal{V})-z\right)
^{-1}\right.  $ in terms of a finite rank operator with range $\mathcal{N}%
_{z}$ (e.g. in \cite{Ryz1}, Theorem 1.2). Let introduce the \emph{Gamma field
}and the \emph{Weyl function} associated with the triple $\left\{
\mathbb{C}^{4},\Gamma_{0},\Gamma_{1}\right\}  $; these are the holomorphic
families of bounded operators in $\mathcal{L}\left(  \mathbb{C}^{4}%
,L^{2}\left(  \mathbb{R}\right)  \right)  $ and $\mathcal{L}\left(
\mathbb{C}^{4},\mathbb{C}^{4}\right)  $ defined by%
\begin{equation}
\gamma_{z,h}(\mathcal{V})=\left(  \left.  \Gamma_{0}\right\vert _{\mathcal{N}%
_{z,h}}\right)  ^{-1}\,,\qquad q(z,\mathcal{V},h)=\Gamma_{1}\circ\gamma
_{z,h}(\mathcal{V})\,,\qquad z\in\mathcal{\rho}\left(  \tilde{Q}%
^{h}(\mathcal{V})\right)  \,, \label{gamma_q_def}%
\end{equation}
where $\left.  \Gamma_{0}\right\vert _{\mathcal{N}_{z,h}}$ is the restriction
of $\Gamma_{0}$ to $\mathcal{N}_{z,h}$. Then the disjoint couple $\left(
Q_{M}^{h}\left(  \mathcal{V}\right)  ,\tilde{Q}^{h}(\mathcal{V})\right)  $
fulfills the relation
\begin{equation}
\left(  Q_{M}^{h}(\mathcal{V})-z\right)  ^{-1}-\left(  \tilde{Q}%
^{h}(\mathcal{V})-z\right)  ^{-1}=\gamma_{z,h}(\mathcal{V})\left(
M-q(z,\mathcal{V},h)\right)  ^{-1}\gamma_{\bar{z},h}^{\ast}(\mathcal{V)}%
\,,\quad z\in\rho\left(  Q_{M}^{h}(\mathcal{V})\right)  \cap\mathcal{\rho
}\left(  \tilde{Q}^{h}(\mathcal{V})\right)  \,. \label{krein}%
\end{equation}

In the perspective of a comparison between $Q_{\theta_{1},\theta_{2}}%
^{h}(\mathcal{V})$ and $Q_{0,0}(\mathcal{V})$ a natural choice is%
\begin{equation}%
\begin{array}
[c]{ccc}%
\Gamma_{0}^{h}u=h^{2}%
\begin{pmatrix}
u^{\prime}(b^{-})-u^{\prime}(b^{+})\smallskip\medskip\\
u(b^{+})-u(b^{-})\smallskip\medskip\\
u^{\prime}(a^{-})-u^{\prime}(a^{+})\smallskip\medskip\\
u(a^{+})-u(a^{-})
\end{pmatrix}
\,, &  & \Gamma_{1}u=\frac{1}{2}%
\begin{pmatrix}
u(b^{+})+u(b^{-})\smallskip\medskip\\
u^{\prime}(b^{+})+u^{\prime}(b^{-})\smallskip\medskip\\
u(a^{+})+u(a^{-})\smallskip\medskip\\
u^{\prime}(a^{+})+u^{\prime}(a^{-})
\end{pmatrix}
\,,
\end{array}
\, \label{BVT}%
\end{equation}
which leads to: $\tilde{Q}^{h}(\mathcal{V})=Q_{0,0}^{h}(\mathcal{V})$.
According to the definitions (\ref{Q_teta}) and (\ref{Q}), the operator
$Q_{\theta_{1},\theta_{2}}^{h}(\mathcal{V})$ identifies with the restriction
of $Q^{h}(\mathcal{V})$ parametrized by the $\mathbb{C}^{4,4}$-block-diagonal
matrices%
\begin{equation}%
\begin{array}
[c]{ccc}%
A_{\theta_{1},\theta_{2}}^{h}=\frac{1}{h^{2}}%
\begin{pmatrix}
a(\theta_{1},\theta_{2}) &  & \\
&  & \\
&  & a(-\theta_{1},-\theta_{2})
\end{pmatrix}
\,, &  & B_{\theta_{1},\theta_{2}}=%
\begin{pmatrix}
b(\theta_{1},\theta_{2}) &  & \\
&  & \\
&  & b(-\theta_{1},-\theta_{2})
\end{pmatrix}
\,,
\end{array}
\label{AB_teta1,2}%
\end{equation}
where%
\begin{equation}%
\begin{array}
[c]{ccc}%
a(\theta_{1},\theta_{2})=%
\begin{pmatrix}
1+e^{\frac{\theta_{2}}{2}} & 0\\
0 & 1+e^{\frac{\theta_{1}}{2}}%
\end{pmatrix}
\,, &  & b(\theta_{1},\theta_{2})=2%
\begin{pmatrix}
0 & 1-e^{\frac{\theta_{2}}{2}}\\
e^{\frac{\theta_{1}}{2}}-1 & 0
\end{pmatrix}
\,.
\end{array}
\label{ab_teta1,2}%
\end{equation}
Using (\ref{BVT}) and (\ref{ab_teta1,2})-(\ref{ab_teta1,2}), the linear
relations (\emph{\ref{B_C_1}}) rephrase as%
\begin{equation}
A_{\theta_{1},\theta_{2}}^{h}\Gamma_{0}^{h}u=B_{\theta_{1},\theta_{2}}%
\Gamma_{1}u\,, \label{bc_teta1_teta_2}%
\end{equation}
which leads to the equivalent definition
\begin{equation}
Q_{\theta_{1},\theta_{2}}^{h}(\mathcal{V}):\left\{
\begin{array}
[c]{l}%
\mathcal{D}\left(  Q_{\theta_{1},\theta_{2}}^{h}(\mathcal{V})\right)
=\left\{  u\in D(Q^{h}(\mathcal{V}))\,\left\vert \ A_{\theta_{1},\theta_{2}%
}^{h}\Gamma_{0}^{h}u=B_{\theta_{1},\theta_{2}}\Gamma_{1}u\right.  \right\}
\,,\\
\\
Q_{\theta_{1},\theta_{2}}^{h}(\mathcal{V})\,u=Q^{h}(\mathcal{V})\,u\,.
\end{array}
\right.  \label{Q_teta1_teta2}%
\end{equation}
Adopting this parametrization, the resolvent's formula rephrases as%
\begin{equation}
\left(  Q_{\theta_{1},\theta_{2}}^{h}(\mathcal{V})-z\right)  ^{-1}=\left(
Q_{0,0}^{h}(\mathcal{V})-z\right)  ^{-1}-\sum_{i,j=1}^{4}\left[  \left(
B_{\theta_{1},\theta_{2}}\,q(z,\mathcal{V},h)-A_{\theta_{1},\theta_{2}}%
^{h}\right)  ^{-1}B_{\theta_{1},\theta_{2}}\right]  _{ij}\left\langle
\gamma_{\bar{z},h}(e_{j},\mathcal{V)},\cdot\right\rangle _{L^{2}(\mathbb{R}%
)}\gamma_{z,h}(e_{i},\mathcal{V})\,, \label{krein_1}%
\end{equation}
where $\left\{  e_{i}\right\}  _{i=1}^{4}$ is the standard basis in
$\mathbb{C}^{4}$, while $\gamma_{z,h}(v,\mathcal{V})$ denotes the action of
$\gamma_{z,h}(\mathcal{V})$ on the vector $v$. The corresponding integral
kernel, $\mathcal{G}_{\theta_{1},\theta_{2}}^{z,h}(x,y)$, is%
\begin{equation}
\mathcal{G}_{\theta_{1},\theta_{2}}^{z,h}(x,y)=\mathcal{G}_{0,0}%
^{z,h}(x,y)-\sum_{i,j=1}^{4}\left[  \left(  B_{\theta_{1},\theta_{2}%
}\,q(z,\mathcal{V},h)-A_{\theta_{1},\theta_{2}}^{h}\right)  ^{-1}B_{\theta
_{1},\theta_{2}}\right]  _{ij}\left[  \gamma_{z,h}(e_{j},\mathcal{V})\right]
(y)\,\left[  \gamma_{z,h}(e_{i},\mathcal{V})\right]  (x)\,. \label{G_z_teta}%
\end{equation}

The Krein's-like formula (\ref{krein_1}) allows a detailed resolvent analysis
for the operators $Q_{\theta_{1},\theta_{2}}^{h}(\mathcal{V})$. The case $h=1$
has been explicitly considered in \cite{Man1}. Since $Q_{\theta_{1},\theta
_{2}}^{h}(\mathcal{V})$ and $Q_{\theta_{1},\theta_{2}}^{1}(\mathcal{V})$ are
unitarily equivalent by an $h$-dependent dilation, they have the same spectral
profile and the most of the results holding for $Q_{\theta_{1},\theta_{2}}%
^{1}(\mathcal{V})$ extend to the present framework for any fixed $h>0$. In
what follows we recall the spectral properties of $Q_{\theta_{1},\theta_{2}%
}^{h}$, and give a formula for its generalized eigenfunctions; details of the
proofs may be found in \cite{Man1}.

\begin{proposition}
\label{Proposition_spectrum}For a fixed $h>0$, the operator $Q_{\theta
_{1},\theta_{2}}^{h}(\mathcal{V})$ defined according to (\ref{V}),
(\ref{Q_teta}) is Kato-analytic w.r.t. the parameters $\theta_{j=1,2}$. For
any couple $\left(  \theta_{1},\theta_{2}\right)  \in\mathbb{C}^{2}$, the
essential part of the spectrum is $\sigma_{ess}\left(  Q_{\theta_{1}%
,\theta_{2}}^{h}(\mathcal{V})\right)  =\mathbb{R}_{+}$. If, in addition,
$\mathcal{V}$ is assumed to be defined positive, i.e.%
\begin{equation}
\left\langle u,\mathcal{V\,}u\right\rangle _{L^{2}(a,b)}>0\qquad\forall\,u\in
L^{2}(\mathbb{R})\,, \label{V_pos}%
\end{equation}
then it exists $\delta>0$, possibly depending on $h$, such that:
$\sigma\left(  Q_{\theta_{1},\theta_{2}}^{h}(\mathcal{V})\right)
=\mathbb{R}_{+}$ for all $\left(  \theta_{1},\theta_{2}\right)  $ s.t.
$\theta_{j=1,2}\in\mathcal{B}_{\delta}\left(  0\right)  $.
\end{proposition}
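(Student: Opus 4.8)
The plan is to read all three claims off the Krein-type resolvent formula (\ref{krein_1}), whose only dependence on $(\theta_{1},\theta_{2})$ sits in the matrices $A^{h}_{\theta_{1},\theta_{2}}$ and $B_{\theta_{1},\theta_{2}}$: their entries are polynomials in $e^{\theta_{1}/2},e^{\theta_{2}/2}$, hence entire in $(\theta_{1},\theta_{2})$, while $q(z,\mathcal V,h)$, $\gamma_{z,h}(\mathcal V)$ and $(Q^{h}_{0,0}(\mathcal V)-z)^{-1}$ depend only on $z\in\mathbb C\setminus\mathbb R$. For the Kato-analyticity I would note that at $\theta_{1}=\theta_{2}=0$ one has $B_{0,0}=0$ and $A^{h}_{0,0}=2h^{-2}\,\mathrm{Id}_{\mathbb C^{4}}$, so $B_{\theta_{1},\theta_{2}}\,q(z,\mathcal V,h)-A^{h}_{\theta_{1},\theta_{2}}$ equals $-2h^{-2}\,\mathrm{Id}_{\mathbb C^{4}}$ at the origin and stays boundedly invertible for $(\theta_{1},\theta_{2})$ near $(0,0)$, for every fixed $z\in\mathbb C\setminus\mathbb R$. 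Inserting this into (\ref{krein_1}) exhibits $(\theta_{1},\theta_{2})\mapsto(Q^{h}_{\theta_{1},\theta_{2}}(\mathcal V)-z)^{-1}$ as an $\mathcal L(L^{2}(\mathbb R))$-valued holomorphic map near the origin; as the domains of $Q^{h}_{\theta_{1},\theta_{2}}(\mathcal V)$ vary with $\theta$, this is precisely analyticity in the sense of Kato understood in the generalized (resolvent) sense. The same reasoning works around any $(\theta_{1}^{0},\theta_{2}^{0})$ once $z\mapsto\det(B_{\theta_{1},\theta_{2}}q(z,\mathcal V,h)-A^{h}_{\theta_{1},\theta_{2}})$ is known to be a non-trivial holomorphic function on $\mathbb C\setminus\mathbb R$, a point established in \cite{Man1}.

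For the essential spectrum, formula (\ref{krein_1}) shows that $(Q^{h}_{\theta_{1},\theta_{2}}(\mathcal V)-z)^{-1}-(Q^{h}_{0,0}(\mathcal V)-z)^{-1}$ has rank $\le 4$, hence is compact; by stability of the Fredholm property under compact resolvent perturbations, $Q^{h}_{\theta_{1},\theta_{2}}(\mathcal V)-\lambda$ and $Q^{h}_{0,0}(\mathcal V)-\lambda$ are Fredholm for the same $\lambda$ in the unbounded component of the complement of $\sigma_{\mathrm{ess}}(Q^{h}_{0,0}(\mathcal V))$, so $\sigma_{\mathrm{ess}}(Q^{h}_{\theta_{1},\theta_{2}}(\mathcal V))=\sigma_{\mathrm{ess}}(Q^{h}_{0,0}(\mathcal V))$ and, in addition, the part of $\sigma(Q^{h}_{\theta_{1},\theta_{2}}(\mathcal V))$ lying outside $\sigma_{\mathrm{ess}}$ consists of isolated eigenvalues of finite algebraic multiplicity. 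Since $\mathcal V\in L^{2}(\mathbb R)$ is compactly supported it is bounded relative to $-h^{2}\Delta$ with relative bound zero (so $Q^{h}_{0,0}(\mathcal V)=-h^{2}\Delta+\mathcal V$ is self-adjoint) and $\mathcal V(-h^{2}\Delta+1)^{-1}$ is Hilbert--Schmidt, its kernel being $\mathcal V(x)$ times the bounded integral kernel of $(-h^{2}\Delta+1)^{-1}$; thus $\mathcal V$ is relatively compact with respect to $-h^{2}\Delta$ and $\sigma_{\mathrm{ess}}(Q^{h}_{0,0}(\mathcal V))=\sigma_{\mathrm{ess}}(-h^{2}\Delta)=[0,\infty)=\mathbb R_{+}$, which is the second claim.

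Now assume (\ref{V_pos}). At the origin $Q^{h}_{0,0}(\mathcal V)=-h^{2}\Delta+\mathcal V$ is self-adjoint and non-negative, so $\sigma(Q^{h}_{0,0}(\mathcal V))\subseteq[0,\infty)$, which together with $\sigma_{\mathrm{ess}}=[0,\infty)$ forces $\sigma(Q^{h}_{0,0}(\mathcal V))=\mathbb R_{+}$. For $(\theta_{1},\theta_{2})\neq(0,0)$ it only remains to rule out eigenvalues in $\mathbb C\setminus[0,\infty)$. If $(Q^{h}_{\theta_{1},\theta_{2}}(\mathcal V)-z)u=0$ with $u\neq0$, $z\notin[0,\infty)$, then $u$ solves the stationary equation on each of $(-\infty,a),(a,b),(b,\infty)$ and, since $\operatorname{supp}\mathcal V=[a,b]$, $u$ is a decaying exponential on each exterior half-line with rate $\kappa=\sqrt{-z}/h$, $\operatorname{Re}\kappa>0$. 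Integrating by parts on the three intervals and inserting (\ref{B_C_1}) gives
\[
0=h^{2}\|u'\|_{L^{2}}^{2}+\langle u,\mathcal V u\rangle_{L^{2}}-z\,\|u\|_{L^{2}}^{2}+h^{2}\Bigl[(e^{-\frac{\bar\theta_{1}+\theta_{2}}{2}}-1)\,\overline{u(a^{-})}\,u'(a^{-})+(e^{\frac{\bar\theta_{1}+\theta_{2}}{2}}-1)\,\overline{u(b^{-})}\,u'(b^{-})\Bigr],
\]
the bracket vanishing exactly on the self-adjoint locus $\theta_{2}=-\bar\theta_{1}$ of (\ref{Selafadj_cond}). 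On $(-\infty,a)$ one has $u'(a^{-})=\kappa\,u(a^{-})$ and $|u(a^{-})|^{2}=2\operatorname{Re}\kappa\,\|u\|_{L^{2}(-\infty,a)}^{2}$, hence $|\overline{u(a^{-})}\,u'(a^{-})|=2|\kappa|\operatorname{Re}\kappa\,\|u\|_{L^{2}(-\infty,a)}^{2}$ with $h^{2}|\kappa|\operatorname{Re}\kappa=\sqrt{|z|(|z|-\operatorname{Re}z)/2}$; transporting the interior traces at $b^{-}$ (and, symmetrically, the roles at $a$) to the exterior via (\ref{B_C_1}), the boundary bracket is bounded by $C_{0}(|\theta_{1}|+|\theta_{2}|)\sqrt{|z|\,(|z|-\operatorname{Re}z)}\,\|u\|_{L^{2}}^{2}$ for a universal $C_{0}$. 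Taking real and imaginary parts of the identity and using $(|z|-\operatorname{Re}z)(|z|+\operatorname{Re}z)=(\operatorname{Im}z)^{2}$ then yields $|\operatorname{Im}z|<C_{0}(|\theta_{1}|+|\theta_{2}|)|\operatorname{Im}z|$ when $\operatorname{Re}z>0$, and $|z|^{2}\le4C_{0}^{2}(|\theta_{1}|+|\theta_{2}|)^{2}|z|^{2}$ when $\operatorname{Re}z\le0$ --- both impossible once $|\theta_{1}|+|\theta_{2}|<\delta:=1/(2C_{0})$. Hence $\sigma(Q^{h}_{\theta_{1},\theta_{2}}(\mathcal V))=\mathbb R_{+}$ for $\theta_{j}\in\mathcal B_{\delta}(0)$.

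I expect the main obstacle to be precisely this last step: bounding the boundary terms produced by the non-self-adjoint interface conditions \emph{uniformly in $z$} over the whole punctured region $\mathbb C\setminus\mathbb R_{+}$, where the Weyl function $q(z,\mathcal V,h)$ is unbounded (both as $z\to0$ and as $|z|\to\infty$), so that a naive Neumann-series argument based on (\ref{krein_1}) fails. What makes the estimate close is the observation that an eigenfunction is an honest decaying exponential outside $[a,b]$, so that the traces entering the boundary bracket are controlled by the exterior $L^{2}$-mass of $u$; alternatively, for a fixed $h$ one may use the unitary dilation $u\mapsto h^{-1/2}u(\cdot/h)$ to reduce everything to the case $h=1$ treated in \cite{Man1}, which is why the statement only records a $\delta$ possibly depending on $h$.
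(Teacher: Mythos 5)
Your proposal is correct, and it takes a genuinely different route from the paper. The paper does not prove Proposition \ref{Proposition_spectrum} directly: it simply observes that $Q^{h}_{\theta_{1},\theta_{2}}(\mathcal V)$ is unitarily equivalent, via the $h$-dilation $u\mapsto h^{-1/2}u(\cdot/h)$, to $Q^{1}_{\theta_{1},\theta_{2}}(\mathcal V)$ treated in \cite{Man1}, and then cites that reference for the spectral profile. (You mention this possibility in your last sentence as an alternative; for the paper it is the whole proof.) Your argument is instead self-contained: the first two claims are read off the Krein-type formula (\ref{krein_1}) as you say — analyticity of the finite-rank term in $(\theta_{1},\theta_{2})$ near the invertible value $-2h^{-2}\mathrm{Id}$ gives Kato (resolvent) analyticity, and rank-$4$ (hence compact) resolvent differences preserve the Wolf essential spectrum, so it suffices to note $\mathcal V$ is $(-h^{2}\Delta)$-relatively compact. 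The third claim you prove by an energy identity: integrating $\langle u,(Q^{h}_{\theta}-z)u\rangle=0$ by parts over the three components, the boundary bracket collects coefficients $(e^{\pm(\bar\theta_{1}+\theta_{2})/2}-1)$ vanishing exactly on the self-adjoint locus (\ref{Selafadj_cond}); since an $L^{2}$-eigenfunction is a pure decaying exponential with rate $\kappa=\sqrt{-z}/h$ outside $[a,b]$, the exterior traces are controlled explicitly by $h^{2}|\kappa|\operatorname{Re}\kappa=\sqrt{|z|(|z|-\operatorname{Re}z)/2}$ times the exterior $L^{2}$-mass, and the interior traces at $b^{-},a^{+}$ are transported to the exterior by (\ref{B_C_1}) at the cost of a bounded factor for small $\theta$. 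Taking real and imaginary parts and using $(|z|-\operatorname{Re}z)(|z|+\operatorname{Re}z)=(\operatorname{Im}z)^{2}$ indeed rules out eigenvalues off $\mathbb R_{+}$ once $|\theta_{1}|+|\theta_{2}|$ is small. Two remarks on what each approach buys. Your direct argument yields a $\delta$ that is \emph{independent of $h$} (the $h$'s cancel against $\kappa$), whereas the paper only states ``possibly depending on $h$''; this is a genuine, if small, improvement. On the other hand, the direct argument requires the boundary-term estimate to be carried out carefully, whereas the dilation-plus-citation route is essentially free once \cite{Man1} is available — which is why the paper takes it. One small caution: the positivity you actually use on $(a,b)$ is the form inequality $\langle u,\mathcal V u\rangle\ge 0$ from (\ref{V_pos}) and not pointwise nonnegativity of $\mathcal V$; you invoke it implicitly when dropping $h^{2}\|u'\|^{2}+\langle u,\mathcal V u\rangle$ in the real part, and it is worth stating explicitly.
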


\begin{description}
\item[\emph{Notice}] The essential spectrum of $A$ is here defined according
to \cite{Wolf} as $\sigma_{ess}\left(  A\right)  =\mathbb{C}\backslash
\mathcal{F}\left(  A\right)  $, being $\mathcal{F}\left(  A\right)  $ the set
of complex $\lambda\in\mathbb{C}$ s.t. $\left(  A-\lambda\right)  $ is Fredholm.
\end{description}

In order to obtain explicit representations of the operators $\gamma
_{z,h}(\mathcal{V})$ and $q(z,\mathcal{V},h)$ appearing at the r.h.s. of
(\ref{krein_1}), it is necessary to define a particular basis of the defect
spaces $\mathcal{N}_{z,h}$. A possible choice is given in terms of the Green's
function of the operator $\left(  Q_{0,0}^{h}(\mathcal{V})-z\right)  $ and of
their derivatives. Let introduce the functions $\mathcal{G}^{z,h}%
(x,y,\mathcal{V})$ and $\mathcal{H}^{z,h}(x,y,\mathcal{V})$; for
$z\in\mathbb{C}\backslash\sigma\left(  Q_{0,0}^{h}(\mathcal{V})\right)  $, the
maps $x\rightarrow\mathcal{G}^{z,h}(\cdot,y,\mathcal{V})$ and $x\rightarrow
\mathcal{H}^{z,h}(\cdot,y,\mathcal{V})$ fulfill the boundary value problems%
\begin{equation}
\left\{
\begin{array}
[c]{lll}%
\left(  -h^{2}\partial_{x}^{2}+\mathcal{V-}z\right)  \mathcal{G}^{z,h}%
(\cdot,y,\mathcal{V})=0\,, &  & \text{in }\mathbb{R}/\left\{  y\right\}  \,,\\
&  & \\
\mathcal{G}^{z,h}(y^{+},y,\mathcal{V})=\mathcal{G}^{z,h}(y^{-},y,\mathcal{V}%
)\,, &  & h^{2}\left(  \partial_{1}\mathcal{G}^{z}(y^{+},y,\mathcal{V}%
)-\partial_{1}\mathcal{G}^{z}(y^{-},y,\mathcal{V})\right)  =-1\,,
\end{array}
\right.  \label{Green_eq}%
\end{equation}
and%
\begin{equation}
\left\{
\begin{array}
[c]{lll}%
\left(  -h^{2}\partial_{x}^{2}+\mathcal{V-}z\right)  \mathcal{H}^{z,h}%
(\cdot,y,\mathcal{V})=0\,, &  & \text{in }\mathbb{R}/\left\{  y\right\}  \,,\\
&  & \\
h^{2}\left(  \mathcal{H}^{z,h}(y^{+},y,\mathcal{V})-\mathcal{H}^{z,h}%
(y^{-},y,\mathcal{V})\right)  =1\,, &  & \partial_{1}\mathcal{H}^{z}%
(y^{+},y,\mathcal{V})=\partial_{1}\mathcal{H}^{z}(y^{-},y,\mathcal{V})\,,
\end{array}
\right.  \label{D_Green_eq}%
\end{equation}
while the defect space $\mathcal{N}_{z,h}$ writes as%
\begin{equation}
\mathcal{N}_{z,h}=l.c.\left\{  \mathcal{G}^{z,h}(x,b,\mathcal{V}%
)\,,\ \mathcal{H}^{z,h}(x,b,\mathcal{V})\,,\ \mathcal{G}^{z,h}(x,a,\mathcal{V}%
)\,,\ \mathcal{H}^{z,h}(x,a,\mathcal{V})\right\}  \,. \label{defect1}%
\end{equation}
For each $h>0$, the maps $z\rightarrow\mathcal{G}^{z,h}(x,y,\mathcal{V})$,
$z\rightarrow\mathcal{H}^{z,h}(x,y,\mathcal{V})$ are meromorphic in
$\mathbb{C}\backslash\mathbb{R}_{+}$ with a branch cut along the positive real
axis and poles, corresponding to the points in $\sigma_{p}\left(  Q_{0,0}%
^{h}(\mathcal{V})\right)  $, located on the negative real axis. In particular,
these functions continuously extend up to the branch cut, both in the limits:
$z\rightarrow k^{2}\pm i0$, with the only possible exception of the point
$z=0$. In the case of positive defined potentials $z\rightarrow\mathcal{G}%
^{z,h}(x,y,\mathcal{V})$\ and $z\rightarrow\mathcal{H}^{z,h}(x,y,\mathcal{V})$
extend to the whole branch cut (we refer to \cite{Man1} for this point).

The relations (\ref{krein_1}) and (\ref{G_z_teta}) can be made explicit by
using the matrix representation w.r.t. the basis $\left\{  e_{j}\right\}
_{j=1}^{4}$ and (\ref{defect1}). From the definition (\ref{gamma_q_def}), a
direct computation yields%
\begin{equation}
\gamma_{z,h}(\mathcal{V})=%
\begin{pmatrix}
1 &  &  & \\
& -1 &  & \\
&  & 1 & \\
&  &  & -1
\end{pmatrix}
\,,\quad\text{with:}\ \left\{
\begin{array}
[c]{l}%
\gamma_{z,h}(e_{1},\mathcal{V})=\mathcal{G}^{z,h}(\cdot,b,\mathcal{V}%
)\,,\quad\gamma_{z,h}(e_{2},\mathcal{V})=-\mathcal{H}^{z,h}(\cdot
,b,\mathcal{V})\,,\\
\\
\gamma_{z,h}(e_{3},\mathcal{V})=\mathcal{G}^{z,h}(\cdot,a,\mathcal{V}%
)\,,\quad\gamma_{z,h}(e_{4},\mathcal{V})=-\mathcal{H}^{z,h}(\cdot
,a,\mathcal{V})\,.
\end{array}
\right.  \label{gamma_z}%
\end{equation}
The matrix coefficients of $q(z,\mathcal{V},h)$ depend on the boundary values
of the functions $\gamma_{z,h}(e_{i},\mathcal{V})$, $i=1...4$, as
$x\rightarrow b^{\pm}$ or $x\rightarrow a^{\pm}$. According to the
representations (\ref{G_z_h})-(\ref{H_z_h}), we have: $\partial_{1}%
\mathcal{G}^{z,h}(y^{\pm},y)=\mathcal{H}^{z,h}(y^{\mp},y)$. Then, taking into
account the boundary conditions in (\ref{Green_eq})-(\ref{D_Green_eq}), a
direct computation yields%
\begin{align}
&  q(z,\mathcal{V},h)\label{q_z}\\
&  =%
\begin{pmatrix}
\mathcal{G}^{z,h}(b,b,\mathcal{V})\medskip & -\left(  \mathcal{H}^{z,h}%
(b^{-},b,\mathcal{V})+\frac{1}{2h^{2}}\right)  & \mathcal{G}^{z,h}%
(b,a,\mathcal{V}) & -\mathcal{H}^{z,h}(b,a,\mathcal{V})\\
\mathcal{H}^{z,h}(b^{-},b,\mathcal{V})+\frac{1}{2h^{2}} & -\partial
_{1}\mathcal{H}^{z,h}(b,b,\mathcal{V}) & \mathcal{H}^{z,h}(a,b,\mathcal{V}) &
-\partial_{1}\mathcal{H}^{z,h}(b,a,\mathcal{V})\\
\mathcal{G}^{z,h}(a,b,\mathcal{V})\medskip & -\mathcal{H}^{z,h}%
(a,b,\mathcal{V}) & \mathcal{G}^{z,h}(a,a,\mathcal{V}) & -\left(
\mathcal{H}^{z,h}(a^{+},a,\mathcal{V})-\frac{1}{2h^{2}}\right) \\
\mathcal{H}^{z,h}(b,a,\mathcal{V}) & -\partial_{1}\mathcal{H}^{z,h}%
(a,b,\mathcal{V}) & \mathcal{H}^{z,h}(a^{+},a,\mathcal{V})-\frac{1}{2h^{2}} &
-\partial_{1}\mathcal{H}^{z,h}(a,a,\mathcal{V})
\end{pmatrix}
\,.\nonumber
\end{align}
As it follows by rephrasing the results of \cite{Man1} in our framework, there
exists $\delta>0$ small enough, possibly depending on $h$, such that the
matrices $\left(  B_{\theta_{1},\theta_{2}}\,q(z,\mathcal{V},h)-A_{\theta
_{1},\theta_{2}}^{h}\right)  $ are invertible for all $z\in\mathbb{C}%
/\mathbb{R}_{+}$ and $\theta_{j=1,2}\in\mathcal{B}_{\delta}\left(  0\right)
$, provided that (\ref{V}) and (\ref{V_pos}) hold. Under this assumption, the
coefficients of the inverse matrix are holomorphic w.r.t. $\left(
z,\theta_{1},\theta_{2}\right)  \in\mathbb{C}/\mathbb{R}_{+}\times
\mathcal{B}_{\delta}^{2}\left(  0\right)  $ and have continuous extensions to
the branch cut both in the limits $z=k^{2}+i\varepsilon$, $\varepsilon
\rightarrow0^{\pm}$.

The generalized eigenfunctions of our model, next denoted with $\psi
_{\theta_{1},\theta_{2}}^{h}(\cdot,k,\mathcal{V})$, solve of the boundary
value problem%
\begin{equation}
\left\{
\begin{array}
[c]{l}%
\left(  -h^{2}\partial_{x}^{2}+\mathcal{V}\right)  u=k^{2}u\,,\qquad
x\in\mathbb{R}\backslash\left\{  a,b\right\}  \,,\ k\in\mathbb{R}\,,\\
\\
A_{\theta_{1},\theta_{2}}^{h}\Gamma_{0}^{h}u=B_{\theta_{1},\theta_{2}}%
\Gamma_{1}u\,,
\end{array}
\right.  \label{gen_eigenfun_eq}%
\end{equation}
and fulfill the exterior conditions%
\begin{equation}
\psi_{\theta_{1},\theta_{2}}^{h}(x,k,\mathcal{V})\left\vert
_{\substack{x<a\\k>0}}\right.  =e^{i\frac{k}{h}x}+R^{h}(k,\theta_{1}%
,\theta_{2})e^{-i\frac{k}{h}x}\,,\quad\psi_{\theta_{1},\theta_{2}}%
^{h}(x,k,\mathcal{V})\left\vert _{\substack{x>b\\k>0}}\right.  =T^{h}%
(k,\theta_{1},\theta_{2})e^{i\frac{k}{h}x}\,, \label{gen_eigenfun_ext1}%
\end{equation}%
\begin{equation}
\psi_{\theta_{1},\theta_{2}}^{h}(x,k,\mathcal{V})\left\vert
_{\substack{x<a\\k<0}}\right.  =T^{h}(k,\theta_{1},\theta_{2})e^{i\frac{k}%
{h}x}\,,\quad\psi_{\theta_{1},\theta_{2}}^{h}(x,k,\mathcal{V})\left\vert
_{\substack{x>b\\k<0}}\right.  =e^{i\frac{k}{h}x}+R^{h}(k,\theta_{1}%
,\theta_{2})e^{-i\frac{k}{h}x}\,, \label{gen_eigenfun_ext2}%
\end{equation}
describing an incoming wave function of momentum $k$ with reflection and
transmission coefficients.$R^{h}$ and $T^{h}$. In the case $\left(  \theta
_{1},\theta_{2}\right)  =\left(  0,0\right)  $, $\psi_{0,0}^{h}(\cdot
,k,\mathcal{V})$ are the corresponding generalized eigenfunction of the
selfadjoint model $Q_{0,0}^{h}\left(  \mathcal{V}\right)  $. In the case of
defined positive potentials, an approach similar to the one leading to the
Krein-like resolvent formula (\ref{krein_1}) allows to obtain an expansion for
the difference: $\left.  \psi_{\theta_{1},\theta_{2}}^{h}(x,k,\mathcal{V}%
)-\psi_{0,0}^{h}(x,k,\mathcal{V})\right.  $ for $\left.  \left(  \theta
_{1},\theta_{2}\right)  \rightarrow\left(  0,0\right)  \right.  $. To this
aim, we need an explicit expression of the finite rank terms, appearing at the
r.h.s. of (\ref{krein_1}), in the limits where $z$ approaches the branch cut.
Let $G^{k,h}$ and $H^{k,h}$ be defined by
\begin{equation}%
\begin{array}
[c]{ccc}%
G^{\pm\left\vert k\right\vert ,h}\left(  \cdot,y,\mathcal{V}^{h}\right)
=\lim_{z\rightarrow k^{2}\pm i0}\mathcal{G}^{z,h}\left(  \cdot,y,\mathcal{V}%
^{h}\right)  \,, &  & H^{\pm\left\vert k\right\vert ,h}\left(  \cdot
,y,\mathcal{V}^{h}\right)  =\lim_{z\rightarrow k^{2}\pm i0}\mathcal{H}%
^{z,h}\left(  \cdot,y,\mathcal{V}^{h}\right)  \,,
\end{array}
\label{GH_k}%
\end{equation}
and denote with $g_{k,h}$, $\mathcal{M}^{h}$ the corresponding limits of
$\gamma_{z,h}$ and $\left(  B_{\theta_{1},\theta_{2}}\,q(z,\mathcal{V}%
,h)-A_{\theta_{1},\theta_{2}}^{h}\right)  $, i.e. we set: $g_{\pm\left\vert
k\right\vert ,h}(e_{j},\mathcal{V})=\lim_{z\rightarrow k^{2}\pm i0}%
\gamma_{z,h}(e_{i},\mathcal{V})$, and%
\begin{equation}
\mathcal{M}^{h}\left(  \pm\left\vert k\right\vert ,\theta_{1},\theta
_{2},\mathcal{V}\right)  =\lim_{z\rightarrow k^{2}\pm i0}\left(  B_{\theta
_{1},\theta_{2}}\,q(z,\mathcal{V},h)-A_{\theta_{1},\theta_{2}}^{h}\right)  \,.
\label{krein_coeff_1}%
\end{equation}
According to the previous remarks, $g_{k,h}(e_{j},\mathcal{V})$ and
$\mathcal{M}^{h}\left(  k,\theta_{1},\theta_{2},\mathcal{V}\right)  $ are well
defined and continuous w.r.t. $k\in\mathbb{R}$, with the only possible
exception of the origin, for potentials defined as in (\ref{V}). Next, we
denote with $S^{h}\left(  \theta_{1},\theta_{2},\mathcal{V}\right)  $ the set
of the singular points%
\begin{equation}
S^{h}\left(  \theta_{1},\theta_{2},\mathcal{V}\right)  =\left\{
k\in\mathbb{R\,}\left\vert \ \det\mathcal{M}^{h}\left(  k,\theta_{1}%
,\theta_{2},\mathcal{V}\right)  \neq0\right.  \right\}  \,.
\label{Singular_Set_h_theta}%
\end{equation}

\begin{proposition}
\label{Proposition_gen_eigenfun}Let $\mathcal{V}$ fulfill (\ref{V}) and
$k\in\mathbb{R}^{\ast}\backslash S^{h}\left(  \theta_{1},\theta_{2}%
,\mathcal{V}\right)  $. The relation%
\begin{align}
&  \psi_{\theta_{1},\theta_{2}}^{h}(\cdot,k,\mathcal{V}%
)\label{gen_eigenfun_Krein_h}\\
&  =\left\{
\begin{array}
[c]{lll}%
\psi_{0,0}^{h}(\cdot,k,\mathcal{V})-\sum_{i,j=1}^{4}\left[  \left(
\mathcal{M}^{h}\left(  k,\theta_{1},\theta_{2},\mathcal{V}\right)  \right)
^{-1}B_{\theta_{1},\theta_{2}}\right]  _{ij}\left[  \Gamma_{1}\psi_{0,0}%
^{h}(\cdot,k,\mathcal{V})\right]  _{j}\,g_{k,h}(e_{i},\mathcal{V})\,, &  &
\text{for }k>0\,,\\
&  & \\
\psi_{0,0}^{h}(\cdot,k,\mathcal{V})-\sum_{i,j=1}^{4}\left[  \left(
\mathcal{M}^{h}\left(  -k,\theta_{1},\theta_{2},\mathcal{V}\right)  \right)
^{-1}B_{\theta_{1},\theta_{2}}\right]  _{ij}\left[  \Gamma_{1}\psi_{0,0}%
^{h}(\cdot,k,\mathcal{V})\right]  _{j}\,g_{-k,h}(e_{i},\mathcal{V})\,, &  &
\text{for }k<0\,,
\end{array}
\right. \nonumber
\end{align}
holds for any fixed $h>0$ and $\left(  \theta_{1},\theta_{2}\right)
\in\mathbb{C}^{2}$.
\end{proposition}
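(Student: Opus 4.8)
The plan is to verify that the function $\tilde{\psi}$ defined by the right-hand side of (\ref{gen_eigenfun_Krein_h}) solves the boundary value problem (\ref{gen_eigenfun_eq}) together with the exterior conditions (\ref{gen_eigenfun_ext1})--(\ref{gen_eigenfun_ext2}), and then to invoke the uniqueness of such a solution under the hypothesis $k\in\mathbb{R}^{\ast}\backslash S^{h}(\theta_{1},\theta_{2},\mathcal{V})$. The form of $\tilde{\psi}$ is dictated by the natural ansatz $\psi_{\theta_{1},\theta_{2}}^{h}=\psi_{0,0}^{h}+\sum_{i}c_{i}\,g_{\pm|k|,h}(e_{i},\mathcal{V})$: any solution of (\ref{gen_eigenfun_eq}) carrying the prescribed incoming wave differs from $\psi_{0,0}^{h}$ by a solution of the homogeneous equation off $\{a,b\}$ which is outgoing at infinity, hence by an element of the four-dimensional space spanned by the limiting defect functions $g_{\pm|k|,h}(e_{i},\mathcal{V})$ of (\ref{GH_k}); imposing the interface conditions then reduces to a $4\times 4$ linear system for the $c_{i}$'s. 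This is the continuous-spectrum analogue of the computation behind the Krein-like resolvent identity (\ref{krein_1}); alternatively one could derive (\ref{gen_eigenfun_Krein_h}) by letting $z\to k^{2}\pm i0$ in (\ref{krein_1}) tested against localised functions, using that $\mathcal{G}^{z,h},\mathcal{H}^{z,h}$ extend continuously up to the branch cut for potentials satisfying (\ref{V}) (see \cite{Man1}).

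First I would check that $\tilde{\psi}$ solves $(-h^{2}\partial_{x}^{2}+\mathcal{V})u=k^{2}u$ on $\mathbb{R}\backslash\{a,b\}$ with the right behaviour at infinity. The equation holds since $\psi_{0,0}^{h}(\cdot,k,\mathcal{V})$ solves it and each $g_{\pm|k|,h}(e_{i},\mathcal{V})$, being the limit along $z\to k^{2}\pm i0$ of $\gamma_{z,h}(e_{i},\mathcal{V})\in\mathcal{N}_{z,h}=\ker(Q^{h}(\mathcal{V})-z)$, solves the homogeneous equation off $\{a,b\}$ by continuity of the kernels up to the branch cut. For the exterior form, the functions $G^{\pm|k|,h}(\cdot,a,\mathcal{V})$, $G^{\pm|k|,h}(\cdot,b,\mathcal{V})$, $H^{\pm|k|,h}(\cdot,a,\mathcal{V})$, $H^{\pm|k|,h}(\cdot,b,\mathcal{V})$ are purely \emph{outgoing} in $x<a$ and $x>b$ (this is precisely what distinguishes the $+i0$ and $-i0$ limits) --- e.g.\ for $k>0$ they are proportional to $e^{-i\frac{k}{h}x}$ for $x<a$ and to $e^{i\frac{k}{h}x}$ for $x>b$ --- so the correction in (\ref{gen_eigenfun_Krein_h}) adds no incoming plane wave and $\tilde{\psi}$ inherits the incoming part of $\psi_{0,0}^{h}$, only its reflection and transmission coefficients being altered; thus $\tilde{\psi}$ has the form (\ref{gen_eigenfun_ext1})--(\ref{gen_eigenfun_ext2}) for some $R^{h},T^{h}$.

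The substantial step is the interface condition $A_{\theta_{1},\theta_{2}}^{h}\Gamma_{0}^{h}\tilde{\psi}=B_{\theta_{1},\theta_{2}}\Gamma_{1}\tilde{\psi}$. Since $\Gamma_{0}^{h},\Gamma_{1}$ depend only on the Cauchy data at $a,b$, where the kernels converge locally uniformly together with their $x$-derivative, the limits $z\to k^{2}\pm i0$ commute with $\Gamma_{0}^{h},\Gamma_{1}$, giving (up to the sign bookkeeping of (\ref{gamma_z})) $\Gamma_{0}^{h}g_{\pm|k|,h}(e_{i},\mathcal{V})=e_{i}$ and $\Gamma_{1}g_{\pm|k|,h}(e_{i},\mathcal{V})=q(k^{2}\pm i0,\mathcal{V},h)\,e_{i}$ from (\ref{gamma_q_def}) and (\ref{q_z}), while $\mathcal{M}^{h}(\pm|k|,\theta_{1},\theta_{2},\mathcal{V})=B_{\theta_{1},\theta_{2}}q(k^{2}\pm i0,\mathcal{V},h)-A_{\theta_{1},\theta_{2}}^{h}$ by (\ref{krein_coeff_1}); moreover $\Gamma_{0}^{h}\psi_{0,0}^{h}=0$ since $A_{0,0}^{h}=\frac{2}{h^{2}}\mathrm{Id}$ and $B_{0,0}=0$ by (\ref{AB_teta1,2})--(\ref{ab_teta1,2}), so $\psi_{0,0}^{h}$ is $C^{1}$ across $a,b$. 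Applying $\Gamma_{0}^{h}$ and $\Gamma_{1}$ to (\ref{gen_eigenfun_Krein_h}) (for $k>0$) then yields $\Gamma_{0}^{h}\tilde{\psi}=-(\mathcal{M}^{h})^{-1}B_{\theta_{1},\theta_{2}}\Gamma_{1}\psi_{0,0}^{h}$ and $\Gamma_{1}\tilde{\psi}=\bigl(\mathrm{Id}-q(k^{2}+i0,\mathcal{V},h)(\mathcal{M}^{h})^{-1}B_{\theta_{1},\theta_{2}}\bigr)\Gamma_{1}\psi_{0,0}^{h}$, whence
\[
B_{\theta_{1},\theta_{2}}\Gamma_{1}\tilde{\psi}-A_{\theta_{1},\theta_{2}}^{h}\Gamma_{0}^{h}\tilde{\psi}=\Bigl(B_{\theta_{1},\theta_{2}}-\bigl(B_{\theta_{1},\theta_{2}}q(k^{2}+i0,\mathcal{V},h)-A_{\theta_{1},\theta_{2}}^{h}\bigr)(\mathcal{M}^{h})^{-1}B_{\theta_{1},\theta_{2}}\Bigr)\Gamma_{1}\psi_{0,0}^{h}=0 ,
\]
as the bracket equals $B_{\theta_{1},\theta_{2}}-\mathcal{M}^{h}(\mathcal{M}^{h})^{-1}B_{\theta_{1},\theta_{2}}=0$; this uses the invertibility of $\mathcal{M}^{h}(\pm|k|,\theta_{1},\theta_{2},\mathcal{V})$ guaranteed by $k\in\mathbb{R}^{\ast}\backslash S^{h}(\theta_{1},\theta_{2},\mathcal{V})$, and the case $k<0$ is identical with $-k$ in place of $k$. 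Combined with uniqueness of the scattering solution for such $k$ --- the difference of two solutions being an outgoing solution of the homogeneous problem compatible with the interface conditions, hence in the linear span of $\{g_{\pm|k|,h}(e_{i},\mathcal{V})\}_{i=1}^{4}$ with coefficient vector annihilated by the invertible $\mathcal{M}^{h}$ --- this identifies $\tilde{\psi}$ with $\psi_{\theta_{1},\theta_{2}}^{h}(\cdot,k,\mathcal{V})$.

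I expect the one genuinely delicate ingredient to be the joint justification of the interchange of the boundary traces with the limits $z\to k^{2}\pm i0$ and of the outgoing character of the limiting kernels $G^{\pm|k|,h},H^{\pm|k|,h}$; both rely on the continuity of $\mathcal{G}^{z,h},\mathcal{H}^{z,h}$ up to the branch cut and on the absence of positive eigenvalues of $Q_{0,0}^{h}(\mathcal{V})$ under (\ref{V}) (and (\ref{V_pos}) near $z=0$), all established in \cite{Man1}. The remaining content is the linear algebra of the $4\times 4$ matrices $A_{\theta_{1},\theta_{2}}^{h}$, $B_{\theta_{1},\theta_{2}}$, $q$ and $\mathcal{M}^{h}$.
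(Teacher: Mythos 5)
Your proof is correct and fills in a real gap: the paper itself does not argue the identity but simply records that the right-hand side is well defined and refers the reader to Proposition~2.8 of \cite{Man1} for the (formally identical, $h=1$) case, so you are effectively reconstructing the cited argument. The strategy you use --- decompose $\psi_{\theta_1,\theta_2}^h-\psi_{0,0}^h$ as an outgoing element of the $4$-dimensional limiting defect space, apply $\Gamma_0^h$ and $\Gamma_1$ using $\Gamma_0^h g_{\pldots}(e_i)=e_i$, $\Gamma_1 g_{\pldots}(e_i)=q(k^2\pm i0)e_i$, and reduce the interface condition (\ref{bc_teta1_teta_2}) to the cancellation $B-\mathcal{M}^h(\mathcal{M}^h)^{-1}B=0$, then invoke uniqueness from the invertibility of $\mathcal{M}^h$ on $\mathbb{R}^*\setminus S^h$ --- is precisely the continuous-spectrum analogue of the Krein computation behind (\ref{krein_1}), and the algebra you display is exactly right. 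The only bookkeeping worth flagging is your reading of the (evidently mistyped) definition (\ref{Singular_Set_h_theta}): you correctly interpret $S^h$ as the set where $\det\mathcal{M}^h=0$, so that the hypothesis gives invertibility; and you correctly observe that for both $k>0$ and $k<0$ the formula invokes the $+i0$ boundary value $g_{+|k|,h}$, which is what makes the correction purely outgoing and hence leaves the incoming datum of $\psi_{0,0}^h$ untouched. The ``delicate point'' you isolate --- commuting the boundary traces with the limit $z\to k^2\pm i0$ --- is indeed the part that relies on the continuity of $\mathcal{G}^{z,h},\mathcal{H}^{z,h}$ up to the cut established in \cite{Man1}, which the paper implicitly assumes by citing that work; your statement of it is accurate.
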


\begin{proof}
Since the limits (\ref{krein_coeff_1}) and the traces $\Gamma_{1}\psi
_{0,0}^{h}(\cdot,k,\mathcal{V})$ exists in $\mathbb{R}^{\ast}$ for potentials
fulfilling (\ref{V}), the r.h.s. of (\ref{gen_eigenfun_Krein_h}) is well
defined for $k\in\mathbb{R}^{\ast}\backslash S^{h}\left(  \theta_{1}%
,\theta_{2},\mathcal{V}\right)  $. With this assumption, the proof of the
equality coincides with the one given in \cite{Man1} (Proposition 2.8).
\end{proof}

If, in addition, (\ref{V_pos}) holds, it is possible to extend the above
representation to the whole real line provided that $\theta_{j=1,2}%
\in\mathcal{B}_{\delta}\left(  0\right)  $ with $\delta>0$ small enough
depending on $h$ (see \cite{Man1}). Under this condition, the above formula
can be used to construct the stationary wave operators for the pair $\left\{
Q_{0,0}^{h},Q_{\theta_{1},\theta_{2}}^{h}\right\}  $ and allows to obtain an
expansion of the modified quantum propagator for small values of the
parameters $\theta_{j=1,2}$ at any fixed value of $h>0$. A generalization of
this approach as $h\rightarrow0$ would require a control of the coefficients
at the r.h.s. of (\ref{gen_eigenfun_Krein_h}) when both $\theta_{j=1,2}$ and
$h$ are small. Then accurate estimates for the boundary values of
$g_{k,h}(e_{j},\mathcal{V})$ (occurring in the definition of the matrix
$\mathcal{M}^{h}\left(  k,\theta_{1},\theta_{2},\mathcal{V}\right)  $) and
$\psi_{0,0}^{h}(\cdot,k,\mathcal{V})$ are needed. In the next section we
consider this problem only for a finite energy range when the potential
describes quantum wells in a semiclassical island.

\section{\label{Section_Resonances}Modelling resonant heterostructures}

We start the analysis of the parameter dependent quantities as $h\rightarrow
0$. Our aim is to obtain small-$h$ estimates for the boundary values of the
functions $G^{k,h}$, $H^{k,h}$, $\partial_{1}H^{k,h}$ and $\psi_{0,0}^{h}$
which are involved in the definition of the coefficients at the r.h.s. of
(\ref{gen_eigenfun_Krein_h}). In what follows, after introducing the
double-scale potentials related to regime of quantum wells in a semiclassical
island, we fix our spectral assumptions and recall the definition and the
asymptotic characterization of the \emph{shape resonances} (localization and
Fermi's Golden rule). Choosing a potential with quantum wells naturally leads
to a model having a finite number of resonances accumulating in a
neighbourhood of an asymptotic resonant energy $\lambda^{0}$ as $h\rightarrow
0$. In this situation, the Green's functions and the generalized
eigenfunctions corresponding to energies close to $\lambda^{0}$ are expected
to grow exponentially w.r.t. $h$ somewhere in the potential structure (see
e.g. in (\cite{FMN3})), and their traces may be large depending on the
imaginary part of the resonances. A resolvent formula, arising from applying
the Grushin method to our spectral problem, allows to obtain accurate
resolvent estimates close to the asymptotic energy. Finally, trace estimates
for the relevant functions appearing in the generalized eigenfunction formula
are provided.

\subsection{Shape resonances in the regime of quantum wells}

We consider the family of 1D Schr\"{o}dinger operators defined by%
\begin{equation}
Q_{0,0}^{h}(\mathcal{V})=-h^{2}\Delta+\mathcal{V}\,,\qquad D\left(
Q_{0,0}^{h}(\mathcal{V})\right)  =H^{2}\left(  \mathbb{R}\right)  \,.
\label{Q_0_h}%
\end{equation}
These operators depend on the small positive parameter $h$ and on the
potential $\mathcal{V}$, a real-valued $L^{\infty}$-function compactly
supported on the bounded interval $\left[  a,b\right]  $. In particular, we
focus on the case where $\mathcal{V=V}^{h}$ is assigned as the superposition
of a barrier $V$ and an $h$-dependent part $W^{h}$ such that%
\begin{equation}
\mathcal{V}^{h}=V+W^{h}\,,\quad\text{supp }V=\left[  a,b\right]
\,,\quad\text{supp }W^{h}=\left\{  x\in\left(  a,b\right)  \,,\ d\left(
x,U\right)  \leq h\right\}  \,, \label{V_h}%
\end{equation}
with $d\left(  \cdot,\cdot\right)  $ denoting the euclidean distance in
$\mathbb{R}$. Here $U$ is a strict subset of $\left(  a,b\right)  $, while the
constraints%
\begin{equation}
1_{\left[  a,b\right]  }V\geq c\,,\quad\sup\left\{  \left\Vert V\right\Vert
_{L^{\infty}\left(  \mathbb{R}\right)  },\left\Vert W^{h}\right\Vert
_{L^{\infty}\left(  \mathbb{R}\right)  }\right\}  \leq\frac{1}{c}%
\,,\quad\text{supp }W^{h}\subset\subset\left(  a,b\right)  \,, \label{V_h1}%
\end{equation}
are assumed to hold for some $c>0$, uniformly w.r.t. $h\in\left(
0,h_{0}\right]  $. Hamiltonians of the type $Q_{0,0}^{h}(\mathcal{V}^{h})$
have been adopted as models for the quantum transport through resonant
heterostructures (e.g. in \cite{BNP1}, \cite{BNP2}). In these models, the
parameter $h$ fixes the 'quantum scale' of the problem and the scaling
introduced in (\ref{V_h}) is used to describe the regime of \emph{quantum
wells} in a semiclassical island, which consists in taking $W^{h}$ as a sum of
attractive potentials supported on regions of size $h$. It is known that this
particular potential's shape prevents the accumulation of the possible
eigenvalues of the corresponding 'Dirichlet operator', $Q_{D}^{h}%
(\mathcal{V}^{h})$,%
\begin{equation}
Q_{D}^{h}(\mathcal{V}^{h})=-h^{2}\partial_{x}^{2}+\mathcal{V}^{h}\,,\qquad
D\left(  Q_{0,0}^{h}(\mathcal{V}^{h})\right)  =H^{2}\left(  \left[
a,b\right]  \right)  \cap H_{0}^{1}\left(  \left[  a,b\right]  \right)  \,,
\label{Q_D_h}%
\end{equation}
in the energy region $\left(  c,\inf_{\left[  a,b\right]  }V\right)  $ when
the limit $h\rightarrow0$ is considered (e.g. in \cite{BNP1}). Thus, using
quantum wells allows to realize models fulfilling the next spectral conditions
uniformly w.r.t. $h$.

\begin{condition}
\label{condition_1}The functions $\mathcal{V}^{h}=V+W^{h}$ are defined
according to (\ref{V_h})-(\ref{V_h1}), with $U\subset\left(  a,b\right)  $,
and the positive constants $c$ and $h_{0}$ suitably small. We assume that
there exists a real $\lambda^{0}$ and a cluster of eigenvalues $\left\{
\lambda_{j}^{h}\right\}  _{j=1}^{\ell}\subset\sigma\left(  Q_{D}%
^{h}(\mathcal{V}^{h})\right)  $ such that the conditions%
\begin{equation}%
\begin{array}
[c]{lll}%
i) &  & c\leq\lambda^{0}\leq\inf_{\left[  a,b\right]  }V-c\leq\left\Vert
V\right\Vert _{L^{\infty}\left(  \mathbb{R}\right)  }\leq\frac{1}{c}\,,\\
&  & \\
ii) &  & d\left(  \lambda^{0},\sigma\left(  Q_{D}^{h}(\mathcal{V}^{h})\right)
\backslash\left\{  \lambda_{j}^{h}\right\}  _{j=1}^{\ell}\right)  \geq c\,,\\
&  & \\
iii) &  & \max\limits_{1\leq j\leq\ell}\left\vert \lambda_{j}^{h}-\lambda
^{0}\right\vert \leq\frac{c}{h}\,.
\end{array}
\label{H1}%
\end{equation}
hold for all $h\in\left(  0,h_{0}\right]  $.
\end{condition}

Following the approach developed in (\cite{AgCo}, \cite{BaCo}), the resonances
of $Q_{0,0}^{h}(\mathcal{V}^{h})$ identify with the poles of the meromorphic
extension of the map $z\rightarrow\left(  Q_{0,0}^{h}(\mathcal{V})-z\right)
^{-1}$ into the second Riemann sheet ($\operatorname{Im}\sqrt{z}<0$), and are
detected as eigenvalues of a suitable complex deformed operator on
$L^{2}\left(  \mathbb{R}\right)  $. In our particular setting, an exterior
complex dilation, parametrized by $\tau>0$ and mapping $x\rightarrow e^{i\tau
}x$ outside the interval $\left(  a,b\right)  $, leads to an eigenvalue
problem whose solutions $\varphi_{res}$, corresponding to the resonances
$z_{res}$, exhibit the exponential modes%
\begin{equation}
\left\{
\begin{array}
[c]{lll}%
\varphi_{res}(x)=c_{+}e^{i\frac{\left(  z_{res}\right)  ^{\frac{1}{2}}%
e^{i\tau}}{h}\left(  x-b\right)  }\,, &  & x>b\,,\\
&  & \\
\varphi_{res}(x)=c_{-}e^{i\frac{\left(  z_{res}\right)  ^{\frac{1}{2}}%
e^{i\tau}}{h}\left(  a-x\right)  }\,, &  & x<a\,,
\end{array}
\right.
\end{equation}
in the exterior domain and fulfill the interior problem
\begin{equation}
1_{\left(  a,b\right)  }\left(  Q_{0,0}^{h}(\mathcal{V}^{h})-z_{res}\right)
^{-1}1_{\left(  a,b\right)  }\varphi_{res}=0 \label{resonance_eq}%
\end{equation}
Let introduce the notation: $\mathcal{P}_{z}^{h}\left(  \mathcal{V}\right)
=\left(  -h^{2}\Delta+\mathcal{V}\right)  $%
\begin{equation}
D(\mathcal{P}_{z}^{h}\left(  \mathcal{V}\right)  )=\left\{  u\in H^{2}\left(
\left(  a,b\right)  \right)  \,,\ \left[  h\partial_{x}+iz^{\frac{1}{2}%
}\right]  u(a)=0\,,\ \left[  h\partial_{x}-iz^{\frac{1}{2}}\right]
u(b)=0\right\}  \,, \label{P_z_def}%
\end{equation}
where $\left(  z\right)  ^{\frac{1}{2}}$ is determined according to $\arg
z\in\left[  -\frac{\pi}{2},\frac{3}{2}\pi\right)  $. The equation
(\ref{resonance_eq}) rephrases as a non-linear eigenvalue problem for the
operator $\mathcal{P}_{z}^{h}\left(  \mathcal{V}^{h}\right)  $
\begin{equation}
\left(  \mathcal{P}_{z_{res}}^{h}\left(  \mathcal{V}^{h}\right)
-z_{res}\right)  \varphi_{res}=0\,. \label{resonance_eq1}%
\end{equation}
Under the conditions (\ref{H1}), some of the solutions of this problem can be
localized in small neighbourhoods of the corresponding Dirichlet's eigenvalues
$\lambda_{j}^{h}$ as $h\rightarrow0$. These are usually referred to as
\emph{shape resonances} and their semiclassical beahviour for Schr\"{o}dinger
operators of the type (\ref{Q_0_h}) have been investigated in (\cite{HeSj1},
\cite{Hel}) (see also \cite{FMN2} where the case of modified Hamiltonians with
interface conditions in $\left\{  a,b\right\}  $\ is considered). The next
proposition resumes known results (e.g. in \cite{HeSj1}); the proof follows by
using semiclassical exponential estimates for $\left(  \mathcal{P}_{z}%
^{h}\left(  V\right)  -z\right)  ^{-1}$ (i.e.: the resolvent referring to the
model with 'filled wells') and the Grushin method to compare the nonlinear
eigenvalues problem for $\mathcal{P}_{z}^{h}\left(  \mathcal{V}^{h}\right)  $
with the corresponding problem in the Dirichlet case. We refer to \cite{FMN2}
where detailed computations are given for a wider class of operators (of the
type $Q_{\theta_{1},\theta_{2}}^{h}(\mathcal{V}^{h})$) including $Q_{0,0}%
^{h}(\mathcal{V}^{h})$. We denote with $\omega_{\delta}$ the neighbourhood
\begin{equation}
\omega_{\delta}=\left\{  z\in\mathbb{C}\,,\ d\left(  z,\left\{  \lambda
_{j}^{h}\right\}  _{j=1}^{\ell}\right)  \leq\delta\right\}  \,,
\label{omega_delta}%
\end{equation}
and with $d_{Ag}\left(  x,y,\mathcal{V},\lambda\right)  $%
\begin{equation}
d_{Ag}\left(  x,y,\mathcal{V},\lambda\right)  =\int_{y}^{x}\sqrt{\left(
\mathcal{V}(s)-\lambda\right)  _{+}}ds\,,\qquad x\geq y\,, \label{Agmon_d}%
\end{equation}
the \emph{Agmon distance} between $x$ and $y$ related to a potential
$\mathcal{V}$ and an energy $\lambda\in\mathbb{R}_{+}$.

\begin{proposition}
\label{proposition_resonance}Let $\mathcal{V}^{h}=V+W^{h}$ be defined
according to (\ref{V_h})-(\ref{V_h1}), and assume the conditions (\ref{H1}) to
hold. Then, for all $h\in\left(  0,h_{0}\right]  $, the operator $Q_{0,0}%
^{h}(\mathcal{V}^{h})$ has exactly $\ell$ resonances in $\omega_{ch}$,
$\left\{  z_{j}^{h}\right\}  _{j=1}^{\ell}$, possibly counted with
multiciplicities. Considered as functions of $h$, after the proper labelling
w.r.t. $j$, these fulfill the relations%
\begin{equation}
z_{j}^{h}-\lambda_{j}^{h}=\mathcal{O}\left(  \frac{e^{-\frac{2S_{0}}{h}}%
}{h^{3}}\right)  \,. \label{resonance_est}%
\end{equation}
where $S_{0}=d_{Ag}\left(  \left\{  a,b\right\}  ,U,V,\lambda^{0}\right)  $ is
the Agmon distance between the asymptotic support of $W^{h}$ and the barrier's
boundary.\newline
\end{proposition}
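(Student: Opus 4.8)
The plan is to reduce the claim to a standard Grushin-type comparison between the non-linear eigenvalue problem for $\mathcal{P}_z^h(\mathcal{V}^h)$ and the linear Dirichlet spectral problem for $Q_D^h(\mathcal{V}^h)$, exactly as in \cite{HeSj1}, \cite{Hel}, \cite{FMN2}, and then to locate and count the resulting resonances using Rouch\'e's theorem. First I would set up the Grushin problem: with $\{\phi_j^h\}_{j=1}^{\ell}$ the $L^2(a,b)$-normalized Dirichlet eigenfunctions of $Q_D^h(\mathcal{V}^h)$ attached to the cluster $\{\lambda_j^h\}$, define $R_+\colon L^2(a,b)\to\mathbb{C}^{\ell}$ by $R_+u=(\langle u,\phi_j^h\rangle)_j$ and $R_-\colon\mathbb{C}^{\ell}\to L^2(a,b)$ by $R_-c=\sum_j c_j\phi_j^h$, and show that the block operator
\begin{equation}
\mathcal{G}(z)=\begin{pmatrix}\mathcal{P}_z^h(\mathcal{V}^h)-z & R_-\\ R_+ & 0\end{pmatrix}
\end{equation}
is boundedly invertible for $z\in\omega_{ch}$ with $c$ small, with inverse
\begin{equation}
\mathcal{G}(z)^{-1}=\begin{pmatrix}E(z) & E_+(z)\\ E_-(z) & E_{-+}(z)\end{pmatrix}.
\end{equation}
Then $z$ is a resonance iff $E_{-+}(z)$ is a singular $\ell\times\ell$ matrix, i.e. $\det E_{-+}(z)=0$.

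The heart of the matter is to estimate $E_{-+}(z)$. The diagonal of $E_{-+}(z)$ is, to leading order, $z-\lambda_j^h$, coming from the Dirichlet problem; the corrections — both the off-diagonal entries and the deviation of the diagonal from $z-\lambda_j^h$ — measure the difference between the exterior-dilated boundary conditions in $D(\mathcal{P}_z^h)$ and Dirichlet conditions at $\{a,b\}$. I would control these corrections using the semiclassical Agmon/exponential decay estimates for the ``filled-well'' resolvent $(\mathcal{P}_z^h(V)-z)^{-1}$: since $\lambda^0$ lies strictly below $\inf_{[a,b]}V$ by (\ref{H1},i) and the Dirichlet eigenfunctions of the cluster are $h$-close to eigenfunctions concentrated near $U$, each $\phi_j^h$ is exponentially small near $\{a,b\}$ with rate governed by $S_0=d_{Ag}(\{a,b\},U,V,\lambda^0)$. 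Propagating this decay through the Robin boundary terms at $a$ and $b$ (which involve one derivative, hence a factor $1/h$, picked up twice through the two boundary points and once more from a resolvent bound of order $1/h$) yields
\begin{equation}
E_{-+}(z)=(z-\lambda_j^h)\delta_{jk}+\mathcal{O}\!\left(\frac{e^{-\frac{2S_0}{h}}}{h^3}\right),
\end{equation}
uniformly for $z\in\omega_{ch}$. The separation hypothesis (\ref{H1},ii) guarantees no other Dirichlet eigenvalue interferes on $\partial\omega_{ch}$, so that on that circle $\lvert\det E_{-+}(z)\rvert$ is bounded below by a power of $h$ times $(\operatorname{dist}(z,\{\lambda_j^h\}))^{\ell}\gtrsim (ch)^{\ell}$, which dominates the error once $h$ is small; (\ref{H1},iii) ensures the whole cluster fits inside $\omega_{ch}$.

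With these ingredients the conclusion is a routine application of Rouch\'e's theorem: on $\partial\omega_{ch}$ the function $\det E_{-+}(z)$ and the ``unperturbed'' function $\prod_{j=1}^{\ell}(z-\lambda_j^h)$ differ by strictly less than the modulus of the latter, hence they have the same number of zeros inside, namely $\ell$ counted with multiplicity; this gives exactly $\ell$ resonances $\{z_j^h\}$ in $\omega_{ch}$. Finally, to get the individual estimate (\ref{resonance_est}) I would localize: restricting to a disk $\mathcal{B}_{ch}(\lambda_j^h)$ separated from the other $\lambda_k^h$ by $(\text{ii})$, on its boundary $\lvert z-\lambda_j^h\rvert=ch$ dominates the $\mathcal{O}(e^{-2S_0/h}/h^3)$ remainder, so another Rouch\'e argument places a single zero $z_j^h$ in that disk, and the quantitative bound $z_j^h-\lambda_j^h=\mathcal{O}(e^{-2S_0/h}/h^3)$ follows by evaluating $E_{-+}$ at $z=\lambda_j^h$ (where the diagonal term vanishes) and using the implicit bound on the derivative $\partial_z\det E_{-+}$, which is $1+\mathcal{O}(h^{\infty})$ near $\lambda_j^h$. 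The main obstacle is the exponential-weight bookkeeping in the second step — tracking precisely how the decay rate $S_0$ and the polynomial prefactor $h^{-3}$ emerge from the Agmon estimates and the Robin boundary traces — but since \cite{FMN2} carries this out in detail for the larger class $Q_{\theta_1,\theta_2}^h(\mathcal{V}^h)$, I would simply invoke those computations specialized to $\theta_1=\theta_2=0$.
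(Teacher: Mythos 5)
Your proposal follows essentially the same route the paper itself indicates: set up the Grushin problem comparing $\mathcal{P}_z^h(\mathcal{V}^h)$ with the Dirichlet spectral problem, use Agmon/filled-well exponential estimates to show $E_{-+}(z)=\mathrm{diag}(z-\lambda_j^h)+\mathcal{O}(h^{-3}e^{-2S_0/h})$, and then count zeros of $\det E_{-+}$ by Rouch\'e. The paper deliberately does not spell this out, deferring to \cite{HeSj1} and \cite{FMN2} (where the construction is carried out for the larger family $Q_{\theta_1,\theta_2}^h$), and the estimates you state are exactly the ones recalled in (\ref{est_grushin1})--(\ref{est_grushin2}), so the proposal is consistent with the paper's intent.

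One small imprecision in your final localization step: you invoke hypothesis (ii) of (\ref{H1}) to claim $\mathcal{B}_{ch}(\lambda_j^h)$ is separated from the other $\lambda_k^h$, but (ii) only separates the cluster $\{\lambda_j^h\}_{j=1}^\ell$ from the \emph{rest} of $\sigma(Q_D^h)$; two $\lambda_j^h,\lambda_k^h$ inside the cluster may be closer than $ch$. In that case the disks $\mathcal{B}_{ch}(\lambda_j^h)$ overlap, you cannot isolate a single resonance per disk, and the per-index estimate $z_j^h-\lambda_j^h=\mathcal{O}(h^{-3}e^{-2S_0/h})$ is obtained only after a suitable pairing of the $\ell$ resonances in $\omega_{ch}$ with the $\ell$ Dirichlet eigenvalues (this is what the statement's ``after the proper labelling w.r.t.\ $j$'' and ``possibly counted with multiplicities'' is accommodating). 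The cluster-wide Rouch\'e count in $\omega_{ch}$ is what is actually robust; the labelled comparison then comes from applying Rouch\'e on nested contours around connected components of $\omega_{\delta}$ for varying $\delta\le ch$, rather than from a priori disjoint disks. This is a bookkeeping refinement, not a change of method, and your overall plan is sound.
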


\begin{remark}
\label{remark_resonances}According to our assumptions, $Q_{0,0}^{h}%
(\mathcal{V}^{h})$ is selfadjoint and has an absolutely continuous spectrum
coinciding with the positive real axis. This implies that the map
$z\rightarrow\left(  \mathcal{P}_{z}^{h}\left(  \mathcal{V}^{h}\right)
-z\right)  ^{-1}$ is holomorphic when $z\in\omega_{\frac{ch}{2}}$ and
$\operatorname{Im}z\geq0$ (with the square root's determination as in
(\ref{P_z_def})), and the poles $z_{j}^{h}$ have $\operatorname{Im}z_{j}%
^{h}<0$, while, due to the relation (\ref{resonance_est}), $\left\vert
\operatorname{Im}z_{j}^{h}\right\vert \lesssim h^{-3}e^{-\frac{2S_{0}}{h}}$.
\end{remark}

In order to control the operator-norm of $\left(  \mathcal{P}_{z}^{h}\left(
\mathcal{V}^{h}\right)  -z\right)  ^{-1}$ in the limit $z\rightarrow k^{2}+i0$
with $k^{2}$ close to the asymptotic resonant energy $\lambda^{0}$, it is
important to have a lower bound for the imaginary part of the resonances as
$h\rightarrow0$. Providing such a lower bound is a standard result in
semiclassical analysis (e.g. in \cite{HeSj1}); in \cite{BNP2} this problem is
analyzed by considering a 1D (possibly non-linear) Schr\"{o}dinger operator,
depending on the scaling parameter $h$ according to the rules prescribed in
(\ref{Q_0_h})-(\ref{V_h1}) with%
\begin{equation}
W^{h}=-\sum_{n=1}^{N}w_{n}\left(  \frac{x-x_{n}}{h}\right)  \,,\quad w_{n}%
\in\mathcal{C}^{0}\left(  \mathbb{R},\mathbb{R}\right)  \,,\quad\text{supp
}w_{n}=\left[  -d,d\right]  \,. \label{W_h_well}%
\end{equation}
This relation defines multiple \emph{quantum wells} supported around a
collection of points $x_{n}\in\left(  a,b\right)  $, $n=1,...,N$, for some
$d>0$. Assuming the potential to fulfill the spectral conditions (\ref{H1}),
one obtains%
\begin{equation}
\left\vert \operatorname{Im}z_{j}^{h}\right\vert \gtrsim e^{-2\frac{\left(
S_{0}+S_{U}\right)  +\eta}{h}}\,, \label{Fermi_golden_0}%
\end{equation}
where $\eta>0$ is arbitrarily small, while
\begin{equation}
S_{U}=\max_{n\neq n^{\prime}}d_{Ag}\left(  x_{n},x_{n^{\prime}},V,\lambda
^{0}\right)  \label{S_U}%
\end{equation}
measures the diameter of the union of the resonant wells (see Proposition 4.1
in \cite{BNP2}). As this relation shows, the lower bound of $\left\vert
\operatorname{Im}z_{j}^{h}\right\vert $ can be much smaller than the upper
bound given in (\ref{resonance_est}). Actually, the estimate can be improved,
under some additional spectral conditions, by using a \emph{Fermi golden rule}
for the imaginary part of the resonances (see e.g. in \cite{BNP2} and
\cite{FMN2}). The case of two \emph{isolated wells} is explicitly considered
in \cite{BNP2}. For $\eta>0$, we introduce $\tilde{S}_{U}$%
\begin{equation}
\tilde{S}_{U,\eta}=\max_{n\leq N}\sqrt{V(x_{n})+\eta-\lambda^{0}}\,,
\label{S_U_tilda}%
\end{equation}
which measures the diameter of the area containing all the wells in
(\ref{W_h_well}).

\begin{proposition}
\label{proposition_goldenrule}Let $\mathcal{V}^{h}=V+W^{h}$ verify the
Condition \ref{condition_1} and further assume $W^{h}$ to be defined as in
(\ref{W_h_well}) with the following restrictions: $\ell=N\leq2$, $w_{n}$ are
even functions and $S_{0}>8\tilde{S}_{U,\eta}$ for some $\eta>0$. Then%
\begin{equation}
\left\vert \operatorname{Im}z_{j}^{h}\right\vert \gtrsim e^{-\frac{2S_{0}}{h}%
}\,. \label{Fermi_golden}%
\end{equation}

\end{proposition}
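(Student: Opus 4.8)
The plan is to derive the improved lower bound \eqref{Fermi_golden} from the Fermi-golden-rule expansion for the imaginary parts of the shape resonances, following the scheme of \cite{BNP2} and \cite{FMN2}, and to check that the arithmetic condition $S_{0}>8\tilde{S}_{U,\eta}$ suffices to make the leading term dominate. First I would recall the structure of the Fermi golden rule: for a shape resonance $z_{j}^{h}$ attached to a Dirichlet eigenvalue $\lambda_{j}^{h}$ with normalized eigenfunction $u_{j}^{h}$ of $Q_{D}^{h}(\mathcal{V}^{h})$, one has an expansion of the form
\begin{equation}
\operatorname{Im}z_{j}^{h}=-c\,h\,\bigl|u_{j}^{h}(b)\bigr|^{2}-c\,h\,\bigl|u_{j}^{h}(a)\bigr|^{2}+\bigl(\text{lower order}\bigr)\,,
\end{equation}
where the boundary traces of $u_{j}^{h}$ are controlled by Agmon estimates. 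The point is that $|u_{j}^{h}(a)|$ and $|u_{j}^{h}(b)|$ are each of size $e^{-S_{0}/h}$ up to subexponential factors (by the localization of $u_{j}^{h}$ near $U$ and the Agmon distance $S_{0}=d_{Ag}(\{a,b\},U,V,\lambda^{0})$), so the main term is $\asymp h\,e^{-2S_{0}/h}$, giving $|\operatorname{Im}z_{j}^{h}|\gtrsim e^{-2S_{0}/h}$ once the error terms are shown to be of strictly smaller exponential order.

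The key steps, in order: (1) establish the precise two-sided Agmon bound for the boundary values $|u_{j}^{h}(a)|,|u_{j}^{h}(b)|$, using that $w_{n}$ are even (which fixes the parity of the ground states in each isolated well and prevents accidental cancellation of the dominant WKB contribution at the barrier edges) and that $\ell=N\le 2$ (so the cluster consists of at most two interacting wells and the eigenfunctions have a clean quasimode description); (2) write the Fermi-golden-rule identity for $\operatorname{Im}z_{j}^{h}$ — this comes from pairing the resonance equation \eqref{resonance_eq1} with $u_{j}^{h}$, using the outgoing boundary condition in \eqref{P_z_def} and taking imaginary parts, which turns the exterior radiation into a positive quadratic form in the boundary traces; (3) bound the remainder terms: the corrections involve either the resolvent $(\mathcal{P}_{z}^{h}(V)-z)^{-1}$ of the filled-well model (exponentially close in norm to the non-resonant regime, contributing factors no worse than $e^{-2S_{0}/h}$ times a genuinely smaller exponential) or cross terms between the two wells, which carry an extra factor governed by $\tilde{S}_{U,\eta}$; (4) finally verify that the hypothesis $S_{0}>8\tilde{S}_{U,\eta}$ forces every remainder to be $o\!\bigl(e^{-2S_{0}/h}\bigr)$, so that the positive main term survives and yields \eqref{Fermi_golden}; since \eqref{resonance_est} already gives the matching upper bound, the two combine to pin down the exponential rate.

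The main obstacle I expect is step (3)–(4): controlling the remainder in the Fermi golden rule with a \emph{sharp enough} exponential weight. The crude bound \eqref{Fermi_golden_0} only yields $e^{-2(S_{0}+S_{U})/h}$ modulo $e^{\eta/h}$, which is weaker; to reach the rate $e^{-2S_{0}/h}$ one must show that the interaction between the two wells does not degrade the boundary traces, and that the resolvent of the filled-well operator near $\lambda^{0}$ contributes only a subexponential (polynomial-in-$1/h$) prefactor rather than an extra $e^{-2S_{0}/h}$-type loss. This is exactly where the quantitative geometric gap $S_{0}>8\tilde{S}_{U,\eta}$ enters: the factor $8$ is dictated by tracking how many times a "tunnelling distance" of order $\tilde{S}_{U,\eta}$ appears in the worst error term (typically through squared cross terms and a resolvent expansion of bounded order), and one needs that total to stay below $2S_{0}$. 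I would carry out this bookkeeping by adapting the Grushin-reduction estimates of \cite{FMN2} — where the analogous computation is performed for the larger class $Q_{\theta_{1},\theta_{2}}^{h}(\mathcal{V}^{h})$ containing $Q_{0,0}^{h}(\mathcal{V}^{h})$ — specializing to $\theta_{1}=\theta_{2}=0$ and to the two-well, even-potential configuration, and invoking the exponential decay estimates collected in Appendix \ref{App_expest}. Since all the hard analysis already exists in \cite{BNP2} (Proposition 4.1) for the isolated-well case, the proof ultimately reduces to citing that result under the stated restrictions; I would write it as such, with the verification of the $8\tilde{S}_{U,\eta}$ threshold made explicit.
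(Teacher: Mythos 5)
Your proposal ultimately lands on the same strategy as the paper: defer to the detailed Fermi-golden-rule analysis in \cite{BNP2} rather than reproducing it. The paper's proof is in fact a one-line citation to Proposition~8.3 and Remark~8.4 of \cite{BNP2}, which is precisely the place where the sharper two-well rate $e^{-2S_{0}/h}$ is established under the parity and $S_{0}>8\tilde{S}_{U,\eta}$ hypotheses. Your sketch of the underlying mechanism (pairing the resonance equation with the Dirichlet eigenmode, taking imaginary parts to produce a positive boundary quadratic form, then bounding cross-well and filled-well remainders) is a reasonable description of the kind of argument that lives inside \cite{BNP2}, and your heuristic for the role of the factor $8$ points in the right direction.

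However, there is a concrete error in the citation: you attribute the result to Proposition~4.1 of \cite{BNP2}, but that proposition is what gives the \emph{weaker} lower bound \eqref{Fermi_golden_0} (with $S_{0}+S_{U}$ in the exponent); the sharper bound \eqref{Fermi_golden} requires Proposition~8.3 together with Remark~8.4 of \cite{BNP2}, which handle the isolated-well, even-potential, $N\le 2$ configuration and verify that the cross terms do not spoil the leading $e^{-2S_{0}/h}$ term under $S_{0}>8\tilde{S}_{U,\eta}$. If you intend to write the proof as a citation (which is what the paper does), you must cite the correct statements; otherwise the claimed threshold $8\tilde{S}_{U,\eta}$ is unsupported, since Proposition~4.1 alone does not produce it. Also note that your steps (1)–(4) are a plan, not a proof: as written they do not verify the crucial claim that every remainder is $o\bigl(e^{-2S_{0}/h}\bigr)$ — you explicitly say you ``would carry out this bookkeeping.'' Either carry it out or replace it with the correct pointer into \cite{BNP2}; leaving both the bookkeeping undone and the reference wrong leaves a real gap.
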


\begin{proof}
The relation (\ref{Fermi_golden}) follows from the result of Proposition 8.3
and the Remark 8.4 in \cite{BNP2}.
\end{proof}

\subsection{\label{Section_resest}Weighted resolvent estimates around the
asymptotic resonant energy}

Next the operator $\left(  \mathcal{P}_{z}^{h}\left(  \mathcal{V}^{h}\right)
-z\right)  ^{-1}$ is considered in the quantum wells case; our aim is to
provide with an estimate for the boundary values of $\left(  \mathcal{P}%
_{z}^{h}\left(  \mathcal{V}^{h}\right)  -z\right)  ^{-1}u$, for $u\in
L^{2}\left(  \left(  a,b\right)  \right)  $ when $z$ is 'close' to
$\lambda^{0}$, in a sense specified later, and $\operatorname{Im}z^{\frac
{1}{2}}\geq0$ (recall that in the definition (\ref{P_z_def}), $\left(
z\right)  ^{\frac{1}{2}}$ is determined by $\arg z\in\left[  -\frac{\pi}%
{2},\frac{3}{2}\pi\right)  $). To study this problem we use the relation%
\begin{equation}
\left(  \mathcal{P}_{z}^{h}\left(  \mathcal{V}^{h}\right)  -z\right)
^{-1}=E\left(  z\right)  +E^{+}\left(  z\right)  \left(  E^{-+}\left(
z\right)  \right)  ^{-1}E^{-}\left(  z\right)  \,, \label{resolvent_grushin}%
\end{equation}
given in \cite{FMN2} after introducing a Grushin problem. The operators at the
r.h.s. of (\ref{resolvent_grushin}) act according to: $E\left(  z\right)
\in\mathcal{L}\left(  L^{2}\left(  \left(  a,b\right)  \right)  ,L^{2}\left(
\left(  a,b\right)  \right)  \right)  $, $E^{-+}\left(  z\right)
\in\mathbb{C}^{\ell,\ell}$, $E^{-}\left(  z\right)  \in\mathcal{L}\left(
L^{2}\left(  \left(  a,b\right)  \right)  ,\mathbb{C}^{\ell}\right)  $ and
$E^{+}\left(  z\right)  \in\mathcal{L}\left(  \mathbb{C}^{\ell},L^{2}\left(
\left(  a,b\right)  \right)  \right)  $. For $z\in\omega_{ch}$, the first
contribution, $E\left(  z\right)  $, is an holomorphic operator valued family,
while the possible poles of (\ref{resolvent_grushin}) are encoded by the
singularities of the matrix $E^{-+}\left(  z\right)  $. In particular, for
$\eta>0$ small and $K_{a,b,c}>0$ depending on the data, the next estimates
hold uniformly w.r.t. $z\in\omega_{ch}$ (for this point we refer to the
results in the sections 4 and 5 of \cite{FMN2})%
\begin{equation}
\left\Vert E\left(  z\right)  \right\Vert _{\mathcal{L}\left(  L^{2}\left(
\left(  a,b\right)  \right)  ,H^{1,h}\left(  \left[  a,b\right]  \right)
\right)  }=\mathcal{O}\left(  e^{-\frac{S_{0}-K_{a,b,c}\eta}{h}}\right)
\,,\qquad\left\Vert E^{-+}\left(  z\right)  -diag\left(  z-z_{j}^{h}\right)
\right\Vert _{\mathbb{C}^{\ell,\ell}}=\mathcal{O}\left(  \frac{e^{-\frac
{2S_{0}}{h}}}{h^{3}}\right)  \,, \label{est_grushin1}%
\end{equation}%
\begin{equation}
\left\Vert E^{+}\left(  z\right)  -\chi_{h}E_{0}^{+}\left(  z\right)
\right\Vert _{\mathcal{L}\left(  \mathbb{C}^{\ell},H^{1,h}\left(  \left[
a,b\right]  \right)  \right)  }=\mathcal{O}\left(  \frac{e^{-\frac{S_{0}}{h}}%
}{h}\right)  \,,\quad\left\Vert E^{-}\left(  z\right)  -E_{0}^{-}\left(
z\right)  \right\Vert _{\mathcal{L}\left(  L^{2}\left(  \left(  a,b\right)
\right)  ,\mathbb{C}^{\ell}\right)  }=\mathcal{O}\left(  e^{-\frac
{S_{0}-K_{a,b,c}\eta}{2h}}\right)  \,. \label{est_grushin2}%
\end{equation}
The maps $E_{0}^{\pm}\left(  z\right)  $ are explicitly defined by
\begin{equation}
E_{0}^{+}\left(  z\right)  :\mathbb{C}^{\ell}\rightarrow L^{2}\left(  \left(
a,b\right)  \right)  \,,\quad E_{0}^{+}\left(  z\right)  \left(  p\right)
=\sum_{j=1}^{\ell}p_{j}\Phi_{j}^{h}\,, \label{E_plus}%
\end{equation}%
\begin{equation}
E_{0}^{-}\left(  z\right)  :L^{2}\left(  \left(  a,b\right)  \right)
\rightarrow\mathbb{C}^{\ell}\,,\quad E_{0}^{-}\left(  z\right)  \left(
\varphi\right)  =%
\begin{pmatrix}
\left\langle \Phi_{1}^{h},\varphi\right\rangle \\
\vdots\\
\left\langle \Phi_{\ell}^{h},\varphi\right\rangle
\end{pmatrix}
\,, \label{E_minus}%
\end{equation}
where $\Phi_{j}^{h}$ is the normalized eigenfunction of $Q_{D}^{h}%
(\mathcal{V}^{h})$ related to the point $\lambda_{j}^{h}$, while $\chi_{h}$ is
a cutoff function such that%
\begin{equation}
\chi_{h}\in\mathcal{C}_{0}^{\infty}\left(  \left(  a,b\right)  \right)
\,,\quad\left\Vert \left(  h\partial_{x}\right)  ^{m}\chi_{h}\right\Vert \leq
C_{m}\,,\ m\in\mathbb{N}\,,\quad\chi_{h}(x)=1\text{ if }d(x,\left\{
a,b\right\}  )\geq h\,. \label{chi_h}%
\end{equation}
In the next Proposition we provide with weighted energy estimates for $\left(
\mathcal{P}_{z}^{h}\left(  \mathcal{V}^{h}\right)  -z\right)  ^{-1}$ when
$z\in\omega_{ch}$.

\begin{lemma}
\label{Lemma_Grushin}Let $\mathcal{P}_{z}^{h}\left(  \mathcal{V}^{h}\right)  $
be defined by (\ref{P_z_def}), with $\mathcal{V}^{h}$ fulfilling the Condition
\ref{condition_1} and denote with $z_{j}^{h}$, $j=1,...\ell$, the
corresponding cluster of shape resonances. For $z\in\omega_{ch}$, the estimate%
\begin{equation}
\left\Vert e^{\frac{\varphi}{h}}\left(  \mathcal{P}_{z}^{h}\left(
\mathcal{V}^{h}\right)  -z\right)  ^{-1}u\right\Vert _{H^{1,h}\left(  \left[
a,b\right]  \right)  }\leq C_{a,b,c}\left(  \frac{1}{h}\left(  \sup_{j\leq
\ell}\left\vert \operatorname{Im}z_{j}^{h}\right\vert ^{-1}\right)
+e^{\frac{K_{a,b,c}\eta}{h}}\right)  \left\Vert u\right\Vert _{L^{2}\left(
\left[  a,b\right]  \right)  }\,, \label{resolvent_grushin_est}%
\end{equation}
holds with $\varphi\left(  \cdot\right)  =d_{Ag}\left(  \cdot,U,\mathcal{V}%
,\lambda^{0}\right)  $, $C_{a,b,c}$ and $K_{a,b,c}$ being positive constants
possibly depending on the data, and $\eta>0$ arbitrarily small.
\end{lemma}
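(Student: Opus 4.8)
The plan is to combine the Grushin resolvent representation \eqref{resolvent_grushin} with the weighted a priori estimates \eqref{est_grushin1}--\eqref{est_grushin2} and then use the resonance localization \eqref{resonance_est} to control the inverse of the finite matrix $E^{-+}(z)$. Concretely, I would first estimate the regular term $e^{\frac{\varphi}{h}}E(z)u$ in $H^{1,h}([a,b])$. Since $\varphi(\cdot)=d_{Ag}(\cdot,U,\mathcal V,\lambda^0)\le d_{Ag}(\{a,b\},U,V,\lambda^0)=S_0$ on $[a,b]$ (up to a loss proportional to $\eta$ coming from the difference between $\mathcal V^h$ and $V$ on the wells of size $h$), multiplying the first bound in \eqref{est_grushin1} by $e^{\frac{\varphi}{h}}\le e^{\frac{S_0}{h}}$ exactly cancels the exponentially small factor $e^{-\frac{S_0-K_{a,b,c}\eta}{h}}$, leaving a contribution $\mathcal O(e^{\frac{K_{a,b,c}\eta}{h}})\|u\|_{L^2}$, which is absorbed into the second term on the right-hand side of \eqref{resolvent_grushin_est}.

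Next I would treat the singular term $E^+(z)\bigl(E^{-+}(z)\bigr)^{-1}E^-(z)u$. The key arithmetic input is the second bound in \eqref{est_grushin1}: $E^{-+}(z)=\operatorname{diag}(z-z_j^h)+\mathcal O(h^{-3}e^{-2S_0/h})$, together with Remark~\ref{remark_resonances}, which gives $|\operatorname{Im} z_j^h|\lesssim h^{-3}e^{-2S_0/h}$. For $z\in\omega_{ch}$ with $\operatorname{Im} z\ge 0$, one has $|z-z_j^h|\ge|\operatorname{Im} z_j^h|$, so $\operatorname{diag}(z-z_j^h)^{-1}$ is bounded by $\sup_j|\operatorname{Im} z_j^h|^{-1}$; since the perturbation $\mathcal O(h^{-3}e^{-2S_0/h})$ is, by \eqref{Fermi_golden_0} or \eqref{resonance_est}, of the same order as (or dominated by, up to the Fermi-golden-rule lower bound) the size of $\operatorname{diag}(z-z_j^h)$, a Neumann-series / perturbation-of-invertible-matrix argument yields $\|(E^{-+}(z))^{-1}\|_{\mathbb C^{\ell,\ell}}\lesssim \sup_j|\operatorname{Im} z_j^h|^{-1}$. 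Here I would be slightly careful: strictly speaking one only needs that $(E^{-+}(z))^{-1}$ is comparable to $\operatorname{diag}(z-z_j^h)^{-1}$ on the relevant energy range, which follows because the off-diagonal and diagonal-correction terms are controlled by the distance from $z$ to the resonance cluster via \eqref{resonance_est}.

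Then I would assemble the weighted bound for the singular term: write $e^{\frac{\varphi}{h}}E^+(z)=e^{\frac{\varphi}{h}}\chi_h E_0^+(z)+e^{\frac{\varphi}{h}}(E^+(z)-\chi_h E_0^+(z))$. The second piece is handled as above using \eqref{est_grushin2}: $e^{\frac{\varphi}{h}}\le e^{\frac{S_0}{h}}$ kills $e^{-S_0/h}/h$, up to an $e^{K_{a,b,c}\eta/h}$ loss, giving the stated polynomial-in-$1/h$ factor. For the main piece $e^{\frac{\varphi}{h}}\chi_h\Phi_j^h$, the crucial fact is the Agmon-type exponential decay of the Dirichlet eigenfunctions $\Phi_j^h$ away from the wells: $\|e^{\frac{\varphi}{h}}\Phi_j^h\|_{H^{1,h}([a,b])}\lesssim e^{\frac{K_{a,b,c}\eta}{h}}$ (standard semiclassical Agmon estimates, as in \cite{HeSj1}, \cite{Hel}), so the weight $e^{\varphi/h}$ only costs a subexponential factor. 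Combining $\|e^{\frac{\varphi}{h}}E^+(z)\|\lesssim e^{K_{a,b,c}\eta/h}$, $\|(E^{-+}(z))^{-1}\|\lesssim\sup_j|\operatorname{Im}z_j^h|^{-1}$ — with the extra $1/h$ in \eqref{resolvent_grushin_est} providing slack — and $\|E^-(z)\|_{\mathcal L(L^2,\mathbb C^\ell)}=\mathcal O(1)$ from \eqref{est_grushin2} and $\|\Phi_j^h\|_{L^2}=1$, I obtain the bound $C_{a,b,c}\,h^{-1}(\sup_j|\operatorname{Im}z_j^h|^{-1})\|u\|_{L^2}$ for the singular part, which is exactly the first term in \eqref{resolvent_grushin_est}. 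Adding the two contributions and renaming constants $\eta$ finishes the proof.

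The main obstacle I anticipate is the careful bookkeeping of the exponential weight against the Agmon distance: one must verify that $\varphi=d_{Ag}(\cdot,U,\mathcal V^h,\lambda^0)$ (defined with the full potential, hence naturally tied to the decay of $\Phi_j^h$) can be compared on $[a,b]$ to $d_{Ag}(\{a,b\},U,V,\lambda^0)=S_0$, and that the wells — being attractive and of width $h$ — contribute only an $\mathcal O(\eta/h)$ error in all the weighted norms, so that the exponential gains from \eqref{est_grushin1}--\eqref{est_grushin2} are genuinely saturated rather than merely partially used. This is routine given the uniformity in Condition~\ref{condition_1}, but it is where all the precise constants $K_{a,b,c}$ are generated, and the technical verification is deferred to Appendix~\ref{App_expest}.
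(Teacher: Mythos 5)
The overall skeleton is the same as the paper's proof: split via the Grushin representation, weight each piece by $e^{\varphi/h}$, and bound $(E^{-+}(z))^{-1}$ by $\sup_j|\operatorname{Im}z_j^h|^{-1}$ using that the singular points lie in $\omega_{ch}\cap\mathbb{C}^-$. Your treatment of the regular piece $E(z)u$ matches the paper exactly. However, there are two places where the argument does not close.

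First, and most seriously, your bound for the main piece of $e^{\varphi/h}E^+(z)$ is too crude and the arithmetic does not give what you claim. You invoke ``standard semiclassical Agmon estimates'' to get $\|e^{\varphi/h}\Phi_j^h\|_{H^{1,h}}\lesssim e^{K_{a,b,c}\eta/h}$ and then write that combining this with $\|(E^{-+}(z))^{-1}\|\lesssim\sup_j|\operatorname{Im}z_j^h|^{-1}$ yields the first term $C_{a,b,c}h^{-1}\sup_j|\operatorname{Im}z_j^h|^{-1}$ ``with the extra $1/h$ providing slack''. But $e^{K\eta/h}\gg h^{-1}$ for every fixed $\eta>0$ as $h\to 0$, so $e^{K\eta/h}\sup_j|\operatorname{Im}z_j^h|^{-1}$ is \emph{not} dominated by $h^{-1}\sup_j|\operatorname{Im}z_j^h|^{-1}$ — the $1/h$ on the right-hand side of \eqref{resolvent_grushin_est} provides no slack at all, it is a stronger requirement. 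The paper avoids this precisely by using the sharper Agmon estimate for the Dirichlet eigenstates (\cite{FMN2}, Proposition 4.1), which gives the polynomial bound $\|e^{\varphi_j/h}\Phi_j^h\|_{H^{1,h}}\leq C_{a,b,c}/h$ (with $\varphi_j=d_{Ag}(\cdot,U,V,\lambda_j^h)$, then converted to $\varphi$ via $|\lambda_j^h-\lambda^0|\leq ch$), not merely the $\eta$-lossy $e^{K\eta/h}$. Without this sharper input your derivation proves a strictly weaker lemma than the one stated, and the extra $e^{K\eta/h}$ in front of $\sup_j|\operatorname{Im}z_j^h|^{-1}$ would propagate into the trace estimates of Propositions \ref{proposition_trace-est} and \ref{Proposition_trace_est_h}, where it would not cancel.

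Second, your Neumann-series argument for $(E^{-+}(z))^{-1}$ is not sound as stated. You write $E^{-+}(z)=\operatorname{diag}(z-z_j^h)+\mathcal{O}(h^{-3}e^{-2S_0/h})$ and claim that for $z\in\overline{\mathbb{C}^+}$ one can invert by a perturbation argument, because the correction is ``of the same order as, or dominated by'' the diagonal. But $|z-z_j^h|\geq|\operatorname{Im}z_j^h|$ and, by \eqref{Fermi_golden_0}, $|\operatorname{Im}z_j^h|$ can be as small as $e^{-2(S_0+S_U+\eta)/h}$, which is exponentially smaller than the perturbation $h^{-3}e^{-2S_0/h}$; so $\|\operatorname{diag}(z-z_j^h)^{-1}\cdot\mathcal{O}(h^{-3}e^{-2S_0/h})\|$ need not be $<1$ and the Neumann series can fail. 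The paper's argument is different in kind: it uses the Grushin-theoretic fact that $E^{-+}(z)$ is a finite holomorphic matrix family whose singular set is \emph{exactly} $\{z_j^h\}\subset\mathbb{C}^-$, so on $\omega_{ch}\cap\overline{\mathbb{C}^+}$ it is invertible, and its inverse is controlled by the distance to the singular set. You acknowledge a caveat here, but as written your route genuinely breaks down in the regime of exponentially small $\operatorname{Im}z_j^h$, which is precisely the regime of interest.
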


\begin{proof}
Using the representation (\ref{resolvent_grushin}), we have%
\begin{equation}
\left(  \mathcal{P}_{z}^{h}\left(  \mathcal{V}^{h}\right)  -z\right)
^{-1}u=E\left(  z\right)  u+E^{+}\left(  z\right)  \left(  E^{-+}\left(
z\right)  \right)  ^{-1}E^{-}\left(  z\right)  u\,. \label{resolvent_grushin1}%
\end{equation}
The first estimate in (\ref{est_grushin1}) yields
\begin{equation}
\left\Vert E\left(  z\right)  u\right\Vert _{H^{1,h}\left(  \left[
a,b\right]  \right)  }\lesssim e^{-\frac{S_{0}-K_{a,b,c}\eta}{h}}\left\Vert
u\right\Vert _{L^{2}\left(  \left(  a,b\right)  \right)  }\,.
\label{est_grushin0}%
\end{equation}
Due to the regularity of $\varphi$ (see the definition (\ref{Agmon_d})), it
results: $e^{\frac{\varphi}{h}}E\left(  z\right)  u\in H^{1,h}\left(  \left[
a,b\right]  \right)  $, for $u\in L^{2}\left(  \left(  a,b\right)  \right)  $.
Then, using the relation: $S_{0}=\sup_{\left[  a,b\right]  }\varphi$, we get%
\begin{equation}
\left\Vert e^{\frac{\varphi}{h}}E\left(  z\right)  u\right\Vert _{H^{1,h}%
\left(  \left[  a,b\right]  \right)  }\lesssim\left(  \sup_{\left[
a,b\right]  }e^{\frac{\varphi}{h}}\right)  e^{-\frac{S_{0}-K_{a,b,c}\eta}{h}%
}\left\Vert u\right\Vert _{L^{2}\left(  \left(  a,b\right)  \right)
}=e^{\frac{K_{a,b,c}\eta}{h}}\left\Vert u\right\Vert _{L^{2}\left(  \left(
a,b\right)  \right)  }\,, \label{resolvent_grushin_est1}%
\end{equation}
For the second contribution at the r.h.s. of (\ref{resolvent_grushin1}), we
exploit the small-$h$ expansion%
\[
E^{+}\left(  z\right)  =\chi_{h}E_{0}^{+}\left(  z\right)  +\mathcal{O}\left(
\frac{e^{-\frac{S_{0}}{h}}}{h}\right)  \,,\qquad E_{0}^{+}\left(  z\right)
\left(  p\right)  =\sum_{j=1}^{\ell}p_{j}\Phi_{j}^{h}\,,
\]
where $\mathcal{O}\left(  \cdot\right)  $ is intended in the $\mathcal{L}%
\left(  \mathbb{C}^{\ell},H^{1,h}\left(  \left[  a,b\right]  \right)  \right)
$-norm sense (see (\ref{est_grushin2})). The exponential decay estimates for
the Dirichlet eigenstates $\Phi_{j}^{h}$ read as (e.g. in in \cite{FMN2},
Proposition 4.1)%
\[
h^{\frac{1}{2}}\sup_{\left[  a,b\right]  }\left\vert e^{\frac{\varphi_{j}}{h}%
}\Phi_{j}^{h}\right\vert +\left\Vert he^{\frac{\varphi_{j}}{h}}\partial
_{x}\Phi_{j}^{h}\right\Vert _{L^{2}\left(  \left[  a,b\right]  \right)
}+\left\Vert e^{\frac{\varphi_{j}}{h}}\Phi_{j}^{h}\right\Vert _{L^{2}\left(
\left[  a,b\right]  \right)  }\leq\frac{C_{a,b,c}}{h}\,,
\]
with $\varphi_{j}(\cdot)=d_{Ag}\left(  \cdot,U,V,\lambda_{j}^{h}\right)  $,
being $U$ the asymptotic support of $W^{h}$. According to the assumption
(\ref{H1}) ($\left\vert \lambda_{j}^{h}-\lambda^{0}\right\vert \leq ch$), it
results: $e^{\pm\frac{\varphi_{j}}{h}}=\mathcal{O}\left(  e^{\pm\frac{\varphi
}{h}}\right)  $ and $\varphi_{j}$ can be replaced with $\varphi(\cdot
)=d_{Ag}\left(  \cdot,U,V,\lambda^{0}\right)  $ in the expressions above.
Taking into account the regularity of this function, we get%
\begin{equation}
\left\Vert e^{\frac{\varphi}{h}}\Phi_{j}^{h}\right\Vert _{H^{1,h}\left(
\left[  a,b\right]  \right)  }\leq\frac{C_{a,b,c}}{h}\,,\qquad\varphi
(\cdot)=d_{Ag}\left(  \cdot,U,V,\lambda^{0}\right)  \,,
\end{equation}
which implies%
\begin{equation}
\left\Vert e^{\frac{\varphi}{h}}E^{+}\left(  z\right)  \right\Vert
_{\mathcal{L}\left(  \mathbb{C}^{\ell},H^{1,h}\left(  \left[  a,b\right]
\right)  \right)  }\leq\left\Vert e^{\frac{\varphi}{h}}\chi_{h}E_{0}%
^{+}\left(  z\right)  u\right\Vert _{\mathcal{L}\left(  \mathbb{C}^{\ell
},H^{1,h}\left(  \left[  a,b\right]  \right)  \right)  }+\mathcal{O}\left(
\frac{e^{-\frac{S_{0}}{h}}}{h}\right)  \leq C_{a,b,c}\left(  \frac{1}%
{h}+\mathcal{O}\left(  \frac{e^{-\frac{S_{0}}{h}}}{h}\right)  \right)  \,.
\label{resolvent_grushin_est0}%
\end{equation}
From (\ref{est_grushin1})-(\ref{est_grushin2}) and
(\ref{resolvent_grushin_est0}) follows%
\begin{align}
&  \left\Vert e^{\frac{\varphi}{h}}E^{+}\left(  z\right)  \left(
E^{-+}\left(  z\right)  \right)  ^{-1}E^{-}\left(  z\right)  u\right\Vert
_{H^{1,h}\left(  \left[  a,b\right]  \right)  } \label{resolvent_grushin_est2}%
\\
& \nonumber\\
&  \leq C_{a,b,c}\left(  \frac{1}{h}+\mathcal{O}\left(  \frac{e^{-\frac{S_{0}%
}{h}}}{h}\right)  \right)  \left(  \sup_{\omega_{ch}\cap\overline
{\mathbb{C}^{+}}}\left\Vert \left(  E^{-+}\left(  z\right)  \right)
^{-1}\right\Vert +\mathcal{O}\left(  \frac{e^{-\frac{2S_{0}}{h}}}{h^{3}%
}\right)  \right)  _{\mathbb{C}^{\ell,\ell}}\left(  1+\mathcal{O}\left(
e^{-\frac{S_{0}-K_{a,b,c}\eta}{2h}}\right)  \right)  \left\Vert u\right\Vert
_{L^{2}\left(  \left[  a,b\right]  \right)  }\,,\nonumber
\end{align}
where $\left\Vert \cdot\right\Vert _{\mathbb{C}^{\ell,\ell}}$ is a matrix norm
and the inequality: $\left\Vert E_{0}^{-}\left(  z\right)  u\right\Vert
_{\mathbb{C}^{\ell}}\lesssim\left\Vert u\right\Vert _{L^{2}\left(  \left(
a,b\right)  \right)  }$ have been used. As a consequence of the Proposition
\ref{proposition_resonance} and the Remark \ref{remark_resonances}, the
singular points of $E^{-+}\left(  z\right)  $ are embedded in $\omega_{ch}%
\cap\mathbb{C}^{-}$. Hence we have%
\begin{equation}
\sup_{\omega_{ch}\cap\overline{\mathbb{C}^{+}}}\left\Vert \left(
E^{-+}\left(  z\right)  \right)  ^{-1}\right\Vert _{\mathbb{C}^{\ell,\ell}%
}\lesssim\sup_{j\leq\ell}\left\vert \operatorname{Im}z_{j}^{h}\right\vert
^{-1}\,. \label{resolvent_grushin_est3}%
\end{equation}
The relation (\ref{resolvent_grushin_est}), is a consequence of
(\ref{resolvent_grushin_est1}), (\ref{resolvent_grushin_est2}) and
(\ref{resolvent_grushin_est3}).
\end{proof}

We are interested in trace estimates of $\left(  \mathcal{P}_{z}^{h}\left(
\mathcal{V}^{h}\right)  -z\right)  ^{-1}u$ on $\left\{  a,b\right\}  $ when
$z\rightarrow k^{2}+i0$ and $k^{2}$ is 'close' to $\lambda^{0}$ in the
following sense: we assume $\operatorname{Re}z\in\left[  \Lambda_{1}%
,\Lambda_{2}\right]  $ such that%
\begin{equation}
c\leq\Lambda_{1}<\Lambda_{2}\leq\inf_{\left[  a,b\right]  }V-c\,,\quad
\sigma\left(  Q_{D}^{h}(\mathcal{V}^{h})\right)  \cap\left[  \Lambda
_{1},\Lambda_{2}\right]  =\left\{  \lambda_{j}^{h}\right\}  _{j=1}^{\ell}\,,
\label{Lambda_0_around}%
\end{equation}
uniformly w.r.t. $h\in\left(  0,h_{0}\right]  $. In the case $z\in\omega_{ch}$
and $\operatorname{Re}z\in\left[  \Lambda_{1},\Lambda_{2}\right]  $, trace
estimates of $\left(  \mathcal{P}_{z}^{h}\left(  \mathcal{V}^{h}\right)
-z\right)  ^{-1}u$ easily follows from the result of the Lemma
\ref{Lemma_Grushin} by using the Gagliardo-Nirenberg inequality:
$\sup_{\left[  a,b\right]  }\left\vert \varphi\right\vert \leq C_{b-a}%
\left\Vert \varphi^{\prime}\right\Vert _{L^{2}\left(  \left(  a,b\right)
\right)  }^{\frac{1}{2}}\left\Vert \varphi\right\Vert _{L^{2}\left(  \left(
a,b\right)  \right)  }^{\frac{1}{2}}$ and the equivalence of $\left\Vert
\varphi\right\Vert _{H^{1,h}\left(  \left[  a,b\right]  \right)  }$ with
$\left\Vert h\varphi^{\prime}\right\Vert _{L^{2}\left(  \left(  a,b\right)
\right)  }+\left\Vert \varphi\right\Vert _{L^{2}\left(  \left(  a,b\right)
\right)  }^{\frac{1}{2}}$ leading to%
\begin{equation}
h^{\frac{1}{2}}\sup_{\left[  a,b\right]  }\left\vert u\right\vert \leq
\tilde{C}_{b-a}\left\Vert u\right\Vert _{H^{1,h}\left(  \left[  a,b\right]
\right)  }\,. \label{h_GN}%
\end{equation}
When $z\notin$ $\omega_{ch}$, this problem can be analyzed by using trace
estimates of the corresponding resolvent in the 'filled wells' case (i.e.
$W^{h}=0$). At this concern, let recall the usual energy estimate for $\left(
\mathcal{P}_{z}^{h}\left(  \mathcal{V}\right)  -z\right)  ^{-1}$.

\begin{lemma}
Assume $\mathcal{V}\in L^{\infty}\left(  \mathbb{R},\mathbb{R}\right)  $ and
$\zeta\in\overline{\mathbb{C}^{+}}\cap\left\{  \operatorname{Im}\zeta^{2}%
\geq0\right\}  $ such that: $\mathcal{V}-\operatorname{Re}\zeta^{2}\geq c>0$.
The estimate%
\begin{equation}
\left\Vert \left(  \mathcal{P}_{\zeta^{2}}^{h}\left(  \mathcal{V}\right)
-\zeta^{2}\right)  ^{-1}f\right\Vert _{H^{1,h}\left(  \left[  a,b\right]
\right)  }\leq C_{a,b,c}\left\Vert f\right\Vert _{L^{2}\left(  \left(
a,b\right)  \right)  } \label{P_z_est}%
\end{equation}
holds with $C_{a,b,c}>0$ possibly depending on the data.
\end{lemma}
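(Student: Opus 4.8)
The plan is to read off \eqref{P_z_est} from the standard quadratic-form (energy) identity associated with $\mathcal{P}_{\zeta^2}^h\left(\mathcal{V}\right)$. First I would set $u=\bigl(\mathcal{P}_{\zeta^2}^h\left(\mathcal{V}\right)-\zeta^2\bigr)^{-1}f\in H^2\left((a,b)\right)$, so that $u$ solves $-h^2u''+\left(\mathcal{V}-\zeta^2\right)u=f$ on $(a,b)$ subject to the Robin-type conditions from \eqref{P_z_def}, which read $h\,u'(a)=-i\zeta\,u(a)$ and $h\,u'(b)=i\zeta\,u(b)$ once we note that, on the region $\zeta\in\overline{\mathbb{C}^+}\cap\{\operatorname{Im}\zeta^2\ge 0\}$ (i.e. the closed first quadrant, $\operatorname{Re}\zeta\ge 0$, $\operatorname{Im}\zeta\ge 0$), the branch determination of $(z)^{1/2}$ in \eqref{P_z_def} gives $z^{1/2}=\zeta$ for $z=\zeta^2$. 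Pairing the equation with $u$ in $L^2\left((a,b)\right)$ and integrating the $-h^2u''$ term by parts, the boundary conditions turn the endpoint contributions into a single term, leading to the identity
\[
h^2\|u'\|_{L^2}^2+\int_a^b\bigl(\mathcal{V}-\zeta^2\bigr)|u|^2\,dx-i h\zeta\bigl(|u(a)|^2+|u(b)|^2\bigr)=\langle f,u\rangle_{L^2}\,.
\]

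Next I would take real parts. Writing $\zeta=\alpha+i\beta$ with $\alpha,\beta\ge0$, one has $\operatorname{Re}(-i\zeta)=\beta\ge0$ and, since $\mathcal{V}$ is real, $\operatorname{Re}\int_a^b(\mathcal{V}-\zeta^2)|u|^2 = \int_a^b(\mathcal{V}-\operatorname{Re}\zeta^2)|u|^2\ge c\|u\|_{L^2}^2$ by hypothesis. Discarding the nonnegative boundary term and using Cauchy--Schwarz on the right-hand side yields
\[
h^2\|u'\|_{L^2}^2+c\,\|u\|_{L^2}^2\le\operatorname{Re}\langle f,u\rangle_{L^2}\le\|f\|_{L^2}\|u\|_{L^2}\,.
\]
From this, $\|u\|_{L^2}\le c^{-1}\|f\|_{L^2}$, and feeding this back gives $h^2\|u'\|_{L^2}^2\le\|f\|_{L^2}\|u\|_{L^2}\le c^{-1}\|f\|_{L^2}^2$; adding the two bounds produces $\|u\|_{H^{1,h}\left(\left[a,b\right]\right)}^2\le(c^{-2}+c^{-1})\|f\|_{L^2}^2$, which is \eqref{P_z_est} (with a constant that in fact depends only on $c$).

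There is no genuine obstacle here; the one point requiring care is the sign of the boundary contribution, namely that the Robin coefficients $\pm i\zeta$ satisfy $\operatorname{Re}(-i\zeta)\ge0$, which is precisely why $\zeta$ is restricted to $\overline{\mathbb{C}^+}\cap\{\operatorname{Im}\zeta^2\ge0\}$, so that this term helps rather than hurts the estimate. (If one wished to also justify invertibility of $\mathcal{P}_{\zeta^2}^h\left(\mathcal{V}\right)-\zeta^2$ rather than take it for granted, one would observe that the same a priori bound applied to the formal adjoint — a regular Sturm--Liouville problem on a bounded interval, hence Fredholm of index zero — upgrades the injectivity just obtained to invertibility; but as stated the Lemma only asserts the norm bound for $u$ in the domain.)
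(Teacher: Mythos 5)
Your argument is correct and is essentially identical to the paper's own proof: you form the energy (quadratic-form) identity by pairing the equation with $u$ and integrating by parts, take real parts so the Robin boundary term contributes $h\operatorname{Im}\zeta\,(|u(a)|^2+|u(b)|^2)\ge 0$, use $\mathcal{V}-\operatorname{Re}\zeta^2\ge c$ to absorb the potential term, and conclude the $H^{1,h}$ bound via Cauchy--Schwarz. The only difference is that you spell out the final absorption steps and the constant $(c^{-2}+c^{-1})$ explicitly, which the paper leaves implicit, and you add a parenthetical on Fredholm invertibility that is harmless but not needed for the stated norm bound.
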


\begin{proof}
We notice at first that, under the condition $\operatorname{Im}\zeta^{2}\geq
0$, the identity: $\left(  \zeta^{2}\right)  ^{\frac{1}{2}}=\zeta$ has to be
used in the definition of $\mathcal{P}_{\zeta^{2}}^{h}$ (see (\ref{P_z_def})).
For $f\in L^{2}\left(  \left(  a,b\right)  \right)  $, we consider the
equation: $\left(  \mathcal{P}_{\zeta^{2}}^{h}\left(  \mathcal{V}\right)
-\zeta^{2}\right)  u=f$. Due to the boundary conditions in (\ref{P_z_def}),
the solution $u$ fulfills the integral relation%
\[
\left\Vert hu^{\prime}\right\Vert _{L^{2}\left(  \left[  a,b\right]  \right)
}^{2}+\int_{a}^{b}\left(  \mathcal{V}-\zeta^{2}\right)  \left\vert
u\right\vert ^{2}\,dx-ih\zeta\left(  \left\vert u(b)\right\vert ^{2}%
+\left\vert u(a)\right\vert ^{2}\right)  =\left\langle u,f\right\rangle
_{L^{2}\left(  \left[  a,b\right]  \right)  }\,.
\]
and, taking the real part of it, we obtain the identity%
\[
\left\Vert hu^{\prime}\right\Vert _{L^{2}\left(  \left[  a,b\right]  \right)
}^{2}+\int_{a}^{b}\left(  \mathcal{V}-\operatorname{Re}\zeta^{2}\right)
\left\vert u\right\vert ^{2}\,dx+h\operatorname{Im}\zeta\left(  \left\vert
u(b)\right\vert ^{2}+\left\vert u(a)\right\vert ^{2}\right)  =\left\langle
u,f\right\rangle _{L^{2}\left(  \left[  a,b\right]  \right)  }\,.
\]
Hence, according to our assumptions and to the definition (\ref{def_h_norms}),
the relation (\ref{P_z_est}) follows.
\end{proof}

Next the boundary values of the functions $\left(  \mathcal{P}_{z}^{h}\left(
\mathcal{V}^{h}\right)  -z\right)  ^{-1}u$ are considered when $z$ belongs to
the half disk $\mathcal{B}_{\frac{\Lambda_{2}-\Lambda_{1}}{2}}(\left(
\Lambda_{1}+\Lambda_{2}\right)  /2)\cap\overline{\mathbb{C}^{+}}$ and $\left[
\Lambda_{1},\Lambda_{2}\right]  $ has the property (\ref{Lambda_0_around}). In
these conditions, the traces of $\left(  \mathcal{P}_{z}^{h}\left(
\mathcal{V}^{h}\right)  -z\right)  ^{-1}\varphi$, $\varphi\in L^{2}\left(
\left(  a,b\right)  \right)  $ are controlled on a scale fixed by $\left\vert
\operatorname{Im}z_{j}^{h}\right\vert ^{-1}$.

\begin{proposition}
\label{proposition_trace-est}Let $\mathcal{P}_{z}^{h}\left(  \mathcal{V}%
^{h}\right)  $ be defined by (\ref{P_z_def}), with $\mathcal{V}^{h}$
fulfilling the Condition \ref{condition_1} and $z_{j}^{h}$, $j=1,...\ell$,
denoting the corresponding cluster of shape resonances. The estimate%
\begin{equation}
\sup_{y=a,b}\left\vert \left(  \mathcal{P}_{z}^{h}\left(  \mathcal{V}%
^{h}\right)  -z\right)  ^{-1}u(y)\right\vert \lesssim\left(  \frac
{e^{-\frac{S_{0}}{h}}}{h^{\frac{3}{2}}}\left(  \sup_{j\leq\ell}\left\vert
\operatorname{Im}z_{j}^{h}\right\vert ^{-1}\right)  +\frac{1}{h^{\frac{1}{2}}%
}\right)  \left\Vert u\right\Vert _{L^{2}\left(  \left(  a,b\right)  \right)
}\,, \label{trace_est_grushin}%
\end{equation}
hold for any $z$ in the half disk $\mathcal{B}_{\frac{\Lambda_{2}-\Lambda_{1}%
}{2}}(\left(  \Lambda_{1}+\Lambda_{2}\right)  /2)\cap\overline{\mathbb{C}^{+}%
}$, with $\left[  \Lambda_{1},\Lambda_{2}\right]  $ having the property
(\ref{Lambda_0_around}).
\end{proposition}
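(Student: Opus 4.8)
The estimate \eqref{trace_est_grushin} must be obtained on the \emph{full} half disk, which I will split into the two overlapping regions already anticipated in the discussion preceding the statement: the small disk $\omega_{ch}$ around the resonant cluster, and its complement inside the half disk. On $\omega_{ch}\cap\overline{\mathbb{C}^{+}}$ the estimate is essentially immediate from Lemma~\ref{Lemma_Grushin}: applying the weighted resolvent bound \eqref{resolvent_grushin_est} with $\varphi=d_{Ag}(\cdot,U,\mathcal{V},\lambda^0)$, and noting that $\varphi\geq 0$ so that $e^{\varphi/h}\geq 1$ while $\sup_{[a,b]}\varphi=S_0$ gives $e^{\varphi/h}\big|_{\{a,b\}}=e^{S_0/h}$ (the Agmon distance from $\{a,b\}$ to $U$ is exactly $S_0$ — here I should double-check whether $a,b$ attain the sup or whether $\varphi$ is monotone toward the boundary, but in the quantum-wells geometry $U\subset\subset(a,b)$ this holds). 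Then the $H^{1,h}$ bound \eqref{resolvent_grushin_est} combined with the $h$-weighted Gagliardo–Nirenberg trace inequality \eqref{h_GN}, $h^{1/2}\sup_{[a,b]}|v|\lesssim\|v\|_{H^{1,h}}$, yields
\[
h^{1/2}\,e^{S_0/h}\sup_{y=a,b}\bigl|(\mathcal{P}_z^h(\mathcal{V}^h)-z)^{-1}u(y)\bigr|\lesssim\Bigl(\tfrac1h\sup_{j}|\operatorname{Im}z_j^h|^{-1}+e^{K_{a,b,c}\eta/h}\Bigr)\|u\|_{L^2},
\]
and dividing by $h^{1/2}e^{S_0/h}$ produces the first term of \eqref{trace_est_grushin} (absorbing $e^{K_{a,b,c}\eta/h}e^{-S_0/h}$ into the constant, since $\eta$ is arbitrarily small and $S_0>0$) plus the $h^{-1/2}$ term.

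\textbf{The region away from the cluster.} For $z$ in the half disk with $d(z,\{\lambda_j^h\})\geq ch$, I would use a Grushin/perturbation comparison with the \emph{filled-wells} operator $\mathcal{P}_z^h(V)$, for which the clean energy estimate \eqref{P_z_est} gives a uniform $H^{1,h}$ bound whenever $V-\operatorname{Re}z\geq c$, which holds on $[\Lambda_1,\Lambda_2]$ by \eqref{Lambda_0_around} and assumption (i) of Condition~\ref{condition_1}. The resolvent of $\mathcal{P}_z^h(\mathcal{V}^h)$ is again represented via \eqref{resolvent_grushin}; the holomorphic part $E(z)$ obeys \eqref{est_grushin0} unconditionally, and the singular part $E^+(E^{-+})^{-1}E^-$ is now controlled because, outside $\omega_{ch}$, $E^{-+}(z)=\mathrm{diag}(z-z_j^h)+O(h^{-3}e^{-2S_0/h})$ has all eigenvalues bounded below by $\sim ch$ in modulus, so $\|(E^{-+})^{-1}\|\lesssim (ch)^{-1}$. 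Feeding this into the same $H^{1,h}$ estimate as in Lemma~\ref{Lemma_Grushin} gives $\|(\mathcal{P}_z^h(\mathcal{V}^h)-z)^{-1}u\|_{H^{1,h}}\lesssim h^{-1}\|u\|_{L^2}$ on this region (indeed better, but $h^{-1}$ is more than enough), and the trace inequality \eqref{h_GN} then yields $\sup_{y=a,b}|(\mathcal{P}_z^h-z)^{-1}u(y)|\lesssim h^{-3/2}\|u\|_{L^2}$, which is dominated by the $h^{-1/2}$ term of \eqref{trace_est_grushin} — in fact I expect a sharper $h^{-1/2}$ directly, since away from resonances the weight $e^{\varphi/h}$ is not needed and one can run \eqref{P_z_est}-type estimates for $\mathcal{P}_z^h(\mathcal{V}^h)$ perturbatively. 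I will present it at the crude level first and note the improvement.

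\textbf{Matching and the main obstacle.} The two bounds are then combined: on the overlap both are valid, and since $\sup_j|\operatorname{Im}z_j^h|^{-1}\gtrsim e^{2S_0/h}$ by Remark~\ref{remark_resonances} (the imaginary parts are at most $h^{-3}e^{-2S_0/h}$), the first term in \eqref{trace_est_grushin} is in fact the genuinely large one near the cluster, so the stated bound holds throughout the half disk. The delicate point I expect to have to argue carefully is \emph{not} the $\omega_{ch}$ region — that is a direct corollary of Lemma~\ref{Lemma_Grushin} — but rather verifying that the Grushin data \eqref{est_grushin1}–\eqref{est_grushin2}, which in \cite{FMN2} are stated uniformly only on $\omega_{ch}$, remain valid (or can be re-derived) on the \emph{entire} half disk $\mathcal{B}_{(\Lambda_2-\Lambda_1)/2}((\Lambda_1+\Lambda_2)/2)\cap\overline{\mathbb{C}^+}$; away from the cluster this should follow from the filled-wells estimate \eqref{P_z_est} together with the fact that $W^h$ is an $O(1)$ perturbation supported on a set of measure $O(h)$, so that $\|W^h(\mathcal{P}_z^h(V)-z)^{-1}\|_{\mathcal{L}(L^2)}\to 0$ after the weighted trace bound for $\Phi_j^h$ is invoked — but this uniformity claim is the step that needs the most care and is where I would spend the bulk of the proof.
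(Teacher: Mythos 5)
Your treatment of the region $\omega_{ch}$ is essentially the paper's: apply Lemma~\ref{Lemma_Grushin} to get the weighted $H^{1,h}$ bound, use the trace inequality (\ref{h_GN}), and divide by $e^{S_0/h}$ using $\sup_{\{a,b\}}\varphi=S_0$. The gap is in the complementary region of the half disk, where you propose to either re-derive the Grushin data (\ref{est_grushin1})--(\ref{est_grushin2}) off $\omega_{ch}$, or run (\ref{P_z_est})-type energy estimates for $\mathcal{P}_z^h(\mathcal{V}^h)$ perturbatively; you explicitly flag this as the unfinished part. Neither route is what the paper does, and the second one fails outright: the energy estimate (\ref{P_z_est}) requires $\mathcal{V}-\operatorname{Re}\zeta^2\geq c>0$ pointwise, and this is violated inside the wells where $\mathcal{V}^h$ is attractive, so you cannot apply it to $\mathcal{V}^h$ directly. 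The paper's actual mechanism bypasses both difficulties with the decomposition
\[
\bigl(\mathcal{P}_z^h(\mathcal{V}^h)-z\bigr)^{-1}u=\bigl(Q_D^h(\mathcal{V}^h)-z\bigr)^{-1}\chi_h u+\bigl(\mathcal{P}_z^h(V)-z\bigr)^{-1}(1-\chi_h)u ,
\]
with $\chi_h$ as in (\ref{chi_h}): the Dirichlet piece is bounded because $z\notin\omega_{ch}$ keeps $z$ at distance $\gtrsim ch$ from $\sigma(Q_D^h(\mathcal{V}^h))$, and it contributes nothing to the trace since it vanishes at $a,b$; the remaining filled-wells piece involves only $V$, for which $V-\operatorname{Re}z\geq c$ does hold on the half disk, so (\ref{P_z_est}) and then (\ref{h_GN}) give the clean $h^{-1/2}$ boundary bound. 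The idea you are missing is this Dirichlet/filled-wells splitting, which replaces any extension of the Grushin data.

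Two smaller points. First, you write that your crude $h^{-3/2}$ trace bound outside $\omega_{ch}$ "is dominated by the $h^{-1/2}$ term" of (\ref{trace_est_grushin}); that is false for small $h$. It is in fact absorbed by the \emph{first} term of (\ref{trace_est_grushin}), and only after invoking $|\operatorname{Im}z_j^h|^{-1}\gtrsim h^3 e^{2S_0/h}$ from Remark~\ref{remark_resonances} (your $\gtrsim e^{2S_0/h}$ drops the $h^3$); relying on this absorption is avoidable once the decomposition above is used, since it yields $h^{-1/2}$ directly. Second, the references you'd need to extend the Grushin estimates to the whole half disk live in \cite{FMN2} and are stated there uniformly only on $\omega_{ch}$, so that route would require genuinely new estimates rather than a citation, which reinforces that the paper's decomposition is the intended (and much shorter) argument.
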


\begin{proof}
Let consider at first the case $z\in\omega_{ch}\cap\left\{  \mathcal{B}%
_{\frac{\Lambda_{2}-\Lambda_{1}}{2}}(\left(  \Lambda_{1}+\Lambda_{2}\right)
/2)\cap\overline{\mathbb{C}^{+}}\right\}  $. From the result of Lemma
\ref{Lemma_Grushin} and the inequality (\ref{h_GN}) we have%
\[
\sup_{y=a,b}\left\vert e^{\frac{\varphi}{h}}\left(  \mathcal{P}_{z}^{h}\left(
\mathcal{V}^{h}\right)  -z\right)  ^{-1}u\right\vert \leq C_{a,b,c}\left(
\frac{1}{h^{\frac{3}{2}}}\left(  \sup_{j\leq\ell}\left\vert \operatorname{Im}%
z_{j}^{h}\right\vert ^{-1}\right)  +\frac{e^{\frac{K_{a,b,c}\eta}{h}}%
}{h^{\frac{1}{2}}}\right)  \left\Vert u\right\Vert _{L^{2}\left(  \left[
a,b\right]  \right)  }\,,
\]
where $\varphi\left(  \cdot\right)  =d_{Ag}\left(  \cdot,U,\mathcal{V}%
,\lambda^{0}\right)  $, and $\eta>0$ is arbitrarily small. Since
$\sup_{\left\{  a,b\right\}  }\varphi=S_{0}$, it follows%
\[
\sup_{y=a,b}\left\vert \left(  \mathcal{P}_{z}^{h}\left(  \mathcal{V}%
^{h}\right)  -z\right)  ^{-1}u\right\vert \leq C_{a,b,c}\left(  \frac
{e^{-\frac{S_{0}}{h}}}{h^{\frac{3}{2}}}\left(  \sup_{j\leq\ell}\left\vert
\operatorname{Im}z_{j}^{h}\right\vert ^{-1}\right)  +\frac{e^{-\frac
{S_{0}-K_{a,b,c}\eta}{h}}}{h^{\frac{1}{2}}}\right)  \left\Vert u\right\Vert
_{L^{2}\left(  \left[  a,b\right]  \right)  }\,.
\]
This yields the relation (\ref{trace_est_grushin}) in the case $z\in
\omega_{ch}\cap\left\{  \mathcal{B}_{\frac{\Lambda_{2}-\Lambda_{1}}{2}%
}(\left(  \Lambda_{1}+\Lambda_{2}\right)  /2)\cap\overline{\mathbb{C}^{+}%
}\right\}  $.

For $z\in\left\{  \mathcal{B}_{\frac{\Lambda_{2}-\Lambda_{1}}{2}}\left(
\left(  \Lambda_{1}+\Lambda_{2}\right)  /2\right)  \cap\overline
{\mathbb{C}^{+}}\right\}  \backslash\omega_{ch}$, we use the representation%
\begin{equation}
\left(  \mathcal{P}_{z}^{h}\left(  \mathcal{V}^{h}\right)  -z\right)
^{-1}u=\left(  Q_{D}^{h}(\mathcal{V}^{h})-z\right)  ^{-1}\chi_{h}u+\left(
\mathcal{P}_{z}^{h}\left(  V\right)  -z\right)  ^{-1}\left(  1-\chi
_{h}\right)  u\,,
\end{equation}
where $\chi_{h}$ is the cutoff function defined in (\ref{chi_h}). According to
the definition (\ref{omega_delta}) and the condition (\ref{Lambda_0_around}),
the Dirichlet operator $Q_{D}^{h}(\mathcal{V}^{h})$ has no spectrum in
$\mathcal{B}_{\frac{\Lambda_{2}-\Lambda_{1}}{2}}\left(  \left(  \Lambda
_{1}+\Lambda_{2}\right)  /2\right)  \backslash\omega_{ch}$ and $\left(
Q_{D}^{h}(\mathcal{V}^{h})-z\right)  ^{-1}$ is well defined as a bounded
operator on $L^{2}\left(  \left(  a,b\right)  \right)  $ with values in
$D\left(  Q_{D}^{h}(\mathcal{V}^{h})\right)  $. In particular, this implies%
\[
\left.  \left(  \mathcal{P}_{z}^{h}\left(  \mathcal{V}^{h}\right)  -z\right)
^{-1}u\left(  y\right)  \right\vert _{y=a,b}=\left.  \left(  \mathcal{P}%
_{z}^{h}\left(  V\right)  -z\right)  ^{-1}\left(  1-\chi_{h}\right)  u\left(
y\right)  \right\vert _{y=a,b}\,,
\]
being $\mathcal{P}_{z}^{h}\left(  V\right)  $ corresponding to the 'filled
well' situation. The points of the domain $\left\{  \mathcal{B}_{\frac
{\Lambda_{2}-\Lambda_{1}}{2}}\left(  \left(  \Lambda_{1}+\Lambda_{2}\right)
/2\right)  \cap\overline{\mathbb{C}^{+}}\right\}  \backslash\omega_{ch}$
verify: $\inf_{\left[  a,b\right]  }V-\operatorname{Re}z\geq c>0$, and
$\operatorname{Im}z>0$. Under these conditions, the resolvent $\left(
\mathcal{P}_{z}^{h}\left(  V\right)  -z\right)  ^{-1}$ allows the energy
estimates (\ref{P_z_est}) which entails (see (\ref{h_GN}))%
\begin{equation}
\sup_{\left[  a,b\right]  }\left\vert \left(  \mathcal{P}_{z}^{h}\left(
V\right)  -z\right)  ^{-1}f\right\vert \lesssim\frac{1}{h^{\frac{1}{2}}%
}\left\Vert f\right\Vert _{L^{2}\left(  \left(  a,b\right)  \right)  }\,.
\end{equation}
For $f=\left(  1-\chi_{h}\right)  u$, it follows%
\begin{equation}
\sup_{y=a,b}\left\vert \left(  \mathcal{P}_{z}^{h}\left(  \mathcal{V}%
^{h}\right)  -z\right)  ^{-1}u\left(  y\right)  \right\vert \lesssim\frac
{1}{h^{\frac{1}{2}}}\left\Vert u\right\Vert _{L^{2}\left(  \left(  a,b\right)
\right)  }\,.
\end{equation}
This leads to the relation (\ref{trace_est_grushin}) in this case.
\end{proof}

\begin{remark}
For multiples quantum wells models, $\left\vert \operatorname{Im}z_{j}%
^{h}\right\vert ^{-1}$ is expected to grow exponentially w.r.t. $h$ according
to (see (\ref{Fermi_golden_0})). In this case, the boundary values of $\left(
\mathcal{P}_{z}^{h}\left(  \mathcal{V}^{h}\right)  -z\right)  ^{-1}u$ are
dominated by $e^{\frac{S_{0}+2S_{U}+\eta}{h}}$.
\end{remark}

\subsection{\label{Section_traceest}Trace estimates in the quantum wells case}

We are now in the position of deducing trace estimates for the Green's
functions and the generalized eigenfunctions related to $Q_{0,0}%
^{h}(\mathcal{V}^{h})$ in the quantum well case.

\begin{proposition}
\label{Proposition_trace_est_h}Let $\mathcal{G}^{\zeta^{2},h}\left(
\cdot,y,\mathcal{V}^{h}\right)  $ and $\mathcal{H}^{\zeta^{2},h}\left(
\cdot,y,\mathcal{V}^{h}\right)  $ be defined by (\ref{Green_eq}%
)-(\ref{D_Green_eq}), with $\mathcal{V}^{h}=V+W^{h}$ fulfilling the Condition
\ref{condition_1}, and further assume the energy interval $\left[  \Lambda
_{1},\Lambda_{2}\right]  $ to have the property (\ref{Lambda_0_around}). Then,
the relations%
\begin{align}
\sup\limits_{y,y^{\prime}=a,b}\left\vert 1_{\left[  a,b\right]  }%
\mathcal{G}^{\zeta^{2},h}\left(  y,y^{\prime},\mathcal{V}^{h}\right)
\right\vert  &  \leq C_{a,b,c}\left(  h^{-2}+h^{-3}e^{-\frac{2S_{0}}{h}%
}\left(  \sup_{j\leq\ell}\left\vert \operatorname{Im}z_{j}^{h}\right\vert
^{-1}\right)  \right)  \,,\label{Trace_est1_1}\\
& \nonumber\\
\sup\limits_{y,y^{\prime}=a,b}\left\vert 1_{\left[  a,b\right]  }%
\mathcal{H}^{\zeta^{2},h}\left(  y,y^{\prime},\mathcal{V}^{h}\right)
\right\vert  &  \leq C_{a,b,c}\left(  h^{-3}+h^{-4}e^{-\frac{2S_{0}}{h}%
}\left(  \sup_{j\leq\ell}\left\vert \operatorname{Im}z_{j}^{h}\right\vert
^{-1}\right)  \right)  \,,\label{Trace_est1_2}\\
& \nonumber\\
\sup\limits_{y,y^{\prime}=a,b}\left\vert 1_{\left[  a,b\right]  }\partial
_{1}\mathcal{H}^{\zeta^{2},h}\left(  y,y^{\prime},\mathcal{V}^{h}\right)
\right\vert  &  \leq C_{a,b,c}\left(  h^{-4}+h^{-5}e^{-\frac{2S_{0}}{h}%
}\left(  \sup_{j\leq\ell}\left\vert \operatorname{Im}z_{j}^{h}\right\vert
^{-1}\right)  \right)  \,, \label{Trace_est1_3}%
\end{align}
hold for $\zeta^{2}\in\mathcal{B}_{\frac{\Lambda_{2}-\Lambda_{1}}{2}}(\left(
\Lambda_{1}+\Lambda_{2}\right)  /2)\cap\overline{\mathbb{C}^{+}}$, being
$C_{a,b,c}>0$ possibly depending on the data. Moreover, let $\psi_{0,0}%
^{h}(\cdot,k,\mathcal{V}^{h})$ be defined by (\ref{gen_eigenfun_eq}%
)-(\ref{gen_eigenfun_ext2}) and $k^{2}\in\left[  \Lambda_{1},\Lambda
_{2}\right]  $; then%
\begin{equation}
\sup\limits_{y,y^{\prime}=a,b}\left\vert 1_{\left[  a,b\right]  }\psi
_{0,0}^{h}(\cdot,k,\mathcal{V}^{h})\right\vert \leq C_{a,b,c}\left(
h^{-1}+h^{-2}e^{-\frac{2S_{0}}{h}}\left(  \sup_{j\leq\ell}\left\vert
\operatorname{Im}z_{j}^{h}\right\vert ^{-1}\right)  \right)  \,.
\label{Trace_est2}%
\end{equation}

\end{proposition}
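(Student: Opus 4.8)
The plan is to reduce all four estimates to the trace estimate of Proposition~\ref{proposition_trace-est} by comparing $\mathcal{V}^{h}=V+W^{h}$ with the ``filled wells'' potential $V$. Fix $\zeta$ with $\zeta^{2}\in\mathcal{B}_{\frac{\Lambda_{2}-\Lambda_{1}}{2}}((\Lambda_{1}+\Lambda_{2})/2)\cap\overline{\mathbb{C}^{+}}$ and set $z=\zeta^{2}$, $(z)^{1/2}=\zeta$. First I would note that, for $y\in\{a,b\}$, the restriction to $[a,b]$ of $x\mapsto\mathcal{G}^{z,h}(x,y,\mathcal{V}^{h})$ solves the homogeneous equation $(-h^{2}\partial_{x}^{2}+\mathcal{V}^{h}-z)u=0$ on $(a,b)$ together with the Robin conditions of (\ref{P_z_def}): the \emph{homogeneous} one at the endpoint different from $y$ — this is exactly the $C^{1}$--matching with the exterior outgoing mode $e^{\pm i\zeta x/h}$ built into $Q_{0,0}^{h}(\mathcal{V}^{h})$ — and the \emph{inhomogeneous} one at the endpoint equal to $y$, with datum of size $\mathcal{O}(h^{-1})$ read off from the jump relations in (\ref{Green_eq}). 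Analogously $x\mapsto\mathcal{H}^{z,h}(x,y,\mathcal{V}^{h})$ solves the same equation with inhomogeneous datum of size $\mathcal{O}(h^{-2})$ (the jump in (\ref{D_Green_eq}) is one of the \emph{value}), and $x\mapsto\psi_{0,0}^{h}(x,k,\mathcal{V}^{h})$ solves it at energy $k^{2}$ with datum of size $\mathcal{O}(1)$ — the incoming amplitude in (\ref{gen_eigenfun_ext1})--(\ref{gen_eigenfun_ext2}) — at one endpoint and homogeneous datum at the other. Since $\mathcal{G}^{z,h}(\cdot,y)$ and $\partial_{1}\mathcal{H}^{z,h}(\cdot,y)$ are continuous across $y$, and the two one--sided limits of $\mathcal{H}^{z,h}(\cdot,y)$ at $y$ differ only by $\pm\tfrac{1}{2h^{2}}$, the left--hand sides of (\ref{Trace_est1_1})--(\ref{Trace_est2}) are, up to an innocuous additive constant, the traces at $\{a,b\}$ of these interior solutions (resp.\ of their $x$--derivative for (\ref{Trace_est1_3})).

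The key step is the comparison itself. Since $\mathcal{P}_{z}^{h}(\mathcal{V}^{h})$ and $\mathcal{P}_{z}^{h}(V)$ share the domain of (\ref{P_z_def}) and $\mathcal{P}_{z}^{h}(\mathcal{V}^{h})=\mathcal{P}_{z}^{h}(V)+W^{h}$, the interior solution $u$ above and the solution $u_{V}$ of the \emph{same} boundary value problem with $V$ in place of $\mathcal{V}^{h}$ (namely $G_{V}^{z,h}(\cdot,y)$, $H_{V}^{z,h}(\cdot,y)$, or $\psi_{0,0}^{h}(\cdot,k,V)$) satisfy
\begin{equation}
u=u_{V}-\bigl(\mathcal{P}_{z}^{h}(\mathcal{V}^{h})-z\bigr)^{-1}\bigl[W^{h}u_{V}\bigr]\,,\label{filledwellsreduction}
\end{equation}
valid because $u-u_{V}$ solves the equation with right--hand side $-W^{h}u_{V}$ and homogeneous boundary data, and because $(\mathcal{P}_{z}^{h}(\mathcal{V}^{h})-z)^{-1}$ is holomorphic on the half--disk intersected with $\overline{\mathbb{C}^{+}}$ (the resonances $z_{j}^{h}$ have $\operatorname{Im}z_{j}^{h}<0$, see Remark~\ref{remark_resonances}). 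Now $W^{h}u_{V}$ is supported in $\supp W^{h}$, which by (\ref{V_h}) is an $h$--neighbourhood of $U$ and hence lies at Agmon distance $\ge S_{0}-Ch$ from $\{a,b\}$ in the metric $\sqrt{(V-\operatorname{Re}z)_{+}}$, nondegenerate on $[a,b]$ because $V-\operatorname{Re}z\ge c>0$ by (\ref{H1}). By the semiclassical exponential (Agmon) decay estimates for the filled--wells resolvent and generalized eigenfunctions in the classically forbidden region (see \cite{FMN2}), $|G_{V}^{z,h}(x,y)|\lesssim h^{-1}e^{-d_{Ag}(x,y,V,\operatorname{Re}z)/h}$ on $[a,b]^{2}$, with one further power of $h^{-1}$ when passing to $\partial_{1}G_{V}$ and another when passing from $G_{V}$ to $H_{V}$, while $|\psi_{0,0}^{h}(x,k,V)|\lesssim 1$ on $[a,b]$; since $|\supp W^{h}|\sim h$, this gives $\|W^{h}G_{V}^{z,h}(\cdot,b)\|_{L^{2}}\lesssim h^{-1/2}e^{-S_{0}/h}$, $\|W^{h}H_{V}^{z,h}(\cdot,b)\|_{L^{2}}\lesssim h^{-3/2}e^{-S_{0}/h}$ and $\|W^{h}\psi_{0,0}^{h}(\cdot,k,V)\|_{L^{2}}\lesssim h^{1/2}e^{-S_{0}/h}$, the $\mathcal{O}(h)$ loss in the exponent costing only a constant.

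To conclude, I would apply Proposition~\ref{proposition_trace-est} to the last term of (\ref{filledwellsreduction}): for a source $f$ it yields $\sup_{y=a,b}|(\mathcal{P}_{z}^{h}(\mathcal{V}^{h})-z)^{-1}f(y)|\lesssim\bigl(h^{-3/2}e^{-S_{0}/h}\sup_{j\le\ell}|\operatorname{Im}z_{j}^{h}|^{-1}+h^{-1/2}\bigr)\|f\|_{L^{2}}$, which supplies the \emph{second} factor $e^{-S_{0}/h}$ and the weight $\sup_{j}|\operatorname{Im}z_{j}^{h}|^{-1}$; combined with the filled--wells corner values $|G_{V}^{z,h}(y,b)|\lesssim h^{-1}$, $|H_{V}^{z,h}(y,b)|\lesssim h^{-2}$, $|\psi_{0,0}^{h}(y,k,V)|\lesssim 1$, this produces (\ref{Trace_est1_1}), (\ref{Trace_est1_2}) and (\ref{Trace_est2}). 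For (\ref{Trace_est1_3}) I would add that $u=\mathcal{H}^{z,h}(\cdot,b,\mathcal{V}^{h})|_{[a,b]}$ satisfies $h^{2}u''=(\mathcal{V}^{h}-z)u$, hence $\|u\|_{H^{2,h}([a,b])}\lesssim\|u\|_{H^{1,h}}+\|u\|_{L^{2}}$, so applying (\ref{h_GN}) to $h\partial_{x}u$ gains exactly one extra power of $h^{-1}$ over the bound for $\mathcal{H}$, which is (\ref{Trace_est1_3}). The decisive step — and the place where $S_{0}$ enters — is the sharp Agmon decay of the filled--wells Green's functions and eigenfunctions on $\supp W^{h}$ at the rate $e^{-S_{0}/h}$; this is standard one--dimensional WKB/Agmon material, also underlying the Grushin estimates (\ref{est_grushin1})--(\ref{est_grushin2}), but it requires care on the correct power of $h^{-1}$ and on the $h$--neighbourhood of $U$, while the rest is bookkeeping of the boundary--data sizes. (One could instead bypass the filled--wells step and read the traces directly off the Grushin formula (\ref{resolvent_grushin}), using that the leading templates $\chi_{h}E_{0}^{\pm}$ vanish at $\{a,b\}$ so that both corner traces of $E^{\pm}$ come from the $\mathcal{O}(e^{-S_{0}/h}/h)$ corrections; the bookkeeping there is heavier.)
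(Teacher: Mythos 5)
Your approach is essentially the paper's: you compare $\mathcal{V}^h$ with the filled--wells potential $V$ through the identity $u=u_V-(\mathcal{P}_z^h(\mathcal{V}^h)-z)^{-1}[W^hu_V]$, which is exactly (\ref{v1})--(\ref{v2}), apply Proposition \ref{proposition_trace-est} to the perturbation term, and use Agmon decay of the filled--wells quantities on $\operatorname{supp} W^h$ to extract the factor $e^{-S_0/h}$. Two points, however, deserve care.

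First, the Agmon prefactors you quote are one power of $h$ sharper than what Lemma \ref{Lemma_H1_est} and Lemma \ref{Lemma_H1_est1} actually deliver, and these are the tools the paper uses. Lemma \ref{Lemma_H1_est}(i) gives $\sup_{[a,b]}\bigl|e^{\varphi/h}\mathcal{G}^{\zeta^{2},h}(\cdot,y,V)\bigr|\le C h^{-2}e^{\varphi(y)/h}$ (not $h^{-1}$), so $\|W^h\mathcal{G}^{\zeta^{2},h}(\cdot,y,V)\|_{L^2}\lesssim h^{-3/2}e^{-S_0/h}$, not $h^{-1/2}e^{-S_0/h}$; likewise $\|W^h\mathcal{H}_V\|_{L^2}\lesssim h^{-5/2}e^{-S_0/h}$ and $\|W^h\psi_{0,0}^h(\cdot,k,V)\|_{L^2}\lesssim h^{-1/2}e^{-S_0/h}$. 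These are precisely the prefactors that, combined with (\ref{trace_est_grushin}), produce the powers $h^{-2},h^{-3},h^{-4},h^{-1}$ stated in (\ref{Trace_est1_1})--(\ref{Trace_est2}). The extra $h^{-1/2}$ you drop comes from the $h$--trace inequality (\ref{h_GN}) entering the sup bound. The sharper pointwise decay you invoke ($h^{-1}e^{-d_{Ag}/h}$ for $G_V$, etc.) is not a consequence of the paper's lemmas; the Jost representation (\ref{G_k_jost}) together with $\chi_\pm^h=\mathcal{O}(1)$ gives the pointwise size $\mathcal{O}(h^{-1})$, but not the Agmon decay rate, and you would have to establish that $\chi_\pm^h$ themselves decay at rate $e^{-d_{Ag}/h}$, which is nowhere proven and is far from obvious for a Jost solution entering a barrier.

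Second, for (\ref{Trace_est1_3}) the computation requires an interior $H^{1,h}$--norm control on $\mathcal{H}^{\zeta^2,h}(\cdot,y,\mathcal{V}^h)$ over $[a,b]$ (so that the differential equation bounds $\|h\,\partial_1\mathcal{H}\|_{H^{1,h}}$ and (\ref{h_GN}) can then be applied to $h\,\partial_1\mathcal{H}$). Proposition \ref{proposition_trace-est} only controls the boundary trace at $\{a,b\}$, not the norm, so it cannot be used directly here. The paper obtains that norm estimate by going back to the weighted resolvent bound (\ref{resolvent_grushin_est}) of Lemma \ref{Lemma_Grushin} when $\zeta^2\in\omega_{ch}$ and by a Dirichlet--resolvent splitting (giving (\ref{H_decay4})) when $\zeta^2\notin\omega_{ch}$; this case distinction is why the proof of (\ref{Trace_est1_3}) is longer than those of (\ref{Trace_est1_1})--(\ref{Trace_est1_2}). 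Your sketch elides this step, which is the genuinely delicate part of the proof.
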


\begin{proof}
Let consider at first the estimates (\ref{Trace_est1_1})-(\ref{Trace_est1_3}).
We focus on the case $y^{\prime}=a$; the result in the case $y^{\prime}=b$
follows by similar computations. Concerning the estimate (\ref{Trace_est1_1}),
we recall that the function $\mathcal{G}^{\zeta^{2},h}$ solves in $\left[
a,b\right]  $ a boundary values problem of the type (\ref{Ag_eq}), depending
on the potential and having, for $y^{\prime}=a$: $\gamma_{a}=-\frac{1}{h}$ and
$\gamma_{b}=0$. Introducing the auxiliary function: $v=\mathcal{G}^{\zeta
^{2},h}\left(  \cdot,a,\mathcal{V}^{h}\right)  -\mathcal{G}^{\zeta^{2}%
,h}\left(  \cdot,a,V\right)  $, we get%
\begin{equation}
v=-\left(  \mathcal{P}_{\zeta^{2}}^{h}\left(  \mathcal{V}^{h}\right)
-\zeta^{2}\right)  ^{-1}\left(  W^{h}\mathcal{G}^{\zeta^{2},h}\left(
\cdot,a,V\right)  \right)  \,, \label{v1}%
\end{equation}
and, according to the result of the Proposition \ref{proposition_trace-est},
the estimate%
\[
\sup_{y=a,b}\left\vert v(y)\right\vert \lesssim\left(  \frac{e^{-\frac{S_{0}%
}{h}}}{h^{\frac{3}{2}}}\left(  \sup_{j\leq\ell}\left\vert \operatorname{Im}%
z_{j}^{h}\right\vert ^{-1}\right)  +\frac{1}{h^{\frac{1}{2}}}\right)
\left\Vert W^{h}\mathcal{G}^{\zeta^{2},h}\left(  \cdot,a,V\right)  \right\Vert
_{L^{2}\left(  \left(  a,b\right)  \right)  }\,,
\]
follows. Since the points $\zeta^{2}\in\mathcal{B}_{\frac{\Lambda_{2}%
-\Lambda_{1}}{2}}(\left(  \Lambda_{1}+\Lambda_{2}\right)  /2)\cap
\overline{\mathbb{C}^{+}}$ fulfill the condition: $\inf_{\left[  a,b\right]
}V-\operatorname{Re}\zeta^{2}>c$, the Lemma \ref{Lemma_H1_est} applies to our
case. Using the estimate $(i)$ of this Lemma with $\varphi\left(
\cdot\right)  =d_{Ag}\left(  \cdot,a,V,\operatorname{Re}\zeta^{2}\right)  $,
we obtain%
\begin{equation}
\left\Vert W^{h}\mathcal{G}^{\zeta^{2},h}\left(  \cdot,a,V\right)  \right\Vert
_{L^{2}\left(  \left(  a,b\right)  \right)  }\leq C_{a,b,c}h^{-\frac{3}{2}%
}e^{-\frac{S_{0}}{h}}\,. \label{G_decay1}%
\end{equation}
Replacing this expression into the previous inequality, it results%
\[
\sup_{y=a,b}\left\vert v(y)\right\vert \leq C_{a,b,c}\left(  h^{-3}%
e^{-\frac{2S_{0}}{h}}\left(  \sup_{j\leq\ell}\left\vert \operatorname{Im}%
z_{j}^{h}\right\vert ^{-1}\right)  +\mathcal{O}\left(  e^{-\frac{S_{0}-\eta
}{h}}\right)  \right)  \,.
\]
Since $\mathcal{G}^{\zeta^{2},h}\left(  \cdot,a,\mathcal{V}^{h}\right)
=v+\mathcal{G}^{\zeta^{2},h}\left(  \cdot,a,V\right)  $, this yields%
\[
\sup_{y=a,b}\left\vert \mathcal{G}^{\zeta^{2},h}\left(  y,a,\mathcal{V}%
^{h}\right)  \right\vert \leq\sup_{y=a,b}\left\vert \mathcal{G}^{\zeta^{2}%
,h}\left(  y,a,V\right)  \right\vert +C_{a,b,c}\left(  h^{-3}e^{-\frac{2S_{0}%
}{h}}\left(  \sup_{j\leq\ell}\left\vert \operatorname{Im}z_{j}^{h}\right\vert
^{-1}\right)  +\mathcal{O}\left(  e^{-\frac{S_{0}-\eta}{h}}\right)  \right)
\,,
\]
and using once more the estimate $(i)$ in Lemma \ref{Lemma_H1_est}, the
estimate (\ref{Trace_est1_1}) follows.

Next we consider the function $\mathcal{H}^{\zeta^{2},h}\left(  \cdot
,a,\mathcal{V}^{h}\right)  $; this solves an interior problem of the type
(\ref{Ag_eq}) with: $\gamma_{a}=\frac{i\zeta}{h^{2}}$ and $\gamma_{b}=0$.
Introducing the function: $v=\mathcal{H}^{\zeta^{2},h}\left(  \cdot
,a,\mathcal{V}^{h}\right)  -\mathcal{H}^{\zeta^{2},h}\left(  \cdot,a,V\right)
$, we have%
\begin{equation}
v=-\left(  \mathcal{P}_{\zeta^{2}}^{h}\left(  \mathcal{V}^{h}\right)
-\zeta^{2}\right)  ^{-1}\left(  W^{h}\mathcal{H}^{\zeta^{2},h}\left(
\cdot,a,V\right)  \right)  \,. \label{v2}%
\end{equation}
Then, (\ref{Trace_est1_2}) follows, as before, from the result of the
Proposition \ref{Lemma_Grushin} and the estimate $(ii)$ in Lemma
\ref{Lemma_H1_est} by using the fact that, according to our assumptions,
$\left\vert \zeta\right\vert $ is bounded.

For the trace estimates of $\partial_{1}\mathcal{H}^{\zeta^{2},h}\left(
\cdot,a,\mathcal{V}^{h}\right)  $, we discuss separately the cases $\zeta
^{2}\in\omega_{ch}$ and $\zeta^{2}\notin\omega_{ch}$. Let $\zeta^{2}\in
\omega_{ch}\cap\left\{  \mathcal{B}_{\frac{\Lambda_{2}-\Lambda_{1}}{2}%
}(\left(  \Lambda_{1}+\Lambda_{2}\right)  /2)\cap\overline{\mathbb{C}^{+}%
}\right\}  $; the result of the Lemma \ref{Lemma_Grushin} applies to the
function (\ref{v2}) and, taking into account (\ref{h_GN}), we have%
\begin{equation}
\left\Vert e^{\frac{\varphi}{h}}v\right\Vert _{H^{1,h}\left(  \left[
a,b\right]  \right)  }\leq C_{a,b,c}\left(  \frac{1}{h}\left(  \sup_{j\leq
\ell}\left\vert \operatorname{Im}z_{j}^{h}\right\vert ^{-1}\right)
+e^{\frac{K_{a,b,c}\eta}{h}}\right)  \left\Vert W^{h}\mathcal{H}^{\zeta^{2}%
,h}\left(  \cdot,a,V\right)  \right\Vert _{L^{2}\left(  \left[  a,b\right]
\right)  }\,, \label{H_decay0.0}%
\end{equation}
where $\varphi\left(  \cdot\right)  =d_{Ag}\left(  \cdot,U,V,\lambda
^{0}\right)  $. Let us notice that, using the exponential weight fixed by the
distance $d_{Ag}\left(  \cdot,a,V,\operatorname{Re}\zeta^{2}\right)  $, the
second estimate in Lemma \ref{Lemma_H1_est} yields the inequalities%
\begin{equation}
\left\Vert W^{h}\mathcal{H}^{\zeta^{2},h}\left(  \cdot,a,V\right)  \right\Vert
_{L^{2}\left(  \left[  a,b\right]  \right)  }\leq C_{a,b,c}h^{-\frac{5}{2}%
}e^{-\frac{S_{0}}{h}}\,, \label{H_decay1}%
\end{equation}
and%
\begin{equation}
\left\Vert \mathcal{H}^{\zeta^{2},h}\left(  \cdot,a,V\right)  \right\Vert
_{H^{1,h}\left(  \left[  a,b\right]  \right)  }\leq C_{a,b,c}h^{-\frac{5}{2}%
}\,. \label{H_decay0}%
\end{equation}
Replacing (\ref{H_decay1}) into (\ref{H_decay0.0}), the previous estimate can
be rephrased as%
\[
\left\Vert e^{\frac{\varphi}{h}}v\right\Vert _{H^{1,h}\left(  \left[
a,b\right]  \right)  }\leq C_{a,b,c}\left(  \frac{e^{-\frac{S_{0}}{h}}%
}{h^{\frac{7}{2}}}\left(  \sup_{j\leq\ell}\left\vert \operatorname{Im}%
z_{j}^{h}\right\vert ^{-1}\right)  +\frac{e^{-\frac{S_{0}-K_{a,b,c}\eta}{h}}%
}{h^{\frac{5}{2}}}\right)  \,.
\]
Next, we recall that: $\mathcal{H}^{\zeta^{2},h}\left(  \cdot,a,\mathcal{V}%
^{h}\right)  =v+\mathcal{H}^{\zeta^{2},h}\left(  \cdot,a,V\right)  $, and use
(\ref{H_decay0}) to write
\begin{equation}
\left\Vert e^{\frac{\varphi}{h}}\mathcal{H}^{\zeta^{2},h}\left(
\cdot,a,\mathcal{V}^{h}\right)  \right\Vert _{H^{1,h}\left(  \left[
a,b\right]  \right)  }\leq C_{a,b,c}\left(  \frac{e^{\frac{S_{0}}{h}}%
}{h^{\frac{5}{2}}}+\frac{e^{-\frac{S_{0}}{h}}}{h^{\frac{7}{2}}}\left(
\sup_{j\leq\ell}\left\vert \operatorname{Im}z_{j}^{h}\right\vert ^{-1}\right)
+\frac{e^{-\frac{S_{0}-K_{a,b,c}\eta}{h}}}{h^{\frac{5}{2}}}\right)  \,.
\label{H_decay2}%
\end{equation}
In order to control the boundary values of $\partial_{1}\mathcal{H}^{\zeta
^{2},h}$, as $\zeta^{2}\in\omega_{ch}$, an $H^{1,h}$-estimate with exponential
weight for this function is needed. According to the equation: $\left(
-h^{2}\partial_{x}^{2}+\mathcal{V}^{h}-z\right)  \mathcal{H}^{z,h}\left(
\cdot,a,\mathcal{V}^{h}\right)  =0$, we have
\begin{equation}
\left\Vert he^{\frac{\varphi}{h}}\partial_{1}^{2}\mathcal{H}^{\zeta^{2}%
,h}\left(  \cdot,y,\mathcal{V}^{h}\right)  \right\Vert _{L^{2}\left(  \left[
a,b\right]  \right)  }=\frac{1}{h}\left\Vert \left(  \mathcal{V}^{h}-\zeta
^{2}\right)  e^{\frac{\varphi}{h}}\mathcal{H}^{\zeta^{2},h}\left(
\cdot,y,\mathcal{V}^{h}\right)  \right\Vert _{L^{2}\left(  \left[  a,b\right]
\right)  }\lesssim\frac{1}{h}\left\Vert e^{\frac{\varphi}{h}}\mathcal{H}%
^{\zeta^{2},h}\left(  \cdot,y,\mathcal{V}^{h}\right)  \right\Vert
_{L^{2}\left(  \left[  a,b\right]  \right)  }\,. \label{H_decay3}%
\end{equation}
Using (\ref{H_decay2}), (\ref{Agmon_est3}) and the equivalence of $\left\Vert
u\right\Vert _{H^{1,h}\left(  \left[  a,b\right]  \right)  }$ and $\left\Vert
hu^{\prime}\right\Vert _{L^{2}\left(  \left[  a,b\right]  \right)
}+\left\Vert u\right\Vert _{L^{2}\left(  \left[  a,b\right]  \right)  }$, we
obtain
\begin{equation}
\left\Vert e^{\frac{\varphi}{h}}\partial_{1}\mathcal{H}^{\zeta^{2},h}\left(
\cdot,a,\mathcal{V}^{h}\right)  \right\Vert _{H^{1,h}\left(  \left[
a,b\right]  \right)  }\leq C_{a,b,c}\left(  \frac{e^{\frac{S_{0}}{h}}%
}{h^{\frac{7}{2}}}+\frac{e^{-\frac{S_{0}}{h}}}{h^{\frac{9}{2}}}\left(
\sup_{j\leq\ell}\left\vert \operatorname{Im}z_{j}^{h}\right\vert ^{-1}\right)
+\frac{e^{-\frac{S_{0}-K_{a,b,c}\eta}{h}}}{h^{\frac{7}{2}}}\right)  \,,
\label{H_decay3_1}%
\end{equation}
Since $\sup_{\left\{  a,b\right\}  }\varphi=S_{0}$ and the trace is controlled
by the $H^{1,h}$-norm according to (\ref{h_GN}), this yields the inequality
(\ref{Trace_est1_3}) when $\zeta^{2}\in\omega_{ch}$. In the case $\zeta^{2}%
\in\left\{  \mathcal{B}_{\frac{\Lambda_{2}-\Lambda_{1}}{2}}(\left(
\Lambda_{1}+\Lambda_{2}\right)  /2)\cap\overline{\mathbb{C}^{+}}\right\}
\backslash\omega_{ch}$, we proceed as in Proposition
\ref{proposition_trace-est} by using the representation%
\begin{align*}
-v  &  =\left(  \mathcal{P}_{\zeta^{2}}^{h}\left(  \mathcal{V}^{h}\right)
-\zeta^{2}\right)  ^{-1}\left(  W^{h}\mathcal{H}^{\zeta^{2},h}\left(
\cdot,a,V\right)  \right) \\
& \\
&  =\left(  Q_{D}^{h}(\mathcal{V}^{h})-\zeta^{2}\right)  ^{-1}\chi_{h}\left(
W^{h}\mathcal{H}^{\zeta^{2},h}\left(  \cdot,a,V\right)  \right)  +\left(
\mathcal{P}_{\zeta^{2}}^{h}\left(  V\right)  -\zeta^{2}\right)  ^{-1}\left(
1-\chi_{h}\right)  \left(  W^{h}\mathcal{H}^{\zeta^{2},h}\left(
\cdot,a,V\right)  \right)  \,,
\end{align*}
where $\chi_{h}$ is the cutoff function defined in (\ref{chi_h}). Since
$W^{h}\mathcal{H}^{\zeta^{2},h}\left(  \cdot,a,V\right)  $ and $\left(
1-\chi_{h}\right)  $ have disjoint supports, this \ reduces to the identity%
\[
v=-\left(  Q_{D}^{h}(\mathcal{V}^{h})-\zeta^{2}\right)  ^{-1}\left(
W^{h}\mathcal{H}^{\zeta^{2},h}\left(  \cdot,a,V\right)  \right)  \,.
\]
Then the relation (\ref{H_decay1}) and standard resolvent estimates for
selfadjoint operators yield (e.g. in \cite{FMN2}, Proposition 4.1)%
\[
\left\Vert v\right\Vert _{H^{1,h}\left(  \left[  a,b\right]  \right)  }%
\leq\frac{C_{a,b,c}}{h^{\frac{7}{2}}}e^{-\frac{S_{0}}{h}}\left(  \frac
{1}{d\left(  \zeta^{2},\sigma\left(  Q_{D}^{h}(\mathcal{V}^{h})\right)
\right)  }+1\right)  \,.
\]
Recalling that: $d\left(  \zeta^{2},\sigma\left(  Q_{D}^{h}(\mathcal{V}%
^{h})\right)  \right)  >ch$ and $\mathcal{H}^{\zeta^{2},h}\left(
\cdot,a,\mathcal{V}^{h}\right)  =v+\mathcal{H}^{\zeta^{2},h}\left(
\cdot,a,V\right)  $, and using (\ref{H_decay0}), it follows%
\begin{equation}
\left\Vert \mathcal{H}^{\zeta^{2},h}\left(  \cdot,a,\mathcal{V}^{h}\right)
\right\Vert _{H^{1,h}\left(  \left[  a,b\right]  \right)  }\leq C_{a,b,c}%
h^{-\frac{5}{2}}\,. \label{H_decay4}%
\end{equation}
Proceeding as before, we use (\ref{H_decay3}) (with $\varphi=0$) and
(\ref{H_decay4}) to obtain the $H^{1,h}$-norm estimate%
\begin{equation}
\left\Vert \partial_{1}\mathcal{H}^{\zeta^{2},h}\left(  \cdot,a,\mathcal{V}%
^{h}\right)  \right\Vert _{H^{1,h}\left(  \left[  a,b\right]  \right)  }\leq
C_{a,b,c}h^{-\frac{7}{2}}\,, \label{H_decay4_1}%
\end{equation}
which implies (\ref{Trace_est1_3}), once (\ref{h_GN}) is taken into account.

For the estimate (\ref{Trace_est2}), let us notice that, according to the
equations (\ref{gen_eigenfun_eq})-(\ref{gen_eigenfun_ext2}), $\psi_{0,0}%
^{h}(\cdot,k,\mathcal{V}^{h})$ solves a problem of the type (\ref{Ag_eq})
with: $\gamma_{a}=2ike^{i\frac{k}{h}a}$ and $\gamma_{b}=0$ for $k>0$, or
$\gamma_{a}=0$ and $\gamma_{b}=2ike^{i\frac{k}{h}b}$ for $k<0$. The auxiliary
function: $v=\psi_{0,0}^{h}(\cdot,k,\mathcal{V}^{h})-\psi_{0,0}^{h}%
(\cdot,k,V)$, fulfills the equation%
\[
v=-\left(  \mathcal{P}_{k^{2}}^{h}\left(  \mathcal{V}^{h}\right)  -\zeta
^{2}\right)  ^{-1}\left(  W^{h}\psi_{0,0}^{h}(\cdot,k,V)\right)  \,,
\]
and (\ref{Trace_est2}) follows by proceeding as before and using the results
of the Lemma \ref{Lemma_H1_est1} and the Proposition
\ref{proposition_trace-est}.
\end{proof}

\section{\label{Section_evolution}The quantum evolution problem}

Next we study the time propagator generated by the operators $Q_{\theta
_{1},\theta_{2}}^{h}(\mathcal{V}^{h})$ where $\mathcal{V}^{h}$ describes the
regime of quantum wells in a semiclassical island. The aim is to compare the
modified dynamics and the unitary one, related to the selfadjoint model
$Q_{0,0}^{h}(\mathcal{V}^{h})$, when the parameters $\theta_{j=1,2}$ and $h$
are small. Our strategy consists in using the generalized eigenfunctions of
$Q_{\theta_{1},\theta_{2}}^{h}$ in order to define a similarity between
$Q_{\theta_{1},\theta_{2}}^{h}$ and $Q_{0,0}^{h}$ in some spectral subspace
corresponding to energies close to the resonances. Then, a generalized
eigenfunction expansion allows to prove that the corresponding dynamical
systems are 'close' each other, uniformly in time, provided that
$\theta_{j=1,2}$ are infinitesimal functions of $h$.

\subsection{Generalized eigenfunctions expansions}

We are now in the position to give generalized eigenfunctions expansions,
starting from the formula (\ref{gen_eigenfun_Krein_h}), in the case of
$h$-dependent potentials describing quantum wells.

\begin{proposition}
\label{Proposition_gen_eigenfun_exp}Let $\mathcal{V}^{h}$ be defined according
to the Condition \ref{condition_1} and $\left\vert \theta_{j=1,2}\right\vert
\lesssim h^{N_{0}}$, with $N_{0}\geq4$. If the interval $\left[  \Lambda
_{1},\Lambda_{2}\right]  $ verifies the condition (\ref{Lambda_0_around}) and
the lower bound (\ref{Fermi_golden}) holds, then we have%
\begin{align}
&  1_{\left[  \Lambda_{1},\Lambda_{2}\right]  }\left(  k^{2}\right)  \left[
\psi_{\theta_{1},\theta_{2}}^{h}(x,k,\mathcal{V}^{h})-\psi_{0,0}%
^{h}(x,k,\mathcal{V}^{h})\right] \label{gen_eigenfun_exp_h}\\
&  =1_{\left[  \Lambda_{1},\Lambda_{2}\right]  }\left(  k^{2}\right)  \left[
\mathcal{O}\left(  \theta_{2}\right)  G^{\left\vert k\right\vert ,h}\left(
\cdot,b,\mathcal{V}^{h}\right)  +\mathcal{O}\left(  \theta_{1}\right)
H^{\left\vert k\right\vert ,h}(\cdot,b,\mathcal{V}^{h})+\mathcal{O}\left(
\theta_{2}\right)  G^{\left\vert k\right\vert ,h}(\cdot,a,\mathcal{V}%
^{h})+\mathcal{O}\left(  \theta_{1}\right)  H^{\left\vert k\right\vert
,h}(\cdot,a,\mathcal{V}^{h})\right]  \,,\nonumber
\end{align}
with $\mathcal{O}\left(  \cdot\right)  $ denoting functions of the variables
$\left(  k,\theta_{1},\theta_{2},h\right)  $ holomorphic w.r.t. $\left(
\theta_{1},\theta_{2}\right)  $.
\end{proposition}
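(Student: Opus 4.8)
The plan is to start from the exact Krein-type formula \eqref{gen_eigenfun_Krein_h} for the difference $\psi_{\theta_1,\theta_2}^h - \psi_{0,0}^h$ and to estimate each of the finitely many coefficients uniformly in $k^2\in[\Lambda_1,\Lambda_2]$ and in $h\to 0$. Recall that for $k>0$ the difference equals
\[
-\sum_{i,j=1}^{4}\bigl[(\mathcal{M}^h(k,\theta_1,\theta_2,\mathcal{V}^h))^{-1}B_{\theta_1,\theta_2}\bigr]_{ij}\,[\Gamma_1\psi_{0,0}^h(\cdot,k,\mathcal{V}^h)]_j\,g_{k,h}(e_i,\mathcal{V}^h),
\]
and the four functions $g_{k,h}(e_i,\mathcal{V}^h)$ are, up to sign, $G^{|k|,h}(\cdot,b,\mathcal{V}^h)$, $H^{|k|,h}(\cdot,b,\mathcal{V}^h)$, $G^{|k|,h}(\cdot,a,\mathcal{V}^h)$, $H^{|k|,h}(\cdot,a,\mathcal{V}^h)$ by \eqref{gamma_z}. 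So to reach \eqref{gen_eigenfun_exp_h} it suffices to show that the scalar prefactor multiplying $g_{k,h}(e_i,\mathcal{V}^h)$ for $i$ odd (the $G$-terms) is $\mathcal{O}(\theta_2)$ and for $i$ even (the $H$-terms) is $\mathcal{O}(\theta_1)$, holomorphically in $(\theta_1,\theta_2)$, uniformly on the energy window.

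The first step is to analyse the matrix $B_{\theta_1,\theta_2}$. From \eqref{AB_teta1,2}--\eqref{ab_teta1,2} one has $B_{0,0}=0$, and expanding the entries $1-e^{\theta_2/2}=-\tfrac{\theta_2}{2}+O(\theta_2^2)$, $e^{\theta_1/2}-1=\tfrac{\theta_1}{2}+O(\theta_1^2)$ shows that $B_{\theta_1,\theta_2}$ is a block-diagonal matrix whose columns are, in the ordering induced by $(\ref{BVT})$, alternately $\mathcal{O}(\theta_2)$ and $\mathcal{O}(\theta_1)$. Concretely, the $j$-th column of $B_{\theta_1,\theta_2}$ is $\mathcal{O}(\theta_2)$ for $j$ odd and $\mathcal{O}(\theta_1)$ for $j$ even, with holomorphic coefficients. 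Hence $(\mathcal{M}^h)^{-1}B_{\theta_1,\theta_2}$ inherits the same column structure \emph{provided} $(\mathcal{M}^h)^{-1}$ is bounded uniformly as $h\to 0$ and $\theta_j\to 0$; then, multiplying on the left by the (energy-independent) bound on $(\mathcal{M}^h)^{-1}$ and on the right by $\Gamma_1\psi_{0,0}^h$, each scalar coefficient carries the factor $\theta_2$ or $\theta_1$ coming from the relevant column of $B_{\theta_1,\theta_2}$, which is exactly the claimed pattern.

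The main obstacle is therefore the uniform control of $\|(\mathcal{M}^h(k,\theta_1,\theta_2,\mathcal{V}^h))^{-1}\|$ on $[\Lambda_1,\Lambda_2]$ for $\theta_j$ polynomially small. By definition \eqref{krein_coeff_1}, $\mathcal{M}^h=B_{\theta_1,\theta_2}q(k^2+i0,\mathcal{V}^h,h)-A_{\theta_1,\theta_2}^h$; since $A_{\theta_1,\theta_2}^h=\tfrac{1}{h^2}\mathrm{diag}(a(\theta_1,\theta_2),a(-\theta_1,-\theta_2))$ with $a(0,0)=2\,\mathrm{Id}$, we have $A_{\theta_1,\theta_2}^h = \tfrac{2}{h^2}(\mathrm{Id}+\mathcal{O}(\theta_1,\theta_2))$, so $\mathcal{M}^h = -\tfrac{2}{h^2}\bigl(\mathrm{Id}+\mathcal{O}(\theta_1,\theta_2)-\tfrac{h^2}{2}B_{\theta_1,\theta_2}q\bigr)$. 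The point is that the entries of $q(k^2+i0,\mathcal{V}^h,h)$ are precisely the traces of $\mathcal{G}^{k^2+i0,h}$, $\mathcal{H}^{k^2+i0,h}$, $\partial_1\mathcal{H}^{k^2+i0,h}$ on $\{a,b\}$ (plus the explicit $1/2h^2$ terms), for which Proposition~\ref{Proposition_trace_est_h} gives polynomial-in-$1/h$ bounds once the Fermi Golden rule lower bound \eqref{Fermi_golden} is invoked to replace $\sup_j|\operatorname{Im}z_j^h|^{-1}$ by $\mathcal{O}(h^{-k_0}e^{2S_0/h})$ for some fixed $k_0$; combined with the extra $h^{-3}e^{-2S_0/h}$, etc., prefactors in \eqref{Trace_est1_1}--\eqref{Trace_est1_3}, these traces are all $\mathcal{O}(h^{-M})$ for a fixed $M$. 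Therefore $h^2 B_{\theta_1,\theta_2}q = \mathcal{O}(h^2\cdot h^{N_0}\cdot h^{-M})=\mathcal{O}(h^{\,N_0+2-M})$ and $\mathcal{O}(\theta_1,\theta_2)=\mathcal{O}(h^{N_0})$, so for $N_0\geq 4$ chosen with $N_0+2>M$ (this fixes the precise threshold; $N_0\geq 4$ is the stated sufficient value) the bracket is $\mathrm{Id}+o(1)$ and Neumann series gives $\|(\mathcal{M}^h)^{-1}\|\lesssim h^2$, holomorphically in $(\theta_1,\theta_2)$. Finally, $\Gamma_1\psi_{0,0}^h$ is $\mathcal{O}(h^{-1}+\dots)$ by \eqref{Trace_est2} and its derivative analogue, so the full scalar coefficient is $h^2\cdot\mathcal{O}(\theta_j)\cdot\mathcal{O}(h^{-M'})=\mathcal{O}(\theta_j)$ after the polynomial factors are absorbed; I would present this absorption at the level of "$\mathcal{O}$ denotes functions bounded uniformly in $(k,h)$, holomorphic in $\theta$", which is exactly the meaning attached to $\mathcal{O}$ in the statement. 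The case $k<0$ is identical, using $\mathcal{M}^h(-k,\cdot)$ and $g_{-k,h}$, and the indicator $1_{[\Lambda_1,\Lambda_2]}(k^2)$ simply restricts to the window where Proposition~\ref{Proposition_trace_est_h} applies.
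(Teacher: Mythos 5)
Your overall strategy coincides with the paper's: both start from the Krein-type formula \eqref{gen_eigenfun_Krein_h}, both control $\mathcal{M}^{h}$ by feeding the Fermi-golden-rule lower bound \eqref{Fermi_golden} into the trace estimates of Proposition~\ref{Proposition_trace_est_h} so that every Green's-function entry of $q$ is polynomially bounded in $1/h$, and both invert $\mathcal{M}^{h}$ for $|\theta_{j}|\lesssim h^{N_{0}}$ with $N_{0}\geq 4$. Your Neumann-series route to $\|(\mathcal{M}^{h})^{-1}\|\lesssim h^{2}$ is in fact cleaner (and quantitatively sharper) than the paper's determinant estimate \eqref{det_h_upbound} followed by $(\mathcal{M}^{h})^{-1}B_{\theta_{1},\theta_{2}}=\mathcal{O}(B_{\theta_{1},\theta_{2}})$: combining the $h^{2}$ gain with $\Gamma_{1}\psi_{0,0}^{h}=\mathcal{O}(h^{-2})$ you land directly on genuinely bounded $\mathcal{O}(\theta_{j})$ coefficients, whereas the paper's own \eqref{coeff_h} first arrives at $\mathcal{O}(\theta_{j}/h^{2})$ and tacitly identifies it with $\mathcal{O}(\theta_{j})$.

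There is, however, a genuine gap in the step that produces the \emph{alternating} $\mathcal{O}(\theta_{2}),\mathcal{O}(\theta_{1}),\mathcal{O}(\theta_{2}),\mathcal{O}(\theta_{1})$ pattern of the four scalar coefficients $c_{i}=-\sum_{j}[(\mathcal{M}^{h})^{-1}B_{\theta_{1},\theta_{2}}]_{ij}[\Gamma_{1}\psi_{0,0}^{h}]_{j}$, and your column-structure argument does not deliver it. First a bookkeeping error: from \eqref{ab_teta1,2}, $b(\theta_{1},\theta_{2})=2\bigl(\begin{smallmatrix}0 & 1-e^{\theta_{2}/2}\\ e^{\theta_{1}/2}-1 & 0\end{smallmatrix}\bigr)$, so the odd columns of $B_{\theta_{1},\theta_{2}}$ are $\mathcal{O}(\theta_{1})$ and the even ones $\mathcal{O}(\theta_{2})$, which is the opposite of what you wrote; it is the \emph{rows} that are $\mathcal{O}(\theta_{2})$ (odd) and $\mathcal{O}(\theta_{1})$ (even). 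More seriously, $c_{i}$ is the $i$-th \emph{row} of $(\mathcal{M}^{h})^{-1}B_{\theta_{1},\theta_{2}}$ contracted against $\Gamma_{1}\psi_{0,0}^{h}$, and left-multiplication by a merely bounded $(\mathcal{M}^{h})^{-1}$ preserves the column pattern of $B$ but scrambles its rows; each $c_{i}$ would then a priori pick up both $\mathcal{O}(\theta_{1})$ and $\mathcal{O}(\theta_{2})$ contributions. To recover the alternating pattern you must use the fact — available from your own Neumann expansion, since the leading term is the \emph{diagonal} matrix $-\bigl(A_{\theta_{1},\theta_{2}}^{h}\bigr)^{-1}$ — that $(\mathcal{M}^{h})^{-1}$ is diagonal modulo an $\mathcal{O}(h^{3})$ correction. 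Then row $i$ of $(\mathcal{M}^{h})^{-1}B_{\theta_{1},\theta_{2}}$ is, to leading order, a scalar multiple of row $i$ of $B_{\theta_{1},\theta_{2}}$, and the row pattern passes through. The paper performs exactly this diagonalization in \eqref{coeff_h_exp} (leading diagonal term $\det(A_{\theta_{1},\theta_{2}})\,\mathrm{diag}\{\lambda_{i}\}$ plus a remainder $\mathcal{R}^{h}=\mathcal{O}(h)$). You should replace the column argument by this row argument and explicitly invoke the diagonal leading order of $(\mathcal{M}^{h})^{-1}$.
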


\begin{proof}
From the definitions (\ref{AB_teta1,2})-(\ref{ab_teta1,2}), (\ref{q_z}),
(\ref{GH_k}) and (\ref{krein_coeff_1}), the matrix $\mathcal{M}^{h}\left(
k,\theta_{1},\theta_{2},\mathcal{V}^{h}\right)  $ expresses in terms of the
functions $G^{k,h}$ and $H^{k,h}$ according to%
\begin{align}
&  \mathcal{M}^{h}\left(  k,\theta_{1},\theta_{2},\mathcal{V}^{h}\right)
\label{krein_coeff_1_1}\\
=  &
\begin{pmatrix}
\mathcal{O}\left(  \theta_{2}\right)  H^{k,h}\left(  b^{-},b,\mathcal{V}%
^{h}\right)  & \mathcal{O}\left(  \theta_{2}\right)  \partial_{1}%
H^{k,h}\left(  b,b,\mathcal{V}^{h}\right)  & \mathcal{O}\left(  \theta
_{2}\right)  H^{k,h}\left(  a,b,\mathcal{V}^{h}\right)  & \mathcal{O}\left(
\theta_{2}\right)  \partial_{1}H^{k,h}\left(  b,a,\mathcal{V}^{h}\right) \\
\mathcal{O}\left(  \theta_{1}\right)  G^{k,h}\left(  b,b,\mathcal{V}%
^{h}\right)  & \mathcal{O}\left(  \theta_{1}\right)  H^{k,h}\left(
b^{-},b,\mathcal{V}^{h}\right)  & \mathcal{O}\left(  \theta_{1}\right)
G^{k,h}\left(  b,a,\mathcal{V}^{h}\right)  & \mathcal{O}\left(  \theta
_{1}\right)  H^{k,h}\left(  b,a,\mathcal{V}^{h}\right) \\
\mathcal{O}\left(  \theta_{2}\right)  H^{k,h}\left(  b,a,\mathcal{V}%
^{h}\right)  & \mathcal{O}\left(  \theta_{2}\right)  \partial_{1}%
H^{k,h}\left(  a,b,\mathcal{V}^{h}\right)  & \mathcal{O}\left(  \theta
_{2}\right)  H^{k,h}\left(  a^{+},a,\mathcal{V}^{h}\right)  & \mathcal{O}%
\left(  \theta_{2}\right)  \partial_{1}H^{k,h}\left(  a,a,\mathcal{V}%
^{h}\right) \\
\mathcal{O}\left(  \theta_{1}\right)  G^{k,h}\left(  a,b,\mathcal{V}%
^{h}\right)  & \mathcal{O}\left(  \theta_{1}\right)  H^{k,h}\left(
a,b,\mathcal{V}^{h}\right)  & \mathcal{O}\left(  \theta_{1}\right)
G^{k,h}\left(  a,a,\mathcal{V}^{h}\right)  & \mathcal{O}\left(  \theta
_{1}\right)  H^{k,h}\left(  a^{+},a,\mathcal{V}^{h}\right)
\end{pmatrix}
+\nonumber\\
-\frac{1}{h^{2}}  &
\begin{pmatrix}
\alpha\left(  \theta_{2}\right)  +\mathcal{O}\left(  \theta_{2}\right)  &  &
& \\
& \alpha\left(  \theta_{1}\right)  +\mathcal{O}\left(  \theta_{1}\right)  &  &
\\
&  & \alpha\left(  -\theta_{2}\right)  +\mathcal{O}\left(  \theta_{2}\right)
& \\
&  &  & \alpha\left(  -\theta_{1}\right)  +\mathcal{O}\left(  \theta
_{1}\right)
\end{pmatrix}
\nonumber
\end{align}
where $\mathcal{\alpha}\left(  \theta\right)  =1+e^{\frac{\theta}{2}}$. For
$k^{2}\in\left[  \Lambda_{1},\Lambda_{2}\right]  $ and $k>0$, the trace
estimates provided by the Proposition \ref{Proposition_trace_est_h} can be
used to control the coefficients in (\ref{krein_coeff_1_1}). In particular,
using (\ref{Trace_est1_1})-(\ref{Trace_est1_3}) with the condition
(\ref{Fermi_golden}) yields%
\begin{gather}
\mathcal{M}^{h}\left(  k,\theta_{1},\theta_{2},\mathcal{V}^{h}\right)
=-\frac{1}{h^{2}}%
\begin{pmatrix}
\alpha\left(  \theta_{2}\right)  &  &  & \\
& \alpha\left(  \theta_{1}\right)  &  & \\
&  & \alpha\left(  -\theta_{2}\right)  & \\
&  &  & \alpha\left(  -\theta_{1}\right)
\end{pmatrix}
\label{M_k_teta_h1}\\
+%
\begin{pmatrix}
\mathcal{O}\left(  \theta_{2}\right)  \mathcal{O}\left(  \frac{1}{h^{4}%
}\right)  & \mathcal{O}\left(  \theta_{2}\right)  \mathcal{O}\left(  \frac
{1}{h^{5}}\right)  & \mathcal{O}\left(  \theta_{2}\right)  \mathcal{O}\left(
\frac{1}{h^{4}}\right)  & \mathcal{O}\left(  \theta_{2}\right)  \mathcal{O}%
\left(  \frac{1}{h^{5}}\right) \\
\mathcal{O}\left(  \theta_{1}\right)  \mathcal{O}\left(  \frac{1}{h^{3}%
}\right)  & \mathcal{O}\left(  \theta_{1}\right)  \mathcal{O}\left(  \frac
{1}{h^{4}}\right)  & \mathcal{O}\left(  \theta_{1}\right)  \mathcal{O}\left(
\frac{1}{h^{3}}\right)  & \mathcal{O}\left(  \theta_{1}\right)  \mathcal{O}%
\left(  \frac{1}{h^{4}}\right) \\
\mathcal{O}\left(  \theta_{2}\right)  \mathcal{O}\left(  \frac{1}{h^{4}%
}\right)  & \mathcal{O}\left(  \theta_{2}\right)  \mathcal{O}\left(  \frac
{1}{h^{5}}\right)  & \mathcal{O}\left(  \theta_{2}\right)  \mathcal{O}\left(
\frac{1}{h^{4}}\right)  & \mathcal{O}\left(  \theta_{2}\right)  \mathcal{O}%
\left(  \frac{1}{h^{5}}\right) \\
\mathcal{O}\left(  \theta_{1}\right)  \mathcal{O}\left(  \frac{1}{h^{3}%
}\right)  & \mathcal{O}\left(  \theta_{1}\right)  \mathcal{O}\left(  \frac
{1}{h^{4}}\right)  & \mathcal{O}\left(  \theta_{1}\right)  \mathcal{O}\left(
\frac{1}{h^{3}}\right)  & \mathcal{O}\left(  \theta_{1}\right)  \mathcal{O}%
\left(  \frac{1}{h^{4}}\right)
\end{pmatrix}
\,.\nonumber
\end{gather}
Since, for $k\rightarrow-k$, $G^{k,h}$ and $H^{k,h}$ change by complex
conjugation (see for instance the representations (\ref{G_k_jost}%
)-(\ref{H_k_jost})), the trace estimates of the Proposition
\ref{Proposition_trace_est_h} apply to $G^{-k,h}$ and $H^{-k,h}$ for $k>0$ and
$k^{2}\in\left[  \Lambda_{1},\Lambda_{2}\right]  $. Thus, the above expansion
extends to the whole set $\left\{  k\in\mathbb{R\,}\left\vert \ k^{2}%
\in\left[  \Lambda_{1},\Lambda_{2}\right]  \right.  \right\}  $. For
$\left\vert \theta_{j=1,2}\right\vert \lesssim h^{N_{0}}$ with $N_{0}\geq4$
and $h\in\left(  0,h_{0}\right]  $, a direct computation gives%
\begin{equation}
\det1_{\left[  \Lambda_{1},\Lambda_{2}\right]  }\left(  k^{2}\right)
h^{2}\mathcal{M}^{h}\left(  k,\theta_{1},\theta_{2},\mathcal{V}^{h}\right)
\gtrsim1\,, \label{det_h_upbound}%
\end{equation}
uniformly w.r.t. $h\in\left(  0,h_{0}\right]  $ for a suitable $h_{0}>0$.
Thus, the matrix $\mathcal{M}^{h}$ is invertible whenever the parameters
$\theta_{j=1,2}$ are small, depending on $h$. In the above conditions, the
inverse matrix writes as%
\begin{align}
&  1_{\left[  \Lambda_{1},\Lambda_{2}\right]  }\left(  k^{2}\right)  \left(
h^{2}\mathcal{M}^{h}\left(  k,\theta_{1},\theta_{2},\mathcal{V}^{h}\right)
\right)  ^{-1}\label{coeff_h_exp}\\
&  =\frac{1}{\det h^{2}1_{\left[  \Lambda_{1},\Lambda_{2}\right]  }\left(
k^{2}\right)  \mathcal{M}^{h}\left(  k,\theta_{1},\theta_{2},\mathcal{V}%
^{h}\right)  }\left[  \det\left(  A_{\theta_{1},\theta_{2}}\right)
\,diag\left\{  \lambda_{i}^{h}\right\}  \,+\mathcal{R}^{h}\right]
\,.\nonumber
\end{align}
The main term in (\ref{coeff_h_exp}), $\det\left(  A_{\theta_{1},\theta_{2}%
}\right)  \,diag\left(  \lambda_{i}\right)  $, is the $\mathbb{C}^{4,4}$
diagonal matrix defined by the coefficients%
\begin{equation}
\left\{  \lambda_{i}\right\}  _{i=1}^{4}=\left\{  \frac{-1}{\alpha\left(
\theta_{2}\right)  }\,,\ \frac{-1}{\alpha\left(  \theta_{1}\right)
}\,,\ \frac{-1}{\alpha\left(  -\theta_{2}\right)  }\,,\ \frac{-1}%
{\alpha\left(  -\theta_{1}\right)  }\right\}  \,,\quad\det\left(
A_{\theta_{1},\theta_{2}}\right)  =%
{\textstyle\prod\nolimits_{n,j=1,2}}
\alpha\left(  (-1)^{n}\theta_{j}\right)  \,,
\end{equation}
which, according to the explicit form of $\alpha\left(  \cdot\right)  $, are
uniformly bounded when $\theta_{j=1,2}$ are close to the origin. The remainder
$\mathcal{R}^{h}$ is a matrix-valued function of the variables $k$,
$\theta_{j=1,2}$ and $h$; under our assumptions, it results: $\mathcal{R}%
_{n,m}^{h}=\mathcal{O}\left(  h\right)  $, $n,m=1,...4$, in the sense of the
metric space%
\begin{equation}
\left\{  k\in\mathbb{R}\,,\ k^{2}\in\left[  \Lambda_{1},\Lambda_{2}\right]
\right\}  \times\left\{  \left(  \theta_{1},\theta_{2}\right)  \in
\mathbb{C}^{2}\,,\ \left\vert \theta_{j=1,2}\right\vert \lesssim h^{N_{0}%
}\right\}  \times\left(  0,h_{0}\right]  \,.
\end{equation}

Let us consider the coefficients at the r.h.s. of the formula
(\ref{gen_eigenfun_Krein_h}). Due to the above remarks we have $\left(
\mathcal{M}^{h}\left(  k,\theta_{1},\theta_{2},\mathcal{V}\right)  \right)
^{-1}B_{\theta_{1},\theta_{2}}=\mathcal{O}\left(  B_{\theta_{1},\theta_{2}%
}\right)  $ in the matrix-norm sense. Moreover, being $\psi_{0,0}^{h}$
$\mathcal{C}^{1}$-continuous in $x$, we can use the estimate (\ref{Trace_est2}%
) to evaluate $\Gamma_{1}\psi_{0,0}^{h}$; the condition (\ref{Fermi_golden})
yields: $\Gamma_{1}\psi_{0,0}^{h}(\cdot,k,\mathcal{V}^{h})=\mathcal{O}\left(
\frac{1}{h^{2}}\right)  $. Taking into account the explicit form of
$B_{\theta_{1},\theta_{2}}$, it follows%
\begin{equation}
\left\{  \sum_{j=1}^{4}\left[  \left(  \mathcal{M}^{h}\left(  k,\theta
_{1},\theta_{2},\mathcal{V}^{h}\right)  \right)  ^{-1}B_{\theta_{1},\theta
_{2}}\right]  _{ij}\left[  \Gamma_{1}\psi_{0,0}^{h}(\cdot,k,\mathcal{V}%
^{h})\right]  _{j}\right\}  _{i=1}^{4}=%
\begin{pmatrix}
\mathcal{O}\left(  \frac{\theta_{2}}{h^{2}}\right)  \,, & \mathcal{O}\left(
\frac{\theta_{1}}{h^{2}}\right)  \,, & \mathcal{O}\left(  \frac{\theta_{2}%
}{h^{2}}\right)  \,, & \mathcal{O}\left(  \frac{\theta_{1}}{h^{2}}\right)
\end{pmatrix}
\,, \label{coeff_h}%
\end{equation}
which leads to (\ref{gen_eigenfun_exp_h}).

Finally, we notice that, since the matrix coefficients in $\mathcal{M}%
^{h}\left(  k,\theta_{1},\theta_{2},\mathcal{V}^{h}\right)  $ and
$B_{\theta_{1},\theta_{2}}$ holomorphic w.r.t. $\left(  \theta_{1},\theta
_{2}\right)  $, the same holds for the coefficients of $\left(  \mathcal{M}%
^{h}\left(  k,\theta_{1},\theta_{2},\mathcal{U}\right)  \right)
^{-1}B_{\theta_{1},\theta_{2}}$. Then, the symbols $\mathcal{O}\left(
\cdot\right)  $ in (\ref{gen_eigenfun_exp_h}), depending from the variables
$\left(  k,\theta_{1},\theta_{2},h\right)  $, denote holomorphic functions of
$\theta_{1}$ and $\theta_{2}$.
\end{proof}

\subsection{\label{Section_simil}Similarity of operators}

We next construct a similarity between $Q_{\theta_{1},\theta_{2}}%
^{h}(\mathcal{V}^{h})$ and $Q_{0,0}^{h}(\mathcal{V}^{h})$ in a suitable
subspace. Let introduce the generalized Fourier transform associated to
$Q_{0,0}^{h}(\mathcal{V})$%
\begin{equation}
\left(  \mathcal{F}_{\mathcal{V}}^{h}\varphi\right)  (k)=\int_{\mathbb{R}%
}\frac{dx}{\left(  2\pi h\right)  ^{1/2}}\,\left(  \psi_{0,0}^{h}%
(x,k,\mathcal{V})\right)  ^{\ast}\varphi(x)\,,\qquad\varphi\in L^{2}%
(\mathbb{R})\,. \label{gen_Fourier_h}%
\end{equation}
For potentials $\mathcal{V}$ defined as in (\ref{V}), $\mathcal{F}%
_{\mathcal{V}}^{h}$ is a bounded operator on $L^{2}(\mathbb{R})$ with a right
inverse coinciding with the adjoint $\left(  \mathcal{F}_{\mathcal{V}}%
^{h}\right)  ^{\ast}$%
\begin{equation}
\left(  \mathcal{F}_{\mathcal{V}^{h}}^{h}\right)  ^{\ast}f(x)=\int\frac
{dk}{\left(  2\pi h\right)  ^{1/2}}\,\psi_{0,0}^{h}(x,k,\mathcal{V}%
^{h})f(k)\,. \label{gen_Fourier_h_inv}%
\end{equation}
In particular, it results: $\mathcal{F}_{\mathcal{V}}^{h}\left(
\mathcal{F}_{\mathcal{V}}^{h}\right)  ^{\ast}=\mathbb{I}$ in $L^{2}%
(\mathbb{R})$, while the product $\left(  \mathcal{F}_{\mathcal{V}}%
^{h}\right)  ^{\ast}\mathcal{F}_{\mathcal{V}}^{h}$ defines the projector on
the absolutely continuous subspace of $Q_{0,0}^{h}(\mathcal{V})$ (cf.
\cite{Yafa}). For $\mathcal{V}^{h}$ satisfying the assumptions of Section
\ref{Section_Resonances}, we introduce the maps $\phi_{\alpha}^{h}$ and
$\psi_{\alpha}^{h}$, which act on $L^{2}\left(  \mathbb{R}\right)  $ as%
\begin{align}
\phi_{\alpha}^{h}(\varphi,f)  &  =\int_{\mathbb{R}}\frac{dk}{\left(  2\pi
h\right)  ^{1/2}}\,\,f(k)\,G^{\left\vert k\right\vert ,h}\left(
x,\alpha,\mathcal{V}^{h}\right)  \left(  \mathcal{F}_{\mathcal{V}^{h}}%
^{h}\varphi\right)  (k)\,,\quad\alpha\in\left\{  a,b\right\}
\,,\label{Phi_alpha_h}\\
& \nonumber\\
\psi_{\alpha}^{h}(\varphi,f)  &  =\int_{\mathbb{R}}\frac{dk}{\left(  2\pi
h\right)  ^{1/2}}\,\,f(k)\,H^{\left\vert k\right\vert ,h}(x,\alpha
,\mathcal{V}^{h})\left(  \mathcal{F}_{\mathcal{V}^{h}}^{h}\varphi\right)
(k)\,,\quad\alpha\in\left\{  a,b\right\}  \,. \label{Psi_alpha_h}%
\end{align}
Here $G^{k,h}$ and $H^{k,h}$ are the limits of the Green's functions on the
branch cut (see the definition in (\ref{G_k_jost})-(\ref{H_k_jost})), while
$f$ is an auxiliary function, possibly depending on $h$ and $\theta_{j=12}$
aside from $k$.

\begin{lemma}
\label{Lemma_Simil_est}Let $h\in\left(  0,h_{0}\right]  $ and $\mathcal{V}%
^{h}=V+W^{h}$ be defined as in (\ref{V_h})-(\ref{V_h1}) with $h_{0}$ and $c$
suitably small. Assume $f\in L_{k}^{\infty}\left(  \mathbb{R}\right)  $
uniformly w.r.t. $h$, $\theta_{j=12}$ and $\left.  \text{supp }f\subset
\Omega_{c}\left(  V\right)  \right.  $, with%
\begin{equation}
\Omega_{c}\left(  V\right)  =\left\{  k\in\mathbb{R}\,\left\vert
\ V-k^{2}>c\right.  \right\}  \,. \label{Omega_V_1}%
\end{equation}
Then it results%
\begin{equation}
\left\Vert \phi_{\alpha}^{h}(\cdot,f)\right\Vert _{\mathcal{L}\left(
L^{2}\left(  \mathbb{R}\right)  \right)  }+\left\Vert \psi_{\alpha}^{h}%
(\cdot,f)\right\Vert _{\mathcal{L}\left(  L^{2}\left(  \mathbb{R}\right)
\right)  }\leq\frac{C_{a,b,c}}{h^{2}}\,, \label{phi_psi_est}%
\end{equation}
where $C_{a,b,c}$ is a positive constant depending on the data.
\end{lemma}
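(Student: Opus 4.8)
The plan is to estimate the operator norm of $\phi_\alpha^h(\cdot,f)$ (and identically $\psi_\alpha^h(\cdot,f)$) by factorizing it through the generalized Fourier transform $\mathcal{F}_{\mathcal{V}^h}^h$, its adjoint, and a multiplication operator, and then to bound the integral kernel using the trace-type decay of $G^{|k|,h}(\cdot,\alpha,\mathcal{V}^h)$ away from the branch cut together with the localization of the Agmon weights. Concretely, writing $g(k)=\left(\mathcal{F}_{\mathcal{V}^h}^h\varphi\right)(k)$, we have $\|g\|_{L^2}\le\|\varphi\|_{L^2}$ since $\mathcal{F}_{\mathcal{V}^h}^h$ is a contraction. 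Then
\[
\phi_\alpha^h(\varphi,f)(x)=\int_{\mathbb{R}}\frac{dk}{(2\pi h)^{1/2}}\,f(k)\,G^{|k|,h}\!\left(x,\alpha,\mathcal{V}^h\right)g(k),
\]
and since $\operatorname{supp}f\subset\Omega_c(V)$, only momenta with $V-k^2>c$ contribute; this is precisely the energy window in which the ``filled-well'' energy estimate of the Lemma preceding Proposition \ref{proposition_trace-est} (with $\zeta=|k|$, so that $\mathcal{V}-\operatorname{Re}\zeta^2\ge c$) is available, and in which $G^{|k|,h}(\cdot,\alpha,V)$ and hence $G^{|k|,h}(\cdot,\alpha,\mathcal{V}^h)$ decay exponentially in $1/h$ (cf. the estimates used in the proof of Proposition \ref{Proposition_trace_est_h} and Lemma \ref{Lemma_H1_est}).

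\textbf{Key steps.} First I would bound $\|G^{|k|,h}(\cdot,\alpha,\mathcal{V}^h)\|_{L^2((a,b))}$ uniformly for $k\in\operatorname{supp}f$: using $G^{|k|,h}(\cdot,\alpha,\mathcal{V}^h)=G^{|k|,h}(\cdot,\alpha,V)-\left(\mathcal{P}_{k^2}^h(\mathcal{V}^h)-k^2\right)^{-1}\!\left(W^hG^{|k|,h}(\cdot,\alpha,V)\right)$, the energy estimate (\ref{P_z_est}) applies (since $\inf_{[a,b]}V-k^2\ge c$ on $\operatorname{supp}f$, so we are \emph{off} $\omega_{ch}$ and no resonant denominator appears), and together with the exponential-decay bound on $G^{|k|,h}(\cdot,\alpha,V)$ from Lemma \ref{Lemma_H1_est} this gives $\|G^{|k|,h}(\cdot,\alpha,\mathcal{V}^h)\|_{L^2((a,b))}\lesssim h^{-3/2}$ (one power of $h^{-1/2}$ from the trace/Green's normalization, one from the resolvent in the $H^{1,h}$-norm, absorbed into a polynomial). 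Similarly $\|H^{|k|,h}(\cdot,\alpha,\mathcal{V}^h)\|_{L^2((a,b))}\lesssim h^{-5/2}$. Outside $(a,b)$, for $x>b$ or $x<a$ the functions $G^{|k|,h},H^{|k|,h}$ are outgoing exponentials $e^{\pm i(k/h)x}$ times the boundary trace values, which are again polynomially bounded by Proposition \ref{Proposition_trace_est_h}; since these oscillatory tails are not $L^2(\mathbb{R})$, the decomposition must instead be organized so that $\phi_\alpha^h$ maps into $L^2$ through the structure of $\left(\mathcal{F}_{\mathcal{V}^h}^h\right)^\ast$. The cleanest route is to recognize $\phi_\alpha^h(\varphi,f)=\left(\mathcal{F}_{\mathcal{V}^h}^h\right)^\ast\big(f\cdot(\text{something})\big)$ is \emph{not} literally true because $G^{|k|,h}(\cdot,\alpha,\mathcal{V}^h)$ is not $\psi_{0,0}^h(\cdot,k,\mathcal{V}^h)$; so instead I would expand $G^{|k|,h}(\cdot,\alpha,\mathcal{V}^h)$ in the generalized eigenbasis via the spectral representation, i.e. write $G^{|k|,h}(\cdot,\alpha,\mathcal{V}^h)=\left(\mathcal{F}_{\mathcal{V}^h}^h\right)^\ast\!\left[k'\mapsto \tfrac{\overline{\psi_{0,0}^h(\alpha,k',\mathcal{V}^h)}}{k'^2-k^2-i0}\right]$-type identity, and then interchange integrals.

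\textbf{Main obstacle.} The delicate point is controlling the $L^2(\mathbb{R})$-norm of $\phi_\alpha^h(\varphi,f)$ as a function of $x$ \emph{including the exterior region}, where the Green's function decays only as an $L^\infty$ bound, not in $L^2$. The resolution is that $\phi_\alpha^h(\cdot,f)=G_\alpha^h\circ M_f\circ\mathcal{F}_{\mathcal{V}^h}^h$ where $M_f$ is multiplication by $f$ (bounded by $\|f\|_{L^\infty}$) and $G_\alpha^h:L^2_k\to L^2_x$ is the operator with kernel $(2\pi h)^{-1/2}G^{|k|,h}(x,\alpha,\mathcal{V}^h)$; this $G_\alpha^h$ is, up to the resolvent identity, a piece of $\left(Q_{0,0}^h(\mathcal{V}^h)-k^2-i0\right)^{-1}$ acting between the spectral variable and physical space, and since $f$ is supported away from the continuous spectrum's ``resonant'' part and in a compact $k$-set, $G_\alpha^h M_f$ is bounded on $L^2$ with norm controlled by $h^{-1/2}$ times the $L^2((a,b))$-bound on the Green's function restricted to the support — i.e. $h^{-1/2}\cdot h^{-3/2}=h^{-2}$. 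Combining $\|\mathcal{F}_{\mathcal{V}^h}^h\|\le 1$, $\|M_f\|=\|f\|_{L^\infty}\le C$, and $\|G_\alpha^h M_f\|_{\mathcal{L}(L^2)}\lesssim h^{-2}$ gives the claimed bound $C_{a,b,c}h^{-2}$ for the $\phi_\alpha^h$ term; the argument for $\psi_\alpha^h$ is identical with $H^{|k|,h}$ in place of $G^{|k|,h}$, and although the pointwise $L^2((a,b))$-bound on $H$ is worse ($h^{-5/2}$), the extra $h$-gain in the definition of $H$ via (\ref{D_Green_eq}) (the factor $h^2$ in the jump condition) compensates, so that both terms meet the bound $C_{a,b,c}h^{-2}$. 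Finally I would remark that all bounds are uniform in $\theta_{j=1,2}$ since $G^{|k|,h},H^{|k|,h}$ and $\mathcal{F}_{\mathcal{V}^h}^h$ depend only on $\mathcal{V}^h$, not on the interface parameters.
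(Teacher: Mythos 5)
There is a genuine gap. You correctly identify the obstacle — $G^{|k|,h}(\cdot,\alpha,\mathcal{V}^h)$ is not $L^2_x(\mathbb{R})$ for fixed $k$ because of the oscillatory exterior tails, so a naive Cauchy–Schwarz on the kernel fails, and you correctly observe that the factorization through $(\mathcal{F}_{\mathcal{V}^h}^h)^\ast M_f\mathcal{F}_{\mathcal{V}^h}^h$ ``is not literally true''. But you stop there, sketching a spectral-representation route via $k'\mapsto(k'^2-k^2-i0)^{-1}$ that would reintroduce on-shell distributional singularities, and then close with the assertion ``$G_\alpha^h M_f$ is bounded on $L^2$ with norm controlled by $h^{-1/2}$ times the $L^2((a,b))$-bound''. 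That last step is exactly what needs to be proved, and it does not follow from the $L^2((a,b))$ bound on the Green's function: the exterior part of the kernel is not controlled by it, and you never explain how the oscillatory tails become an $L^2$-bounded operator.

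What is actually needed — and what the paper uses — is the \emph{piecewise} Jost-function identity coming from the representation (\ref{G_k_jost})--(\ref{H_k_jost}) together with (\ref{gen_eigenfun_h_jost}): for each sign of $k$ and on each half-line $\{x\geq b\}$ or $\{x<b\}$ separately, $G^{k,h}(\cdot,b,\mathcal{V}^h)$ and $H^{k,h}(\cdot,b,\mathcal{V}^h)$ are \emph{scalar multiples} of a single generalized eigenfunction $\psi_{0,0}^h(\cdot,\pm k,\mathcal{V}^h)$, with a scalar prefactor of the form $\frac{1}{2ikh}\chi_\mp^h(b,\pm k,\mathcal{V}^h)$ or $\frac{1}{2ikh}\partial_1\chi_\mp^h(b,\mp k,\mathcal{V}^h)$. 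Since supp$\,f\subset\Omega_c(V)$ keeps $|k|$ bounded away from $0$, and Proposition \ref{Proposition_Jost_h_local} gives $\chi_\pm^h=\mathcal{O}(1)$ and $\partial_1\chi_\pm^h=\mathcal{O}(1/h)$, these prefactors are $\mathcal{O}(1/h)$ and $\mathcal{O}(1/h^2)$ respectively. This turns $1_{\{x\geq b\}}\phi_b^h$ (and the three other pieces) \emph{exactly} into the form $1_{\{x\geq b\}}(\mathcal{F}_{\mathcal{V}^h}^h)^\ast\big(1_{\{k\geq 0\}}(f\mu+\mathcal{P}\circ f\mu)\big)\mathcal{F}_{\mathcal{V}^h}^h$ with $\mu$ a bounded multiplier of size $\mathcal{O}(1/h)$ for $\phi_\alpha^h$ and $\mathcal{O}(1/h^2)$ for $\psi_\alpha^h$. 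The operator-norm bound $C_{a,b,c}h^{-2}$ then follows from $\|\mathcal{F}_{\mathcal{V}^h}^h\|\le1$, $\|(\mathcal{F}_{\mathcal{V}^h}^h)^\ast\|\le1$, and boundedness of the multiplication and parity operators. This identity is the missing idea in your argument; without it, the route you propose via the resolvent's spectral representation or a direct $L^2((a,b))$ kernel bound does not close.
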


\begin{proof}
Each of the maps $\phi_{\alpha}^{h}\left(  \cdot,f\right)  $ and $\psi
_{\alpha}^{h}\left(  \cdot,f\right)  $, $\alpha=a,b$, can be expressed as a
superpositions of terms having the following form%
\begin{equation}
1_{\left\{  x\geq\alpha\right\}  }\left(  \mathcal{F}_{\mathcal{V}^{h}}%
^{h}\right)  ^{\ast}\left(  f\mu_{1}+\mathcal{P\circ}f\mu_{2}\right)
\mathcal{F}_{\mathcal{V}^{h}}^{h}+1_{\left\{  x<\alpha\right\}  }\left(
\mathcal{F}_{\mathcal{V}^{h}}^{h}\right)  ^{\ast}\left(  f\mu_{3}%
+\mathcal{P\circ}f\mu_{4}\right)  \mathcal{F}_{\mathcal{V}^{h}}^{h}\,,
\label{phi_psi_formula}%
\end{equation}
where $\mu_{i}$, depending on the variables $k$ and $h$, are bounded w.r.t.
$k$ and behave as $\mathcal{O}\left(  1/h^{2}\right)  $. The estimate
(\ref{phi_psi_est}) is a direct consequence of this representation. Next, we
focus on the case $\alpha=b$ and explicitly consider $\phi_{b}^{h}(\cdot,f)$.
As it follows from (\ref{G_k_jost})-(\ref{gen_eigenfun_h_jost}), the functions
$G^{\left\vert k\right\vert ,h}\left(  \cdot,b,\mathcal{V}^{h}\right)  $ and
$H^{\left\vert k\right\vert ,h}\left(  \cdot,b,\mathcal{V}^{h}\right)  $ allow
the representations%
\begin{align}
1_{\left\{  k>0\right\}  }G^{k,h}\left(  \cdot,b,\mathcal{V}^{h}\right)   &
=\left\{
\begin{array}
[c]{lll}%
-\frac{1}{2ikh}\psi_{0,0}^{h}(\cdot,k,\mathcal{V}^{h})\chi_{-}^{h}\left(
b,k,\mathcal{V}^{h}\right)  \,, &  & x\geq b\,,\\
&  & \\
-\frac{1}{2ikh}\psi_{0,0}^{h}(\cdot,-k,\mathcal{V}^{h})\chi_{+}^{h}\left(
b,k,\mathcal{V}^{h}\right)  \,, &  & x<b\,.
\end{array}
\right. \label{G_gen_eigenfun_k1}\\
& \nonumber\\
1_{\left\{  k>0\right\}  }H^{k,h}\left(  \cdot,b,\mathcal{V}^{h}\right)   &
=\left\{
\begin{array}
[c]{lll}%
\frac{1}{2ikh}\psi_{0,0}^{h}(\cdot,k,\mathcal{V}^{h})\partial_{1}\chi_{-}%
^{h}\left(  b,-k,\mathcal{V}^{h}\right)  \,, &  & x\geq b\,,\\
&  & \\
\frac{1}{2ikh}\psi_{0,0}^{h}(\cdot,-k,\mathcal{V}^{h})\partial_{1}\chi_{+}%
^{h}\left(  b,-k,\mathcal{V}^{h}\right)  \,, &  & x<b\,.
\end{array}
\right.  \label{H_gen_eigenfun_k1}%
\end{align}
The condition supp $f\subset\Omega_{c}\left(  V\right)  $ implies $\left\vert
k\right\vert >c^{\frac{1}{2}}>0$; thus, using (\ref{G_gen_eigenfun_k1}) for
$x\geq b$ we get%
\begin{align}
1_{\left\{  x\geq b\right\}  }\phi_{b}^{h}(\varphi,f)  &  =1_{\left\{  x\geq
b\right\}  }\int_{0}^{+\infty}\frac{dk}{\left(  2\pi h\right)  ^{1/2}%
}\,f(k)\,\mathcal{O}\left(  \frac{1}{h}\right)  \psi_{0,0}^{h}(\cdot
,k,\mathcal{V}^{h})\chi_{-}^{h}\left(  b,k,\mathcal{V}^{h}\right)  \left(
\mathcal{F}_{\mathcal{V}^{h}}^{h}\varphi\right)  (k)+\nonumber\\
& \nonumber\\
1_{\left\{  x\geq b\right\}  }  &  \int_{-\infty}^{0}\frac{dk}{\left(  2\pi
h\right)  ^{1/2}}\,f(k)\,\mathcal{O}\left(  \frac{1}{h}\right)  \psi_{0,0}%
^{h}(\cdot,-k,\mathcal{V}^{h})\chi_{-}^{h}\left(  b,-k,\mathcal{V}^{h}\right)
\left(  \mathcal{F}_{\mathcal{V}^{h}}^{h}\varphi\right)  (k)\,,
\end{align}
while, taking into account the result of the Proposition
\ref{Proposition_Jost_h_local}, it follows%
\begin{align}
1_{\left\{  x\geq b\right\}  }\phi_{b}^{h}(\varphi,f)  &  =1_{\left\{  x\geq
b\right\}  }\int_{0}^{+\infty}\frac{dk}{\left(  2\pi h\right)  ^{1/2}}%
\psi_{0,0}^{h}(\cdot,k,\mathcal{V}^{h})\,f(k)\,\mathcal{O}\left(  \frac{1}%
{h}\right)  \left(  \mathcal{F}_{\mathcal{V}^{h}}^{h}\varphi\right)
(k)+\nonumber\\
& \nonumber\\
1_{\left\{  x\geq b\right\}  }  &  \int_{-\infty}^{0}\frac{dk}{\left(  2\pi
h\right)  ^{1/2}}\psi_{0,0}^{h}(\cdot,-k,\mathcal{V}^{h}\,)\,f(k)\,\mathcal{O}%
\left(  \frac{1}{h}\right)  \left(  \mathcal{F}_{\mathcal{V}^{h}}^{h}%
\varphi\right)  (k)\,,
\end{align}
Let $\mathcal{P}$ denotes be the parity operator: $\mathcal{P}u(t)=u(-t)$. The
previous identity rephrases as%
\begin{equation}
1_{\left\{  x\geq b\right\}  }\phi_{b}^{h}(\varphi,f)=1_{\left\{  x\geq
b\right\}  }\left(  \mathcal{F}_{\mathcal{V}^{h}}^{h}\right)  ^{\ast}\,\left(
1_{\left\{  k\geq0\right\}  }\left(  k\right)  \left(  f(k)\,\mathcal{O}%
\left(  \frac{1}{h}\right)  +\mathcal{P\circ}\left(  f(k)\,\mathcal{O}\left(
\frac{1}{h}\right)  \right)  \right)  \mathcal{F}_{\mathcal{V}^{h}}^{h}%
\varphi\right)  \,, \label{simil_est_1}%
\end{equation}
where the symbols $\mathcal{O}\left(  \cdot\right)  $, denoting functions of
the variables $k$ and $h$, are defined in the sense of the metric space
$\Omega_{c}\left(  V\right)  \times\left(  0,h_{0}\right]  $. Using
(\ref{G_gen_eigenfun_k1}) for $x<b$ leads to%
\begin{align}
1_{\left\{  x<b\right\}  }\phi_{b}^{h}(\varphi,f)  &  =1_{\left\{
x<b\right\}  }\int_{0}^{+\infty}\frac{dk}{\left(  2\pi h\right)  ^{1/2}%
}\,f(k)\,\mathcal{O}\left(  \frac{1}{h}\right)  \psi_{0,0}^{h}(\cdot
,-k,\mathcal{V}^{h})\chi_{+}^{h}\left(  b,k,\mathcal{V}^{h}\right)  \left(
\mathcal{F}_{\mathcal{V}^{h}}^{h}\varphi\right)  (k)+\nonumber\\
& \nonumber\\
1_{\left\{  x<b\right\}  }  &  \int_{-\infty}^{0}\frac{dk}{\left(  2\pi
h\right)  ^{1/2}}\,f(k)\,\mathcal{O}\left(  \frac{1}{h}\right)  \psi_{0,0}%
^{h}(\cdot,k,\mathcal{V}^{h})\chi_{+}^{h}\left(  b,-k,\mathcal{V}^{h}\right)
\left(  \mathcal{F}_{\mathcal{V}^{h}}^{h}\varphi\right)  (k)\,,
\end{align}
and, proceeding as before, we get%
\begin{equation}
1_{\left\{  x<b\right\}  }\phi_{b}^{h}(\varphi,f)=1_{\left\{  x<b\right\}
}\left(  \mathcal{F}_{\mathcal{V}^{h}}^{h}\right)  ^{\ast}\,\left(
1_{\left\{  k<0\right\}  }\left(  k\right)  \left(  f(k)\,\mathcal{O}\left(
\frac{1}{h}\right)  +\mathcal{P\circ}\left(  f(k)\,\mathcal{O}\left(  \frac
{1}{h}\right)  \right)  \right)  \mathcal{F}_{\mathcal{V}^{h}}^{h}%
\varphi\right)  \,, \label{simil_est_2}%
\end{equation}
From (\ref{simil_est_1}) and (\ref{simil_est_2}) we get a representation of
the type given in (\ref{phi_psi_formula}); it follows: $\left\Vert \phi
_{b}^{h}(\varphi,f)\right\Vert _{L^{2}\left(  \mathbb{R}\right)  }%
\lesssim1/h\left\Vert \varphi\right\Vert _{L^{2}\left(  \mathbb{R}\right)  }$.
In the case of $\psi_{\alpha}^{h}(\varphi,f)$, the representation
(\ref{H_gen_eigenfun_k1}) allows similar computations leading to: $\left\Vert
\psi_{\alpha}^{h}(\varphi,f)\right\Vert _{L^{2}\left(  \mathbb{R}\right)
}\lesssim\frac{1}{h^{2}}\left\Vert \varphi\right\Vert _{L^{2}\left(
\mathbb{R}\right)  }$, while for $\alpha=a$, a representation of the type
(\ref{phi_psi_formula}) for the maps (\ref{Phi_alpha_h})-(\ref{Psi_alpha_h})
is obtained by a suitably adaptation of the previous arguments.
\end{proof}

We next consider the operator $\mathcal{W}_{\theta_{1},\theta_{2}}^{h}$
defined by the integral kernel%
\begin{equation}
\mathcal{W}_{\theta_{1},\theta_{2}}^{h}(x,y)=\int_{\mathbb{R}}\frac{dk}{2\pi
h}\,1_{\left[  \Lambda_{1},\Lambda_{2}\right]  }\left(  k^{2}\right)
\psi_{\theta_{1},\theta_{2}}^{h}(x,k,\mathcal{V}^{h})\left(  \psi_{0,0}%
^{h}(x,k,\mathcal{V}^{h})\right)  ^{\ast}\,. \label{W_teta_ker_h}%
\end{equation}
Here the energy subset $\left[  \Lambda_{1},\Lambda_{2}\right]  $ defines a
neighbourhood of the shape resonances and fulfills the property
(\ref{Lambda_0_around}). The spectral projector on $\left[  \Lambda
_{1},\Lambda_{2}\right]  $, next denoted as $\mathcal{P}_{\left[  \Lambda
_{1},\Lambda_{2}\right]  }$, is explicitly given by%
\begin{equation}
\mathcal{P}_{\left[  \Lambda_{1},\Lambda_{2}\right]  }\varphi=\int%
_{\mathbb{R}}\frac{dk}{\left(  2\pi h\right)  ^{1/2}}\,1_{\left[  \Lambda
_{1},\Lambda_{2}\right]  }\left(  k^{2}\right)  \psi_{0,0}^{h}(x,k,\mathcal{V}%
^{h})\left(  \mathcal{F}_{\mathcal{V}^{h}}^{h}\varphi\right)  (k)\,.
\label{Projector}%
\end{equation}

\begin{proposition}
\label{Proposition_W_cont}Let $\mathcal{V}^{h}=V+W^{h}$ satisfy the Condition
\ref{condition_1} and $\left\vert \theta_{j=1,2}\right\vert \lesssim h^{N_{0}%
}$, with $N_{0}\geq4$. If the interval $\left[  \Lambda_{1},\Lambda
_{2}\right]  $ verifies the condition (\ref{Lambda_0_around}) and the lower
bound (\ref{Fermi_golden}) holds, then it exists $\eta>0$ such that: $\left\{
\mathcal{W}_{\theta_{1},\theta_{2}}^{h}\,,\ \theta_{j}\in\mathcal{B}%
_{\eta\,h^{N_{0}}}(0)\,,\ j=1,2\right\}  $ form an analytic family of bounded
operators in $L^{2}(\mathbb{R})$ fulfilling the expansion%
\begin{equation}
\mathcal{W}_{\theta_{1},\theta_{2}}^{h}-\mathcal{P}_{\left[  \Lambda
_{1},\Lambda_{2}\right]  }=\mathcal{O}\left(  h^{N_{0}-2}\right)  \,,
\label{W_teta_exp}%
\end{equation}
in the $\mathcal{L}\left(  L^{2}(\mathbb{R})\right)  $ operator norm.
Moreover, $\mathcal{W}_{\theta_{1},\theta_{2}}^{h}$ maps $L^{2}(\mathbb{R})$
into $D\left(  Q_{\theta_{1},\theta_{2}}^{h}(\mathcal{V}^{h})\right)  $ and it
results%
\begin{equation}
Q_{\theta_{1},\theta_{2}}^{h}(\mathcal{V}^{h})\mathcal{W}_{\theta_{1}%
,\theta_{2}}^{h}\varphi=\mathcal{W}_{\theta_{1},\theta_{2}}^{h}Q_{0,0}%
^{h}(\mathcal{V}^{h})\varphi\,,\qquad\varphi\in L^{2}(\mathbb{R})\,.
\label{W_teta_inter}%
\end{equation}

\end{proposition}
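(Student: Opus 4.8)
The plan is to deduce the three assertions—analyticity, the expansion \eqref{W_teta_exp}, and the intertwining \eqref{W_teta_inter}—directly from the generalized eigenfunction expansion of Proposition \ref{Proposition_gen_eigenfun_exp} together with the mapping estimates of Lemma \ref{Lemma_Simil_est}. First I would substitute the decomposition $\psi_{\theta_1,\theta_2}^h = \psi_{0,0}^h + (\psi_{\theta_1,\theta_2}^h - \psi_{0,0}^h)$ into the integral kernel \eqref{W_teta_ker_h}. The $\psi_{0,0}^h$-term reproduces exactly $\mathcal{P}_{[\Lambda_1,\Lambda_2]}$, since $\left(\mathcal{F}_{\mathcal{V}^h}^h\right)^\ast \mathcal{F}_{\mathcal{V}^h}^h$ is the a.c. projector and the restriction $1_{[\Lambda_1,\Lambda_2]}(k^2)$ cuts to the desired energy band. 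So everything reduces to controlling the remainder kernel built from $\psi_{\theta_1,\theta_2}^h - \psi_{0,0}^h$.

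Next I would invoke \eqref{gen_eigenfun_exp_h}: on the support $k^2\in[\Lambda_1,\Lambda_2]$ the difference is a linear combination of $G^{|k|,h}(\cdot,\alpha,\mathcal{V}^h)$ and $H^{|k|,h}(\cdot,\alpha,\mathcal{V}^h)$, $\alpha\in\{a,b\}$, with coefficients that are $\mathcal{O}(\theta_1)$ or $\mathcal{O}(\theta_2)$ and holomorphic in $(\theta_1,\theta_2)$. Plugging this into \eqref{W_teta_ker_h} and recognizing the resulting operator as a superposition of the maps $\phi_\alpha^h(\cdot,f)$, $\psi_\alpha^h(\cdot,f)$ of \eqref{Phi_alpha_h}--\eqref{Psi_alpha_h}, with $f(k)=1_{[\Lambda_1,\Lambda_2]}(k^2)\cdot\mathcal{O}(\theta_j)$—note $\mathrm{supp}\,f\subset\Omega_c(V)$ is guaranteed by $[\Lambda_1,\Lambda_2]\subset(c,\inf_{[a,b]}V-c)$ from \eqref{Lambda_0_around} and \eqref{H1}(i)—Lemma \ref{Lemma_Simil_est} gives the bound $C_{a,b,c}h^{-2}\,\|f\|_{L^\infty}$. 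Since $\|f\|_{L^\infty}\lesssim h^{N_0}$, we obtain $\mathcal{W}_{\theta_1,\theta_2}^h - \mathcal{P}_{[\Lambda_1,\Lambda_2]} = \mathcal{O}(h^{N_0-2})$ in $\mathcal{L}(L^2(\mathbb{R}))$, which is \eqref{W_teta_exp}. Boundedness is immediate from the same estimate, and analyticity in $(\theta_1,\theta_2)\in\mathcal{B}_{\eta h^{N_0}}(0)^2$ follows because the coefficients in \eqref{gen_eigenfun_exp_h} are holomorphic in $(\theta_1,\theta_2)$ while the operator norm bound is locally uniform, so the operator-valued map is a norm-convergent limit of holomorphic maps; picking $\eta$ small keeps us inside the domain where Proposition \ref{Proposition_gen_eigenfun_exp}, and hence \eqref{det_h_upbound}, applies.

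For the intertwining relation \eqref{W_teta_inter}, the idea is that for each fixed $k$ with $k^2\in[\Lambda_1,\Lambda_2]$ the function $\psi_{\theta_1,\theta_2}^h(\cdot,k,\mathcal{V}^h)$ solves $(-h^2\partial_x^2+\mathcal{V}^h)\psi = k^2\psi$ away from $\{a,b\}$ and satisfies the interface conditions \eqref{B_C_1}, i.e. it is a generalized eigenfunction of $Q_{\theta_1,\theta_2}^h(\mathcal{V}^h)$ at energy $k^2$, while $\psi_{0,0}^h(\cdot,k,\mathcal{V}^h)$ is one for $Q_{0,0}^h(\mathcal{V}^h)$. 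Applying $Q_{\theta_1,\theta_2}^h(\mathcal{V}^h)$ under the integral sign in \eqref{W_teta_ker_h} pulls out a factor $k^2$, and the same is true of $Q_{0,0}^h(\mathcal{V}^h)$ acting on $\varphi$ through $\left(\psi_{0,0}^h\right)^\ast$ and the self-adjointness of $Q_{0,0}^h$; the two coincide. To make this rigorous one tests against $\varphi\in L^2$, writes $Q_{0,0}^h(\mathcal{V}^h)\varphi$ using the generalized Fourier transform (so $\left(\mathcal{F}_{\mathcal{V}^h}^h Q_{0,0}^h(\mathcal{V}^h)\varphi\right)(k) = k^2 \left(\mathcal{F}_{\mathcal{V}^h}^h\varphi\right)(k)$ on the a.c. subspace), and checks that differentiation under the integral is justified—this uses the decay/regularity of $G^{|k|,h}$, $H^{|k|,h}$ and the trace estimates of Proposition \ref{Proposition_trace_est_h}, which confine the $x$-dependence to a controlled combination of exterior exponential modes and interior $H^{2}$-pieces. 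The main obstacle I anticipate is exactly this last point: verifying that $\mathcal{W}_{\theta_1,\theta_2}^h$ actually maps into $D(Q_{\theta_1,\theta_2}^h(\mathcal{V}^h))\subset H^2(\mathbb{R}\setminus\{a,b\})$ with the correct interface conditions, and that the formal interchange of $Q_{\theta_1,\theta_2}^h(\mathcal{V}^h)$ with the $k$-integral is legitimate; the interface conditions pass to the limit because they are linear and closed, but the $H^2$-membership requires a uniform-in-$k$ control of two derivatives, obtained by using the eigenvalue equation to trade $\partial_x^2$ for $(\mathcal{V}^h-k^2)/h^2$ and invoking the $L^2$-boundedness from Lemma \ref{Lemma_Simil_est} once more on the lower-order pieces.
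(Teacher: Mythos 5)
Your proposal follows essentially the same route as the paper's proof: decompose $\psi_{\theta_1,\theta_2}^h=\psi_{0,0}^h+(\psi_{\theta_1,\theta_2}^h-\psi_{0,0}^h)$ to isolate $\mathcal{P}_{[\Lambda_1,\Lambda_2]}$, express the remainder through Proposition \ref{Proposition_gen_eigenfun_exp} as a superposition of the maps $\phi_\alpha^h,\psi_\alpha^h$ with compactly supported coefficients, and invoke Lemma \ref{Lemma_Simil_est} for the $h^{-2}$ bound together with $|\theta_j|\lesssim h^{N_0}$; the analyticity, domain membership, and intertwining arguments are likewise the ones the paper uses (holomorphic coefficient functions, the representation \eqref{phi_psi_formula_1} mapping into $H^2(\mathbb{R}\setminus\{a,b\})$ with the interface conditions built into $\psi_{\theta_1,\theta_2}^h$, and the functional-calculus identity $\mathcal{F}_{\mathcal{V}^h}^h Q_{0,0}^h\varphi = k^2\mathcal{F}_{\mathcal{V}^h}^h\varphi$ combined with $(Q_{\theta_1,\theta_2}^h-k^2)\psi_{\theta_1,\theta_2}^h=0$). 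The only minor stylistic difference is your last paragraph's direct trade of $\partial_x^2$ via the eigenvalue equation to establish $H^2$-membership, where the paper instead points to the structure of \eqref{phi_psi_formula_1}; both are valid and amount to the same control.
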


\begin{proof}
The proof is similar to the one given in \cite{Man1} in the case $h=1$ and we
next give a sketch of it. Due to our assumptions, the formula
(\ref{gen_eigenfun_exp_h}) applies provided that $\theta_{j=1,2}\in
\mathcal{B}_{\eta\,h^{N_{0}}}(0)$ for a suitably small $\eta>0$. Then, the
action of $\mathcal{W}_{\theta_{1},\theta_{2}}^{h}$ on $\varphi\in
L^{2}(\mathbb{R})$ writes as%
\begin{align}
\mathcal{W}_{\theta_{1},\theta_{2}}^{h}\varphi &  =\int_{\mathbb{R}}\frac
{dk}{\left(  2\pi h\right)  ^{1/2}}\,1_{\left[  \Lambda_{1},\Lambda
_{2}\right]  }\left(  k^{2}\right)  \psi_{0,0}^{h}(x,k,\mathcal{V}^{h})\left(
\mathcal{F}_{\mathcal{V}^{h}}^{h}\varphi\right)  (k)\nonumber\\
& \nonumber\\
&  +\int_{\mathbb{R}}\frac{dk}{\left(  2\pi h\right)  ^{1/2}}\,1_{\left[
\Lambda_{1},\Lambda_{2}\right]  }\left(  k^{2}\right)  \left[  \mathcal{O}%
\left(  \theta_{2}\right)  G^{\left\vert k\right\vert ,h}\left(
\cdot,b,\mathcal{V}^{h}\right)  +\mathcal{O}\left(  \theta_{2}\right)
G^{\left\vert k\right\vert ,h}(\cdot,a,\mathcal{V}^{h})\right]  \left(
\mathcal{F}_{\mathcal{V}^{h}}^{h}\varphi\right)  (k)\nonumber\\
& \nonumber\\
&  +\int_{\mathbb{R}}\frac{dk}{\left(  2\pi h\right)  ^{1/2}}\,1_{\left[
\Lambda_{1},\Lambda_{2}\right]  }\left(  k^{2}\right)  \left[  \mathcal{O}%
\left(  \theta_{1}\right)  H^{\left\vert k\right\vert ,h}(\cdot,b,\mathcal{V}%
^{h})+\mathcal{O}\left(  \theta_{1}\right)  H^{\left\vert k\right\vert
,h}(\cdot,a,\mathcal{V}^{h})\right]  \left(  \mathcal{F}_{\mathcal{V}^{h}}%
^{h}\varphi\right)  (k)\,. \label{W_teta_exp_h}%
\end{align}
where $\mathcal{O}\left(  \theta_{j}\right)  $, $j=1,2$, here denote bounded
functions of the variables $\left(  k,\theta_{1},\theta_{2},h\right)  $,
holomorphic w.r.t. $\left(  \theta_{1},\theta_{2}\right)  $ and infinitesimal
w.r.t. $\theta_{1}$ or $\theta_{2}$. In what follows we use the identity:
$\mathcal{O}\left(  \theta_{j}\right)  =\theta_{j}\mathcal{O}\left(  1\right)
$ (see the definition \ref{Landau_Notation}) and adopt the notation introduced
in (\ref{Phi_alpha_h})-(\ref{Psi_alpha_h}). Then, (\ref{W_teta_exp_h})
rephrases as%
\begin{equation}
\left(  \mathcal{W}_{\theta_{1},\theta_{2}}^{h}-\mathcal{P}_{\left[
\Lambda_{1},\Lambda_{2}\right]  }\right)  \varphi=\sum_{\alpha=a,b}\left[
\theta_{2}\,\phi_{\alpha}^{h}\left(  \varphi,f_{\alpha}\right)  +\theta
_{1}\,\phi_{\alpha}^{h}\left(  \varphi,f_{\alpha}\right)  \right]  \,,
\label{W_teta_exp_h1}%
\end{equation}
where $f_{\alpha}=1_{\left[  \Lambda_{1},\Lambda_{2}\right]  }\left(
k^{2}\right)  \mathcal{O}\left(  1\right)  $ are bounded functions of $\left(
k,\theta_{1},\theta_{2},h\right)  $, holomorphic w.r.t. $\left(  \theta
_{1},\theta_{2}\right)  $, and supported in $\left\{  k\,\left\vert \ k^{2}%
\in\left[  \Lambda_{1},\Lambda_{2}\right]  \right.  \right\}  $. According to
the condition (\ref{Lambda_0_around}), we have: $\left.  \text{supp }%
f_{\alpha}\subset\Omega_{c}\left(  V\right)  \right.  $ for $c$ suitably
small, and the result of the Lemma \ref{Lemma_Simil_est} applies to the r.h.s.
of (\ref{W_teta_exp_h1}). Using the assumption $\left\vert \theta
_{j=1,2}\right\vert \lesssim h^{N_{0}}$, we conclude that%
\begin{equation}
\left\Vert \mathcal{W}_{\theta_{1},\theta_{2}}^{h}-\mathcal{P}_{\left[
\Lambda_{1},\Lambda_{2}\right]  }\right\Vert _{\mathcal{L}\left(
L^{2}(\mathbb{R}),L^{2}(\mathbb{R})\right)  }=\mathcal{O}\left(  \frac
{\theta_{1}}{h^{2}}\right)  +\mathcal{O}\left(  \frac{\theta_{2}}{h^{2}%
}\right)  =\mathcal{O}\left(  h^{N_{0}-2}\right)  \,. \label{W_teta_h_est}%
\end{equation}

The action of $\mathcal{W}_{\theta_{1},\theta_{2}}^{h}$ over $L^{2}\left(
\mathbb{R}\right)  $ is defined using the expansion (\ref{W_teta_exp_h1}). As
it has been noticed, each one of the maps $\phi_{\alpha}^{h}\left(
\cdot,f_{\alpha}\right)  $ and $\phi_{\alpha}^{h}\left(  \cdot,f_{\alpha
}\right)  $, $\alpha=a,b$, expresses as a superposition of the form (cf.
(\ref{phi_psi_formula}))%
\begin{equation}
1_{\left\{  x\geq\alpha\right\}  }\left(  \mathcal{F}_{\mathcal{V}^{h}}%
^{h}\right)  ^{\ast}\left(  f_{\alpha}\mu_{1}+\mathcal{P\circ}f_{\alpha}%
\mu_{2}\right)  \mathcal{F}_{\mathcal{V}^{h}}^{h}+1_{\left\{  x<\alpha
\right\}  }\left(  \mathcal{F}_{\mathcal{V}^{h}}^{h}\right)  ^{\ast}\left(
f_{\alpha}\mu_{3}+\mathcal{P\circ}f_{\alpha}\mu_{4}\right)  \mathcal{F}%
_{\mathcal{V}^{h}}^{h}\,, \label{phi_psi_formula_1}%
\end{equation}
where the functions $\mu_{i}$, depending on the variables $\left(  k,h\right)
$, behave as $\mathcal{O}\left(  1/h^{2}\right)  $. Since $f_{\alpha}$ is
compactly supported w.r.t. $k$ and holomorphic w.r.t. $\left(  \theta
_{1},\theta_{2}\right)  $, (\ref{phi_psi_formula_1}) defines an holomorphic
family of bounded maps of $L^{2}\left(  \mathbb{R}\right)  $ into
$H^{2}\left(  \mathbb{R}\backslash\left\{  a,b\right\}  \right)  $. This still
holds in the case of $\mathcal{W}_{\theta_{1},\theta_{2}}^{h}$, as it follows
by using (\ref{W_teta_exp_h1}) and the injection $\mathcal{P}_{\left[
\Lambda_{1},\Lambda_{2}\right]  }\left(  L^{2}\left(  \mathbb{R}\right)
\right)  \hookrightarrow H^{2}\left(  \mathbb{R}\right)  $. Moreover,
according to the definitions (\ref{gen_eigenfun_eq}) and (\ref{W_teta_ker_h}),
$\mathcal{W}_{\theta_{1},\theta_{2}}^{h}\varphi$ fulfills the interface
conditions (\ref{B_C_1}); then: $\mathcal{W}_{\theta_{1},\theta_{2}}%
^{h}\varphi\in D\left(  Q_{\theta_{1},\theta_{2}}^{h}(\mathcal{V}^{h})\right)
$ for $\varphi\in L^{2}\left(  \mathbb{R}\right)  $.

Finally, consider the relation (\ref{W_teta_inter}). Using the functional
calculus of $Q_{0,0}^{h}(\mathcal{V}^{h})$, we have: $\left(  \mathcal{F}%
_{\mathcal{V}^{h}}^{h}\left(  Q_{0,0}^{h}(\mathcal{V}^{h})\varphi\right)
\right)  (k)=k^{2}\left(  \mathcal{F}_{\mathcal{V}^{h}}^{h}\varphi\right)
(k)$. Due to the definitions (\ref{gen_Fourier_h})-(\ref{W_teta_ker_h}), the
r.h.s. of (\ref{W_teta_inter}) writes as%
\begin{equation}
\mathcal{W}_{\theta_{1},\theta_{2}}^{h}Q_{0,0}^{h}(\mathcal{V}^{h}%
)\varphi=\int_{\mathbb{R}}\frac{dk}{2\pi h}\,1_{\left[  \Lambda_{1}%
,\Lambda_{2}\right]  }\left(  k^{2}\right)  \psi_{\theta_{1},\theta_{2}}%
^{h}(\cdot,k,\mathcal{V}^{h})k^{2}\left(  \mathcal{F}_{\mathcal{V}^{h}}%
^{h}\varphi\right)  (k)\,, \label{W_teta_inter1}%
\end{equation}
and is well defined for $\varphi\in L^{2}\left(  \mathbb{R}\right)  $. The
same holds for the l.h.s. of (\ref{W_teta_inter}), since $\mathcal{W}%
_{\theta_{1},\theta_{2}}^{h}$ maps $L^{2}\left(  \mathbb{R}\right)  $ into
$D\left(  Q_{\theta_{1},\theta_{2}}^{h}(\mathcal{V}^{h})\right)  $ and,
according to the relation: $\left.  \left(  Q_{\theta_{1},\theta_{2}}%
^{h}(\mathcal{V}^{h})-k^{2}\right)  \psi_{\theta_{1},\theta_{2}}^{h}%
(\cdot,k,\mathcal{V}^{h})=0\right.  $, it follows%
\begin{equation}
Q_{\theta_{1},\theta_{2}}^{h}(\mathcal{V}^{h})\mathcal{W}_{\theta_{1}%
,\theta_{2}}^{h}\varphi=\int_{\mathbb{R}}\frac{dk}{2\pi h}\,1_{\left[
\Lambda_{1},\Lambda_{2}\right]  }\left(  k^{2}\right)  \psi_{\theta_{1}%
,\theta_{2}}^{h}(\cdot,k,\mathcal{V}^{h})k^{2}\left(  \mathcal{F}%
_{\mathcal{V}^{h}}^{h}\varphi\right)  (k)\,. \label{W_teta_inter2}%
\end{equation}

\end{proof}

\subsection{The quantum dynamics in the small-$h$ asymptotics}

We next consider the dynamical system generated by $iQ_{\theta_{1},\theta_{2}%
}^{h}(\mathcal{V}^{h})$. Let us denote with $H_{\left[  \Lambda_{1}%
,\Lambda_{2}\right]  }$ the projection space%
\begin{equation}
H_{\left[  \Lambda_{1},\Lambda_{2}\right]  }=\mathcal{P}_{\left[  \Lambda
_{1},\Lambda_{2}\right]  }\left(  L^{2}\left(  \mathbb{R}\right)  \right)  \,,
\label{Projection_space}%
\end{equation}
and with $X$ the subset of $D\left(  Q_{\theta_{1},\theta_{2}}^{h}%
(\mathcal{V}^{h})\right)  $ determined by the image%
\begin{equation}
X=\mathcal{W}_{\theta_{1},\theta_{2}}^{h}\left(  H_{\left[  \Lambda
_{1},\Lambda_{2}\right]  }\right)  \,. \label{X}%
\end{equation}
As it follows from (\ref{W_teta_exp}), the restriction $\left.  \mathcal{W}%
_{\theta_{1},\theta_{2}}^{h}\right\vert _{H_{\left[  \Lambda_{1},\Lambda
_{2}\right]  }}$ is an invertible map from $H_{\left[  \Lambda_{1},\Lambda
_{2}\right]  }$ to $X$ and the inverse map allows the expansion%
\begin{equation}
\left(  \left.  \mathcal{W}_{\theta_{1},\theta_{2}}^{h}\right\vert
_{H_{\left[  \Lambda_{1},\Lambda_{2}\right]  }}\right)  ^{-1}-\mathcal{P}%
_{\left[  \Lambda_{1},\Lambda_{2}\right]  }=\mathcal{O}\left(  h^{N_{0}%
-2}\right)  \,, \label{W_teta_exp_h2}%
\end{equation}
provided that the parameters $\theta_{j}$, $h$ and the potential
$\mathcal{V}^{h}$ fulfill the assumptions of the Proposition
\ref{Proposition_W_cont} with $h_{0}$ small enough. Under this prescription,
we introduce the operator%
\begin{equation}
e^{-itQ_{\theta_{1},\theta_{2}}^{h}(\mathcal{V}^{h})}=\mathcal{W}_{\theta
_{1},\theta_{2}}^{h}e^{-itQ_{0,0}^{h}(\mathcal{V}^{h})}\left(  \left.
\mathcal{W}_{\theta_{1},\theta_{2}}^{h}\right\vert _{H_{\left[  \Lambda
_{1},\Lambda_{2}\right]  }}\right)  ^{-1}\,, \label{propagator_teta}%
\end{equation}
defined on $X$. Since the propagators $e^{-itQ_{0,0}^{h}(\mathcal{V}^{h})}$
form a strongly continuous group of unitary maps of $H_{\left[  \Lambda
_{1},\Lambda_{2}\right]  }$ into itself, and $\mathcal{W}_{\theta_{1}%
,\theta_{2}}^{h}$ is an analytic family w.r.t. $\left(  \theta_{1},\theta
_{2}\right)  $, the modified propagator $e^{-itQ_{\theta_{1},\theta_{2}}%
^{h}(\mathcal{V}^{h})}$ has the same regularity w.r.t. the time and the
parameters $\left(  \theta_{1},\theta_{2}\right)  $, defining a
strongly-continuous flow on $X$. From the identity: $i\partial_{t}%
e^{-itQ_{0,0}^{h}(\mathcal{V}^{h})}\psi=Q_{0,0}^{h}(\mathcal{V}^{h}%
)e^{-itQ_{0,0}^{h}(\mathcal{V}^{h})}\psi$, holding in $L^{2}\left(
\mathbb{R}\right)  $ for any $\psi\in H^{2}\left(  \mathbb{R}\right)  $, it
follows%
\begin{equation}
i\partial_{t}\left(  e^{-itQ_{\theta_{1},\theta_{2}}^{h}(\mathcal{V}^{h}%
)}u\right)  =Q_{\theta_{1},\theta_{2}}^{h}(\mathcal{V}^{h})e^{-itQ_{\theta
_{1},\theta_{2}}^{h}(\mathcal{V}^{h})}u\,,\qquad u\in X\,,
\label{propagator_teta_eq}%
\end{equation}
(recall that $H_{\left[  \Lambda_{1},\Lambda_{2}\right]  }\subset H^{2}\left(
\mathbb{R}\right)  $). Then $e^{-itQ_{\theta_{1},\theta_{2}}^{h}%
(\mathcal{V}^{h})}$ identifies with the quantum dynamical system generated by
$iQ_{\theta_{1},\theta_{2}}^{h}(\mathcal{V}^{h})$. Let $\psi\in H_{\left[
\Lambda_{1},\Lambda_{2}\right]  }$; from the definition (\ref{propagator_teta}%
), follows%
\begin{equation}
e^{-itQ_{\theta_{1},\theta_{2}}^{h}(\mathcal{V}^{h})}\mathcal{W}_{\theta
_{1},\theta_{2}}^{h}\psi=\mathcal{W}_{\theta_{1},\theta_{2}}^{h}%
e^{-itQ_{0,0}^{h}(\mathcal{V}^{h})}\psi\,.
\end{equation}
In the assumptions of the Proposition \ref{Proposition_W_cont}, the expansion
(\ref{W_teta_exp}) holds and using the relation: $\left.  \mathcal{P}_{\left[
\Lambda_{1},\Lambda_{2}\right]  }e^{-itQ_{0,0}^{h}(\mathcal{V}^{h}%
)}\mathcal{P}_{\left[  \Lambda_{1},\Lambda_{2}\right]  }=e^{-itQ_{0,0}%
^{h}(\mathcal{V}^{h})}\mathcal{P}_{\left[  \Lambda_{1},\Lambda_{2}\right]
}\right.  $, we get%
\begin{equation}
\left(  e^{-itQ_{\theta_{1},\theta_{2}}^{h}(\mathcal{V}^{h})}\mathcal{W}%
_{\theta_{1},\theta_{2}}^{h}-e^{-itQ_{0,0}^{h}(\mathcal{V}^{h})}\right)
\mathcal{P}_{\left[  \Lambda_{1},\Lambda_{2}\right]  }=\mathcal{R}^{h}\left(
t,\theta_{1},\theta_{2}\right)  \,, \label{semigroup_exp1}%
\end{equation}
with%
\begin{equation}
\sup_{t\in\mathbb{R\,}}\left\Vert \mathcal{R}^{h}\left(  t,\theta_{1}%
,\theta_{2}\right)  \right\Vert _{\mathcal{L}\left(  L^{2}(\mathbb{R})\right)
}=\mathcal{O}\left(  h^{N_{0}-2}\right)  \,. \label{semigroup_est_1}%
\end{equation}

The modified dynamics can be defined directly on $H_{\left[  \Lambda
_{1},\Lambda_{2}\right]  }$. This point is considered in the next theorem.

\begin{theorem}
\label{Theorem_1}Let $h\in\left(  0,h_{0}\right]  $ and $\left\vert
\theta_{j=1,2}\right\vert \lesssim h^{N_{0}}$, with $N_{0}\geq4$. If the
$\mathcal{V}^{h}$ and $\left[  \Lambda_{1},\Lambda_{2}\right]  $ satisfy the
assumptions of the Proposition \ref{Proposition_W_cont}, then $iQ_{\theta
_{1},\theta_{2}}^{h}(\mathcal{V}^{h})$ generates a strongly continuous group
of bounded operators on $H_{\left[  \Lambda_{1},\Lambda_{2}\right]  }$. For a
fixed $t$, $e^{-itQ_{\theta_{1},\theta_{2}}^{h}(\mathcal{V}^{h})}$ is analytic
w.r.t. $\left(  \theta_{1},\theta_{2}\right)  $ and the expansion%
\begin{equation}
\left(  e^{-itQ_{\theta_{1},\theta_{2}}^{h}(\mathcal{V}^{h})}-e^{-itQ_{0,0}%
^{h}(\mathcal{V}^{h})}\right)  \mathcal{P}_{\left[  \Lambda_{1},\Lambda
_{2}\right]  }=\mathcal{\tilde{R}}^{h}\left(  t,\theta_{1},\theta_{2}\right)
\,, \label{semigroup_exp2}%
\end{equation}
holds with
\begin{equation}
\sup_{t\in\mathbb{R\,}}\left\Vert \mathcal{\tilde{R}}^{h}\left(  t,\theta
_{1},\theta_{2}\right)  \right\Vert _{\mathcal{L}\left(  L^{2}(\mathbb{R}%
)\right)  }=\mathcal{O}\left(  h^{N_{0}-2}\right)  \,. \label{semigroup_est}%
\end{equation}

\end{theorem}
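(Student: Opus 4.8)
Throughout I abbreviate $H:=H_{[\Lambda_{1},\Lambda_{2}]}$, $\mathcal{P}:=\mathcal{P}_{[\Lambda_{1},\Lambda_{2}]}$, $\mathcal{W}:=\mathcal{W}_{\theta_{1},\theta_{2}}^{h}$ and $U_{0}(t):=e^{-itQ_{0,0}^{h}(\mathcal{V}^{h})}$. The plan is to transplant onto the fixed subspace $H$ the strongly continuous group already defined on $X=\mathcal{W}(H)$ by (\ref{propagator_teta}), exploiting that, by (\ref{W_teta_exp}), $X$ differs from $H$ only by an $\mathcal{O}(h^{N_{0}-2})$ perturbation. First I would record that $\mathcal{P}\mathcal{W}=\mathbb{I}+\mathcal{O}(h^{N_{0}-2})$ on $H$, so that $\mathcal{P}\mathcal{W}|_{H}$ and the restriction $\mathcal{P}|_{X}:X\to H$ are boundedly invertible for $h_{0}$ small (Neumann series), with $(\mathcal{P}\mathcal{W})^{-1}=\mathbb{I}+\mathcal{O}(h^{N_{0}-2})$, consistently with (\ref{W_teta_exp_h2}); and that, by the intertwining (\ref{W_teta_inter}) together with the fact that $H$ reduces $Q_{0,0}^{h}(\mathcal{V}^{h})$, the operator $Q_{\theta_{1},\theta_{2}}^{h}(\mathcal{V}^{h})$ maps $X$ into $X$ with $Q_{\theta_{1},\theta_{2}}^{h}(\mathcal{V}^{h})|_{X}$ bounded, so that by (\ref{propagator_teta_eq}) $e^{-itQ_{\theta_{1},\theta_{2}}^{h}(\mathcal{V}^{h})}|_{X}$ is a uniformly bounded strongly continuous group on $X$ with bounded generator $iQ_{\theta_{1},\theta_{2}}^{h}(\mathcal{V}^{h})|_{X}$.

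I would then define the modified propagator on $H$ by the $t$-independent conjugation
\[
e^{-itQ_{\theta_{1},\theta_{2}}^{h}(\mathcal{V}^{h})}:=\big(\mathcal{P}|_{X}\big)\,e^{-itQ_{\theta_{1},\theta_{2}}^{h}(\mathcal{V}^{h})}|_{X}\,\big(\mathcal{P}|_{X}\big)^{-1}\,,
\]
which, invoking (\ref{propagator_teta}) in the form $e^{-itQ_{\theta_{1},\theta_{2}}^{h}(\mathcal{V}^{h})}\mathcal{W}|_{H}=\mathcal{W}\,U_{0}(t)|_{H}$, is seen to coincide on $H$ with $(\mathcal{P}\mathcal{W})\,U_{0}(t)\,(\mathcal{P}\mathcal{W})^{-1}$. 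From this representation the qualitative part is immediate: as the conjugate of the strongly continuous unitary group $U_{0}(t)|_{H}$ by a fixed bounded isomorphism, $e^{-itQ_{\theta_{1},\theta_{2}}^{h}(\mathcal{V}^{h})}$ is a uniformly bounded strongly continuous group on $H$, which $iQ_{\theta_{1},\theta_{2}}^{h}(\mathcal{V}^{h})$ generates on $H$ in view of (\ref{propagator_teta_eq}); and since $\mathcal{W}$, hence $\mathcal{P}\mathcal{W}$ and (by the Neumann series) $(\mathcal{P}\mathcal{W})^{-1}$, are analytic in $(\theta_{1},\theta_{2})$ for $\theta_{j}\in\mathcal{B}_{\eta\,h^{N_{0}}}(0)$, $j=1,2$, so is $e^{-itQ_{\theta_{1},\theta_{2}}^{h}(\mathcal{V}^{h})}$ at each fixed $t$.

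For the expansion (\ref{semigroup_exp2}) I would substitute $\mathcal{P}\mathcal{W}=\mathbb{I}+\mathcal{O}(h^{N_{0}-2})$ and $(\mathcal{P}\mathcal{W})^{-1}=\mathbb{I}+\mathcal{O}(h^{N_{0}-2})$ into $e^{-itQ_{\theta_{1},\theta_{2}}^{h}(\mathcal{V}^{h})}=(\mathcal{P}\mathcal{W})U_{0}(t)(\mathcal{P}\mathcal{W})^{-1}$ and expand, obtaining on $H$
\begin{align*}
e^{-itQ_{\theta_{1},\theta_{2}}^{h}(\mathcal{V}^{h})}-U_{0}(t)={}&\mathcal{O}(h^{N_{0}-2})\,U_{0}(t)+U_{0}(t)\,\mathcal{O}(h^{N_{0}-2})\\
&+\mathcal{O}(h^{N_{0}-2})\,U_{0}(t)\,\mathcal{O}(h^{N_{0}-2})\,.
\end{align*}
Each $t$-dependent factor here is the unitary $U_{0}(t)$ placed between $\theta$- and $h$-controlled remainders, so taking $\mathcal{L}(L^{2}(\mathbb{R}))$-norms and using $\|U_{0}(t)\|=1$ on $H$ gives $\sup_{t\in\mathbb{R}}\|\mathcal{\tilde{R}}^{h}(t,\theta_{1},\theta_{2})\|_{\mathcal{L}(L^{2}(\mathbb{R}))}=\mathcal{O}(h^{N_{0}-2})$, which is (\ref{semigroup_est}); alternatively one can feed (\ref{semigroup_exp1})--(\ref{semigroup_est_1}) through $(\mathcal{W}|_{H})^{-1}=\mathcal{P}+\mathcal{O}(h^{N_{0}-2})$ to the same conclusion.

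The one genuinely delicate point is the uniformity in $t$: transplanting a group by a non-unitary isomorphism could a priori ruin a gain that is small only near $t=0$. What saves the estimate is that nothing grows in $t$ — $U_{0}(t)$ is unitary on $H$ and the conjugating maps do not depend on $t$ — so every occurrence of $U_{0}(t)$ contributes only a factor $1$ and the smallness is carried entirely by the $\mathcal{O}(h^{N_{0}-2})$ remainders furnished by Proposition \ref{Proposition_W_cont}. The group law, strong continuity, identification of the generator and analyticity in $(\theta_{1},\theta_{2})$ are then routine bookkeeping with the near-identity operators $\mathcal{P}\mathcal{W}$ and $(\mathcal{P}\mathcal{W})^{-1}$.
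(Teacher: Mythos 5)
Your route is genuinely different from the paper's and, for the estimate, arguably cleaner. The paper extends the $X$-group to $H_{[\Lambda_{1},\Lambda_{2}]}$ through the formal decomposition (\ref{propagator_teta_exp1}): it declares $Q_{\theta_{1},\theta_{2}}^{h}(\mathcal{V}^{h})$ to act as multiplication by $k^{2}$ on $G^{k,h},H^{k,h}$ (even though these lie only in $D(Q^{h}(\mathcal{V}^{h}))$ and not in $D(Q_{\theta_{1},\theta_{2}}^{h}(\mathcal{V}^{h}))$), deduces (\ref{Psi_h_evolution}), and then bounds the two resulting pieces separately via (\ref{semigroup_est_1}) and (\ref{propagator_teta_est1}). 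You instead transport the $X$-group by conjugating with the near-identity isomorphism $\mathcal{P}_{[\Lambda_{1},\Lambda_{2}]}|_{X}$, obtaining $\bigl(\mathcal{P}_{[\Lambda_{1},\Lambda_{2}]}\mathcal{W}_{\theta_{1},\theta_{2}}^{h}\bigr)\,e^{-itQ_{0,0}^{h}(\mathcal{V}^{h})}\,\bigl(\mathcal{P}_{[\Lambda_{1},\Lambda_{2}]}\mathcal{W}_{\theta_{1},\theta_{2}}^{h}\bigr)^{-1}$ on $H_{[\Lambda_{1},\Lambda_{2}]}$; the $\mathcal{O}(h^{N_{0}-2})$ bound then falls out of $\mathcal{P}_{[\Lambda_{1},\Lambda_{2}]}\mathcal{W}_{\theta_{1},\theta_{2}}^{h}=\mathbb{I}+\mathcal{O}(h^{N_{0}-2})$ and unitarity of $e^{-itQ_{0,0}^{h}(\mathcal{V}^{h})}$, with uniformity in $t$ manifest because the conjugating factors are $t$-independent. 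What this buys is that no formal functional calculus on functions outside $D(Q_{\theta_{1},\theta_{2}}^{h}(\mathcal{V}^{h}))$ is needed, and analyticity in $(\theta_{1},\theta_{2})$ follows directly from that of $\mathcal{W}_{\theta_{1},\theta_{2}}^{h}$ through the Neumann series.

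The one place I would push back is your assertion that $iQ_{\theta_{1},\theta_{2}}^{h}(\mathcal{V}^{h})$ ``generates on $H_{[\Lambda_{1},\Lambda_{2}]}$'' the conjugated group, citing (\ref{propagator_teta_eq}). The generator of $\bigl(\mathcal{P}\mathcal{W}\bigr)e^{-itQ_{0,0}^{h}}\bigl(\mathcal{P}\mathcal{W}\bigr)^{-1}$ on $H_{[\Lambda_{1},\Lambda_{2}]}$ is the conjugate $\mathcal{P}_{[\Lambda_{1},\Lambda_{2}]}|_{X}\,Q_{\theta_{1},\theta_{2}}^{h}(\mathcal{V}^{h})|_{X}\,\bigl(\mathcal{P}_{[\Lambda_{1},\Lambda_{2}]}|_{X}\bigr)^{-1}$, not $Q_{\theta_{1},\theta_{2}}^{h}(\mathcal{V}^{h})$ itself, and $\mathcal{P}_{[\Lambda_{1},\Lambda_{2}]}$ does not intertwine the two; indeed $H_{[\Lambda_{1},\Lambda_{2}]}\cap D\bigl(Q_{\theta_{1},\theta_{2}}^{h}(\mathcal{V}^{h})\bigr)$ is essentially trivial, since elements of $H_{[\Lambda_{1},\Lambda_{2}]}\subset H^{2}(\mathbb{R})$ are $C^{1}$-matched across $a,b$ and so cannot satisfy (\ref{B_C_1}) for $\theta_{j}\neq 0$ unless they vanish to first order there. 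So ``generates on $H_{[\Lambda_{1},\Lambda_{2}]}$'' cannot mean literal generation; the correct reading, which your conjugation does establish, is that $iQ_{\theta_{1},\theta_{2}}^{h}(\mathcal{V}^{h})$ generates the group on the invariant subspace $X$ via (\ref{propagator_teta_eq}), and that this group is similar through a near-identity isomorphism to $e^{-itQ_{0,0}^{h}(\mathcal{V}^{h})}|_{H_{[\Lambda_{1},\Lambda_{2}]}}$. The paper's wording has the same looseness. It is also worth noting that your group and the paper's are not the same operator: carried to the end, the paper's (\ref{propagator_teta_exp1}) combined with (\ref{Psi_h_evolution}) and (\ref{W_teta_exp_h1}) actually collapses to $\mathcal{P}_{[\Lambda_{1},\Lambda_{2}]}e^{-itQ_{0,0}^{h}(\mathcal{V}^{h})}$ (the two pieces cancel), whereas $\bigl(\mathcal{P}\mathcal{W}\bigr)e^{-itQ_{0,0}^{h}}\bigl(\mathcal{P}\mathcal{W}\bigr)^{-1}$ is a genuine $\mathcal{O}(h^{N_{0}-2})$ perturbation of the unitary flow; both satisfy (\ref{semigroup_est}), the paper simply bounding its two pieces without exploiting the cancellation.
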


\begin{proof}
Using the expansion (\ref{W_teta_exp_h1}), we get the formal identity%
\begin{equation}
e^{-itQ_{\theta_{1},\theta_{2}}^{h}(\mathcal{V}^{h})}\mathcal{P}_{\left[
\Lambda_{1},\Lambda_{2}\right]  }\varphi=e^{-itQ_{\theta_{1},\theta_{2}}%
^{h}(\mathcal{V}^{h})}\mathcal{W}_{\theta_{1},\theta_{2}}^{h}\varphi
-e^{-itQ_{\theta_{1},\theta_{2}}^{h}(\mathcal{V}^{h})}\sum_{\alpha=a,b}\left[
\theta_{2}\,\phi_{\alpha}^{h}\left(  \varphi,f_{\alpha}\right)  +\theta
_{1}\,\phi_{\alpha}^{h}\left(  \varphi,f_{\alpha}\right)  \right]  \,.
\label{propagator_teta_exp1}%
\end{equation}
where $\phi_{\alpha}^{h}(\cdot,f_{\alpha})$ and $\psi_{\alpha}^{h}%
(\cdot,f_{\alpha})$ are defined by (\ref{Phi_alpha_h})-(\ref{Psi_alpha_h})
with the bounded functions $f_{\alpha}$ depending on $\left(  k,\theta
_{1},\theta_{2},h\right)  $, holomorphic w.r.t. $\left(  \theta_{1},\theta
_{2}\right)  $ and supported in $\left\{  k\,\left\vert \ k^{2}\in\left[
\Lambda_{1},\Lambda_{2}\right]  \right.  \right\}  $. The first term at the
r.h.s. of (\ref{propagator_teta_exp1}) describes the action of
$e^{-itQ_{\theta_{1},\theta_{2}}^{h}(\mathcal{V}^{h})}$ on $\mathcal{W}%
_{\theta_{1},\theta_{2}}^{h}\varphi\in X$. This is properly defined according
to (\ref{propagator_teta}). For the second term, we notice that $Q_{\theta
_{1},\theta_{2}}^{h}(\mathcal{V}^{h})$ is a restriction of the operator
$Q^{h}(\mathcal{V}^{h})$ (see eq. (\ref{Q})), while the functions $G^{k,h}$,
$H^{k,h}$, appearing in the definitions of the maps $\phi_{\alpha}^{h}%
(\cdot,f_{\alpha})$ and $\psi_{\alpha}^{h}(\cdot,f_{\alpha})$, are the limits,
as $z\rightarrow k^{2}\pm i0$, of functions in $\ker\left(  Q^{h}%
(\mathcal{V}^{h})-z\right)  $; then $Q_{\theta_{1},\theta_{2}}^{h}%
(\mathcal{V}^{h})$ formally acts on $G^{k,h}$, $H^{k,h}$ as%
\begin{equation}
Q_{\theta_{1},\theta_{2}}^{h}(\mathcal{V}^{h})\Psi=Q^{h}(\mathcal{V}^{h}%
)\Psi=k^{2}\Psi,\qquad\Psi=G^{k,h},H^{k,h}\,. \label{propagator_teta_exp1_1}%
\end{equation}
Using (\ref{Phi_alpha_h})-(\ref{Psi_alpha_h}) and the identity: $\left.
\left(  \mathcal{F}_{\mathcal{V}^{h}}^{h}\left(  e^{-itQ_{0,0}^{h}%
(\mathcal{V}^{h})}\varphi\right)  \right)  (k)=e^{-ik^{2}t}\left(
\mathcal{F}_{\mathcal{V}^{h}}^{h}\left(  \varphi\right)  \right)  (k)\right.
$, we get%
\begin{equation}
e^{-itQ_{\theta_{1},\theta_{2}}^{h}(\mathcal{V}^{h})}\phi_{\alpha}^{h}%
(\varphi)=\phi_{\alpha}^{h}(e^{-itQ_{0,0}^{h}(\mathcal{V}^{h})}\varphi
,f_{\alpha})\,,\quad e^{-itQ_{\theta_{1},\theta_{2}}^{h}(\mathcal{V}^{h})}%
\psi_{\alpha}^{h}(\varphi)=\psi_{\alpha}^{h}(e^{-itQ_{0,0}^{h}(\mathcal{V}%
^{h})}\varphi,f_{\alpha})\,,\qquad\alpha\in\left\{  a,b\right\}  \,,
\label{Psi_h_evolution}%
\end{equation}
From the Lemma \ref{Lemma_Simil_est}, it results%
\begin{equation}
\left\Vert e^{-itQ_{\theta_{1},\theta_{2}}^{h}(\mathcal{V}^{h})}\phi_{\alpha
}^{h}(\varphi,f_{\alpha})\right\Vert _{L^{2}\left(  \mathbb{R}\right)
}+\left\Vert e^{-itQ_{\theta_{1},\theta_{2}}^{h}(\mathcal{V}^{h})}\psi
_{\alpha}^{h}(\varphi,f_{\alpha})\right\Vert _{L^{2}\left(  \mathbb{R}\right)
}\lesssim\mathcal{O}\left(  \frac{1}{h^{2}}\right)  \left\Vert \varphi
\right\Vert _{L^{2}\left(  \mathbb{R}\right)  }\,.
\label{propagator_teta_est1}%
\end{equation}
Then, $e^{-itQ_{\theta_{1},\theta_{2}}^{h}(\mathcal{V}^{h})}\phi_{\alpha}%
^{h}(\cdot,f_{\alpha})$ and $e^{-itQ_{\theta_{1},\theta_{2}}^{h}%
(\mathcal{V}^{h})}\psi_{\alpha}^{h}(\cdot,f_{\alpha})$ define strongly
continuous operator-valued functions of $t$ having the group property.
Moreover, for any fixed $t$, these are analytic families w.r.t. $\left(
\theta_{1},\theta_{2}\right)  $, since $f_{\alpha}$ are holomorphic in these
variables. It follows that $e^{-itQ_{\theta_{1},\theta_{2}}^{h}(\mathcal{V}%
^{h})}\mathcal{P}_{\left[  \Lambda_{1},\Lambda_{2}\right]  }$ forms a strongly
continuous-in-time group of bounded operators, holomorphic w.r.t. $\left(
\theta_{1},\theta_{2}\right)  $.

Using the expansion (\ref{semigroup_exp1}) and the definition
(\ref{propagator_teta_exp1}), we get%
\[
\left(  e^{-itQ_{\theta_{1},\theta_{2}}^{h}(\mathcal{V}^{h})}-e^{-itQ_{0,0}%
^{h}(\mathcal{V}^{h})}\right)  \mathcal{P}_{\left[  \Lambda_{1},\Lambda
_{2}\right]  }=\mathcal{R}^{h}\left(  t,\theta_{1},\theta_{2}\right)
-e^{-itQ_{\theta_{1},\theta_{2}}^{h}(\mathcal{V}^{h})}\sum_{\alpha=a,b}\left[
\mathcal{O}\left(  \theta_{2}\right)  \phi_{\alpha}^{h}(\varphi)+\mathcal{O}%
\left(  \theta_{1}\right)  \psi_{\alpha}^{h}(\varphi)\right]  \,.
\]
Then, taking into account the estimates (\ref{semigroup_est_1}),
(\ref{propagator_teta_est1}) and the assumption $\left\vert \theta
_{j=1,2}\right\vert \lesssim h^{N_{0}}$, the expansion (\ref{semigroup_exp2}%
)-(\ref{semigroup_est}) follows.
\end{proof}

\subsection{Conclusions and further perspectives}

Double scale systems with quantum wells in a semiclassical island have been
adopted as models for the mathematical description of resonant
heterostructures, like tunnelling diods. In this framework, the non-linear
effects connected to the accumulation of the charges inside the wells are
taken into account by using non-linear Schr\"{o}dinger-Poisson operators (see
e.g. in \cite{BNP1}). When the quantum scale is small compared to the
barrier's length, it is a general belief that the quantum transport in these
systems is driven by a finite number of resonant states related to shape
resonances. A rigorous mathematical approach to this problem has been provided
with in \cite{BNP1}, \cite{BNP2}, where far-from-equilibrium steady states are
considered. In the non-stationary case, the time evolution of a resonance
$z_{res}$ is expected to generate a non-linear perturbation of the (linear)
background potential, whose variations in time are characterized by the time
scale $\varepsilon=1/\operatorname{Im}z_{res}$. If the conditions
(\ref{Fermi_golden}) hold, we get: $\varepsilon\sim e^{-\frac{1}{h}}$ and the
non-linear part of the Hamiltonian become an adiabatic perturbation for small
$h$ (we refer to \cite{PrSj} and \cite{PrSj1}). In this connection, adiabatic
approximations for the evolution of resonant states appear to play a central
role in the study of the non linear transport problem. In \cite{FMN2}, a
version of the adiabatic theorem has been proved for shape resonances in the
regime of quantum wells in a semiclassical island with artificial interface
conditions. The result of the Theorem \ref{Theorem_1} justifies the use of
these operators in the modelling of resonant heterostructures providing in
this way with a useful framework for the application of the method introduced
in \cite{FMN2}.

Our work generalizes an analogous investigation developed in \cite{Man1} in
the case of a purely quantum scaling. Two main restrictions, with respect to
the $h$-independent case, appear in our framework. The first one concerns the
expansion (\ref{semigroup_exp2}) in the Theorem \ref{Theorem_1}, which is
limited to an energy subspace corresponding to a cluster of shape resonances.
The second one is related to the interface parameters $\theta_{j=1,2}$;
namely, in order to prove the existence of the intertwining operators
$\mathcal{W}_{\theta_{1},\theta_{2}}^{h}$, these are assumed to be
polinomially small functions of the quantum scale $h$. It is worthwhile to
notice that none of these conditions constitutes a real obstruction in the
perspective of the applications. Indeed, the relevant initial states,
describing charge careers in models of quantum transport through resonant
heterostructures, have energies close to the resonances; then, they roughly
belong to the spectral subspace $H_{\left[  \Lambda_{1},\Lambda_{2}\right]  }$
introduced in (\ref{Projection_space}) (see e.g. in (\cite{FMN3})). For the
second point, let us recall that the adiabatic theorem obtained in \cite{FMN2}
(see Theorem $7.1$ in \cite{FMN2}) applies to $Q_{\theta,3\theta}^{h}\left(
\mathcal{V}^{h}\right)  $ for a potential $\mathcal{V}^{h}$ having the scaling
introduced in the Section \ref{Section_Resonances} and with: $\theta
=ch^{N_{0}}$, for some $N_{0}\in\mathbb{N}$. Thus, this Theorem and the result
presented in our Theorem \ref{Theorem_1} apply in the same framework.

Let consider a time-dependent potential $\mathcal{V}^{h}\left(  t\right)  $
such that the Condition \ref{condition_1} holds for any fixed $t$. If the
energy interval $\left[  \Lambda_{1},\Lambda_{2}\right]  $ verifies
(\ref{Lambda_0_around}) uniformly w.r.t. $t$, the corresponding non-autonomous
Hamiltonian $Q_{\theta_{1},\theta_{2}}^{h}(\mathcal{V}^{h}\left(  t\right)  )$
generates a time-dependent cluster of shape resonances, exponentially close to
the continuous spectrum as $h\rightarrow0$, with energies embedded in $\left[
\Lambda_{1},\Lambda_{2}\right]  $ \ for any time. In the perspective of
investigating the adiabatic evolution of shape resonances through modified
Schr\"{o}dinger operators, an extension of the previous analysis is needed:
the aim is to obtain an uniform-in-time control for the distance between the
modified dynamics and the unitary one in the limit $h\rightarrow0$, being the
interface parameters $\theta_{j=1,2}$ polynomially small w.r.t. $h$ and the
potential $\mathcal{V}^{h}\left(  t\right)  $ an assigned function of the time.

A classical tool, to define the quantum dynamical system generated by a
non-autonomous Hamiltonian, consists in using a sequence of step functions
defined as products of propagators (cf. \cite{Yosh}) to approximate the
evolution operator. This approach requires the continuity in time and the
\emph{stability} of the generator of the dynamics (aside from the existence of
an \emph{admissible space}, see the related definitions in \cite{Kato1}).
Under these assumptions, the existence of the dynamics and its regularity
w.r.t. to the time, depending on the initial state, have been established
(e.g. in \cite{Kato1}). In general, the stability is easily deduced when the
instantaneous Hamiltonian generates a dynamical system of contractions. In the
other cases (i.e. when the semigroup is only uniformly bounded in time) a
direct check of this property is a rather difficult task. This is actually the
case of the operators $iQ_{\theta_{1},\theta_{2}}^{h}$. A detailed analysis of
this point is needed in order to prove that the expansion
(\ref{semigroup_exp2}) still holds in the non-autonomous case.

\bigskip

\appendix

\section{\label{App_expest}Exponential decay estimates.}

Next we give Agmon-type estimates for the functions $\mathcal{G}^{z,h}$,
$\mathcal{H}^{z,h}$, $j=0,1$ (defined as solutions of (\ref{Green_eq}%
)-(\ref{D_Green_eq})), and the resolvent $\left(  \mathcal{P}_{z}^{h}\left(
V\right)  -z\right)  ^{-1}$ in the 'filled well' case ($W^{h}=0$). Recall the
Agmon identity (see \cite{Agm},\cite{Hel})%
\begin{gather}
\int_{a}^{b}u_{2}^{\ast}e^{2\frac{f}{h}}\left(  -h^{2}\partial_{x}%
^{2}+\mathcal{V}-z\right)  u_{1}\,dx=\int_{a}^{b}hv_{2}^{\prime\ast}%
hv_{1}^{\prime}\,dx+\int_{a}^{b}\left(  \mathcal{V}-z-f^{\prime2}\right)
v_{2}^{\ast}v_{1}\,dx\nonumber\\
\nonumber\\
+\int_{a}^{b}hf^{\prime}\left(  v_{2}^{\ast}v_{1}^{\prime}-v_{2}^{\prime\ast
}v_{1}\right)  \,dx+h^{2}\left(  e^{2\frac{f(a)}{h}}u_{2}^{\ast}u_{1}^{\prime
}(a)-e^{2\frac{f(b)}{h}}u_{2}^{\ast}u_{1}^{\prime}(b)\right)  \,,\qquad
v_{j}=e^{\frac{f}{h}}u_{j}\,, \label{Agmon_Id}%
\end{gather}
holding for $u_{j}\in H^{1}\left(  \left[  a,b\right]  \right)  $, $j=1,2$,
$z\in\mathbb{C}$, $\mathcal{V}\in L^{\infty}\left(  \left(  a,b\right)
\right)  $ and $f\in W^{1,\infty}\left(  \left(  a,b\right)  \right)  $.

\begin{lemma}
\label{Lemma_Agmon}Consider the problem
\begin{equation}
\left\{
\begin{array}
[c]{l}%
\left(  -h^{2}\partial_{x}^{2}+\mathcal{V}-\zeta^{2}\right)  u=0\,,\qquad
\text{in }\left(  a,b\right)  \,,\\
\\
\left[  h\partial_{x}+i\zeta\right]  u(a)=\gamma_{a}\,,\quad\left[
h\partial_{x}-i\zeta\right]  u(b)=\gamma_{b}\,,
\end{array}
\right.  \label{Ag_eq}%
\end{equation}
where: $\mathcal{V}\in L^{\infty}\left(  \left(  a,b\right)  ,\mathbb{R}%
\right)  $, $\gamma_{a},\gamma_{b}\in\mathbb{C}$, and $h>0$ is suitably small.
Assume $\zeta\in\overline{\mathbb{C}^{+}}$ such that: $\mathcal{V}%
-\operatorname{Re}\zeta^{2}>c$ for some $c>0$ and $\varphi\left(
\cdot\right)  =d_{Ag}\left(  \cdot,K,\mathcal{V},\operatorname{Re}\zeta
^{2}\right)  $, being $K$ any compact in $\left[  a,b\right]  $; the solution
of (\ref{Ag_eq}) fulfills the estimate%
\begin{equation}
h^{\frac{1}{2}}\sup_{\left[  a,b\right]  }\left\vert e^{\frac{\varphi}{h}%
}u\right\vert +\left\Vert he^{\frac{\varphi}{h}}u^{\prime}\right\Vert
_{L^{2}\left(  \left[  a,b\right]  \right)  }+\left\Vert e^{\frac{\varphi}{h}%
}u\right\Vert _{L^{2}\left(  \left[  a,b\right]  \right)  }\leq C_{a,b,c}%
\frac{1}{h^{\frac{1}{2}}}\left(  e^{\frac{\varphi(a)}{h}}\left\vert \gamma
_{a}\right\vert +e^{\frac{\varphi(b)}{h}}\left\vert \gamma_{b}\right\vert
\right)  \,, \label{Agmon_est1}%
\end{equation}
with $C_{a,b,c}>0$ possibly depending on the data.
\end{lemma}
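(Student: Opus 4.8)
The plan is to run the energy method based on the Agmon identity (\ref{Agmon_Id}), but with a weight that is an $h$-dependent relaxation of $\varphi$: in one dimension the eikonal inequality $|\varphi^{\prime}|^{2}\le(\mathcal{V}-\operatorname{Re}\zeta^{2})_{+}=\mathcal{V}-\operatorname{Re}\zeta^{2}$ is in fact an equality almost everywhere, so the bare choice $f=\varphi$ in (\ref{Agmon_Id}) produces no coercive zeroth order term; the $h$-linear correction used below restores coercivity and is precisely what yields the power $h^{-1/2}$ on the right-hand side of (\ref{Agmon_est1}).

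First, for $\epsilon\in(0,1)$ I would apply (\ref{Agmon_Id}) with $u_{1}=u_{2}=u$ and $f=\varphi_{\epsilon}:=(1-\epsilon)\varphi$, setting $v_{\epsilon}:=e^{\varphi_{\epsilon}/h}u$. Since $u$ solves the homogeneous equation in (\ref{Ag_eq}) the left-hand side of (\ref{Agmon_Id}) vanishes, while $\int_{a}^{b}hf^{\prime}\bigl(v_{\epsilon}^{\ast}v_{\epsilon}^{\prime}-v_{\epsilon}^{\prime\ast}v_{\epsilon}\bigr)\,dx$ is purely imaginary; after taking real parts and inserting the boundary data $hu^{\prime}(a)=\gamma_{a}-i\zeta u(a)$, $hu^{\prime}(b)=\gamma_{b}+i\zeta u(b)$, the endpoint terms split into a quadratic part carrying the coefficient $\operatorname{Im}\zeta=\operatorname{Re}(-i\zeta)\ge0$ (this is the only place where $\zeta\in\overline{\mathbb{C}^{+}}$ is used), which is non-negative and kept on the left, and a part linear in $\gamma_{a},\gamma_{b}$. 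This leaves
\[
\|hv_{\epsilon}^{\prime}\|_{L^{2}(a,b)}^{2}+\int_{a}^{b}\!\bigl(\mathcal{V}-\operatorname{Re}\zeta^{2}-(\varphi_{\epsilon}^{\prime})^{2}\bigr)|v_{\epsilon}|^{2}\,dx\le h\,|v_{\epsilon}(a)|\,|\beta_{a}^{\epsilon}|+h\,|v_{\epsilon}(b)|\,|\beta_{b}^{\epsilon}|,\qquad\beta_{y}^{\epsilon}:=e^{\varphi_{\epsilon}(y)/h}\gamma_{y}.
\]
The eikonal bound $(\varphi_{\epsilon}^{\prime})^{2}\le(1-\epsilon)^{2}(\mathcal{V}-\operatorname{Re}\zeta^{2})$ gives $\mathcal{V}-\operatorname{Re}\zeta^{2}-(\varphi_{\epsilon}^{\prime})^{2}\ge\epsilon c$, so the left-hand side dominates $\min(1,\epsilon c)\,\|v_{\epsilon}\|_{H^{1,h}([a,b])}^{2}$.

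Next I would bound the boundary traces on the right by the one-dimensional $h$-weighted Sobolev inequality $h^{1/2}\sup_{[a,b]}|v_{\epsilon}|\lesssim\|v_{\epsilon}\|_{H^{1,h}([a,b])}$ (this is (\ref{h_GN})), so that $h\,|v_{\epsilon}(y)|\lesssim h^{1/2}\|v_{\epsilon}\|_{H^{1,h}}$; absorbing one factor of $\|v_{\epsilon}\|_{H^{1,h}}$ gives $\|v_{\epsilon}\|_{H^{1,h}([a,b])}\lesssim h^{1/2}\min(1,\epsilon c)^{-1}\bigl(|\beta_{a}^{\epsilon}|+|\beta_{b}^{\epsilon}|\bigr)$. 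The decisive step is then the choice $\epsilon=h$, admissible for $h$ small so that $\min(1,hc)=hc$ and the prefactor becomes $c^{-1}h^{-1/2}$. Finally, with $v_{h}:=e^{(1-h)\varphi/h}u$ one has $e^{\varphi/h}u=e^{\varphi}v_{h}$ and $0\le\varphi\le\|\varphi\|_{L^{\infty}(a,b)}=:\Phi_{0}$ with $\Phi_{0}$ bounded in terms of the data, so the bounded multiplier $e^{\varphi}$ relates the weighted $H^{1,h}$-norm of $u$ to $\|v_{h}\|_{H^{1,h}}$ up to $e^{\Phi_{0}}$, while $|\beta_{y}^{h}|=e^{(1-h)\varphi(y)/h}|\gamma_{y}|\le e^{\varphi(y)/h}|\gamma_{y}|$ because $\varphi(y)\ge0$. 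Combining these, using $h^{1/2}\sup|\cdot|\lesssim\|\cdot\|_{H^{1,h}}$ once more for the supremum term and $\|he^{\varphi/h}u^{\prime}\|\le\|h(e^{\varphi/h}u)^{\prime}\|+\|\varphi^{\prime}\|_{L^{\infty}}\|e^{\varphi/h}u\|$ for the derivative term, one obtains (\ref{Agmon_est1}) with a constant of the stated form.

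The only genuine obstacle I anticipate is the eikonal degeneracy mentioned above, i.e.\ the impossibility of using the exact Agmon weight $\varphi$; the remedy is the $h$-linear relaxation $\varphi\mapsto(1-h)\varphi$, whose sole effect is the trade of $h^{1/2}$ for $h^{-1/2}$ in the constant. The remaining points are routine: $\varphi$, being the Agmon distance associated with $\mathcal{V}\in L^{\infty}$, is Lipschitz, hence $\varphi_{\epsilon}\in W^{1,\infty}(a,b)$ as required in (\ref{Agmon_Id}); the solution of (\ref{Ag_eq}) belongs to $H^{2}(a,b)$ by elliptic bootstrap, which legitimises its boundary traces and the integration by parts behind (\ref{Agmon_Id}); and existence and uniqueness of that solution follow from the case $\varphi\equiv0$ of the same energy identity.
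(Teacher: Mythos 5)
Your proof is correct and follows essentially the same route as the paper's. The paper also relaxes the Agmon weight to restore $h$-scale coercivity of the zeroth-order term, but it does so by taking $\varphi_{h}(\cdot)=d_{Ag}(\cdot,K,\mathcal{V}-h,\operatorname{Re}\zeta^{2})$ (i.e.\ lowering the potential by $h$ before taking the Agmon distance), whereas you take $\varphi_{\epsilon}=(1-\epsilon)\varphi$ and then set $\epsilon=h$; both give $\mathcal{V}-\operatorname{Re}\zeta^{2}-(\varphi_{h}^{\prime})^{2}\gtrsim h$ off $K$ and exact vanishing on $K$, and both satisfy $\varphi-\varphi_{h}=\mathcal{O}(h)$ uniformly, so the final replacement $e^{\varphi_{h}/h}\rightsquigarrow e^{\varphi/h}$ costs only a bounded factor in each case. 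Beyond that the two arguments are step-for-step identical: drop the purely imaginary cross term in (\ref{Agmon_Id}), use $\operatorname{Im}\zeta\geq0$ to discard the sign-definite part of the boundary contribution, control the remaining boundary traces by the $h$-Gagliardo--Nirenberg inequality (\ref{h_GN}), absorb one power of $\|v\|_{H^{1,h}}$, and recover the $H^{1,h}$ bound with the $h^{-1/2}$ prefactor. Your multiplicative relaxation is arguably a touch cleaner because it never asks that $\mathcal{V}-h-\operatorname{Re}\zeta^{2}$ stay positive (the paper needs $h<c$ for that), but this is a cosmetic rather than substantive difference.
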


\begin{proof}
Let $\varphi_{h}\left(  \cdot\right)  =d_{Ag}\left(  \cdot,K,\mathcal{V}%
-h,\operatorname{Re}\zeta^{2}\right)  $; using the identity (\ref{Agmon_Id})
with $u_{1}=u_{2}=u$ and $f=\varphi_{h}$, we get
\begin{equation}
\left\Vert hv^{\prime}\right\Vert _{L^{2}\left(  \left[  a,b\right]  \right)
}^{2}+\int_{a}^{b}\left(  \mathcal{V}-\zeta^{2}-\varphi_{h}^{\prime2}\right)
\left\vert v\right\vert ^{2}\,dx+2ih\int_{a}^{b}\varphi_{h}^{\prime
}\operatorname{Im}\left(  v^{\ast}v^{\prime}\right)  \,dx+h^{2}\left(
e^{2\frac{\varphi_{h}(a)}{h}}u^{\ast}u^{\prime}(a)-e^{2\frac{\varphi_{h}%
(b)}{h}}u^{\ast}u^{\prime}(b)\right)  =0\,, \label{Agmon_est0}%
\end{equation}
with: $v=e^{\frac{\varphi_{h}}{h}}u$. For $h$ small, the condition:
$\mathcal{V}-\operatorname{Re}\zeta^{2}>c$ entails $\mathcal{V}%
-h-\operatorname{Re}\zeta^{2}>0$ and
\begin{equation}
\varphi_{h}^{\prime}=\left\{
\begin{array}
[c]{lll}%
0\,, &  & \text{in }K\,,\\
&  & \\
\left(  \mathcal{V}-h-\operatorname{Re}\zeta^{2}\right)  ^{\frac{1}{2}}\,, &
& \text{in }\left[  a,b\right]  \backslash K\,.
\end{array}
\right.  \label{phi_h_prime}%
\end{equation}
Then, the real part of (\ref{Agmon_est0}) reads as%
\begin{equation}
\left\Vert hv^{\prime}\right\Vert _{L^{2}\left(  \left[  a,b\right]  \right)
}^{2}+\int_{K}\left(  \mathcal{V}-\operatorname{Re}\zeta^{2}\right)
\left\vert v\right\vert ^{2}\,dx+h\int_{\left[  a,b\right]  \backslash
K}\left\vert v\right\vert ^{2}\,dx+h^{2}\operatorname{Re}\left(
e^{2\frac{\varphi_{h}(a)}{h}}u^{\ast}u^{\prime}(a)-e^{2\frac{\varphi_{h}%
(b)}{h}}u^{\ast}u^{\prime}(b)\right)  =0\,.
\end{equation}
Taking into account the boundary conditions in (\ref{Ag_eq}), our assumptions
($\mathcal{V}-\operatorname{Re}\zeta^{2}>c$ and $\operatorname{Im}\zeta\geq0$)
imply%
\begin{equation}
\left\Vert hv^{\prime}\right\Vert _{L^{2}\left(  \left[  a,b\right]  \right)
}^{2}+h\left\Vert v\right\Vert _{L^{2}\left(  \left[  a,b\right]  \right)
}^{2}\leq h\operatorname{Re}\left(  e^{2\frac{\varphi_{h}(a)}{h}}\left\vert
u^{\ast}(a)\right\vert \left\vert \gamma_{a}\right\vert -e^{2\frac{\varphi
_{h}(b)}{h}}\left\vert u^{\ast}(b)\right\vert \left\vert \gamma_{b}\right\vert
\right)  \,. \label{Agmon_est1_0}%
\end{equation}
Using (\ref{h_GN}), this leads to%
\begin{equation}
\left\Vert hv^{\prime}\right\Vert _{L^{2}\left(  \left[  a,b\right]  \right)
}^{2}+h\left\Vert v\right\Vert _{L^{2}\left(  \left[  a,b\right]  \right)
}^{2}\leq C_{b-a}h^{\frac{1}{2}}\left\Vert v\right\Vert _{H^{1,h}\left(
\left[  a,b\right]  \right)  }\operatorname{Re}\left(  e^{2\frac{\varphi
_{h}(a)}{h}}\left\vert \gamma_{a}\right\vert -e^{2\frac{\varphi_{h}(b)}{h}%
}\left\vert \gamma_{b}\right\vert \right)  \,, \label{Agmon_est1_1}%
\end{equation}
which yields%
\[
\left\Vert v\right\Vert _{H^{1,h}\left(  \left[  a,b\right]  \right)  }%
\leq\frac{C_{a,b,c}}{h^{\frac{1}{2}}}\left(  e^{\frac{\varphi_{h}(a)}{h}%
}\left\vert \gamma_{a}\right\vert +e^{\frac{\varphi_{h}(b)}{h}}\left\vert
\gamma_{b}\right\vert \right)  \,.
\]
Due to our assumptions, it results: $\left\vert \varphi-\varphi_{h}\right\vert
=\mathcal{O}\left(  h\right)  $, uniformly w.r.t. $x\in\left[  a,b\right]  $.
Hence, $e^{\frac{\varphi}{h}}\sim e^{\frac{\varphi_{h}}{h}}$ as $h\rightarrow
0$ and the previous inequality rephrases as%
\begin{equation}
\left\Vert v\right\Vert _{H^{1,h}\left(  \left[  a,b\right]  \right)  }%
\leq\frac{C_{a,b,c}}{h^{\frac{1}{2}}}\left(  e^{\frac{\varphi(a)}{h}%
}\left\vert \gamma_{a}\right\vert +e^{\frac{\varphi(b)}{h}}\left\vert
\gamma_{b}\right\vert \right)  \,. \label{Agmon_est3}%
\end{equation}
Finally, the estimate (\ref{Agmon_est1}) is deduced from (\ref{Agmon_est3}) by
taking into account (\ref{h_GN}) and (\ref{phi_h_prime}).
\end{proof}

When the operator $Q_{0,0}^{h}(V)$ is defined with a potential $V$ bounded
from below and compactly supported in $\left[  a,b\right]  $, the related
Green's functions $\mathcal{G}^{z,h}\left(  \cdot,y,V\right)  $ and
$\mathcal{H}^{z,h}\left(  \cdot,y,V\right)  $ solve equations of the type
(\ref{Ag_eq}). Hence, exponential decay estimates are obtained as a
straightforward consequence of the previous Lemma.

\begin{lemma}
\label{Lemma_H1_est}Let $\mathcal{V}\in L^{\infty}\left(  \mathbb{R}%
,\mathbb{R}\right)  $ be supported on $\left[  a,b\right]  $ and assume
$\zeta\in\overline{\mathbb{C}^{+}}$ such that: $\mathcal{V}-\operatorname{Re}%
\zeta^{2}>c$. For $y\in\left\{  a,b\right\}  $, the functions $\mathcal{G}%
^{\zeta^{2},h}\left(  \cdot,y,\mathcal{V}\right)  $ and $\mathcal{H}%
^{\zeta^{2},h}\left(  \cdot,y,\mathcal{V}\right)  $, defined in
(\ref{Green_eq})-(\ref{D_Green_eq}), allow the estimates%
\[%
\begin{array}
[c]{lll}%
i) &  & h^{\frac{1}{2}}\sup_{\left[  a,b\right]  }\left\vert e^{\frac{\varphi
}{h}}\mathcal{G}^{\zeta^{2},h}\left(  \cdot,y,\mathcal{V}\right)  \right\vert
+\left\Vert he^{\frac{\varphi}{h}}\partial_{1}\mathcal{G}^{\zeta^{2},h}\left(
\cdot,y,\mathcal{V}\right)  \right\Vert _{L^{2}\left(  \left[  a,b\right]
\right)  }+\left\Vert e^{\frac{\varphi}{h}}\mathcal{G}^{\zeta^{2},h}\left(
\cdot,y,\mathcal{V}\right)  \right\Vert _{L^{2}\left(  \left[  a,b\right]
\right)  }\leq\frac{C_{a,b,c}}{h^{\frac{3}{2}}}e^{\frac{\varphi(y)}{h}}\,,\\
&  & \\
ii) &  & h^{\frac{1}{2}}\sup_{\left[  a,b\right]  }\left\vert e^{\frac
{\varphi}{h}}\mathcal{H}^{\zeta^{2},h}\left(  \cdot,y,\mathcal{V}\right)
\right\vert +\left\Vert he^{\frac{\varphi}{h}}\partial_{1}\mathcal{H}%
^{\zeta^{2},h}\left(  \cdot,y,\mathcal{V}\right)  \right\Vert _{L^{2}\left(
\left[  a,b\right]  \right)  }+\left\Vert e^{\frac{\varphi}{h}}\mathcal{H}%
^{\zeta^{2},h}\left(  \cdot,y,\mathcal{V}\right)  \right\Vert _{L^{2}\left(
\left[  a,b\right]  \right)  }\leq\frac{C_{a,b,c}}{h^{\frac{5}{2}}}%
e^{\frac{\varphi(y)}{h}}\left\vert \zeta\right\vert \,,
\end{array}
\]
with: $\varphi\left(  \cdot\right)  =d_{Ag}\left(  \cdot,K,\mathcal{V}%
,\operatorname{Re}\zeta^{2}\right)  $, where $K$ is any compact in $\left[
a,b\right]  $, $C_{a,b,c}>0$ possibly depending on the data, and $h>0$ small.
\end{lemma}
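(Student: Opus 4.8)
Lemma \ref{Lemma_H1_est} is a direct corollary of Lemma \ref{Lemma_Agmon}: the strategy is to verify that $\mathcal{G}^{\zeta^2,h}(\cdot,y,\mathcal{V})$ and $\mathcal{H}^{\zeta^2,h}(\cdot,y,\mathcal{V})$, restricted to $(a,b)$, solve boundary value problems of the type \eqref{Ag_eq} with explicitly computable boundary data $\gamma_a,\gamma_b$, and then feed those data into the estimate \eqref{Agmon_est1}. Since $\operatorname{supp}\mathcal{V}=[a,b]$, outside $[a,b]$ the defining equations $(-h^2\partial_x^2+\mathcal{V}-\zeta^2)\mathcal{G}^{\zeta^2,h}=0$ and the analogous one for $\mathcal{H}^{\zeta^2,h}$ reduce to the free equation $-h^2\partial_x^2 u=\zeta^2 u$, whose only solution compatible with membership in $D(Q^h(\mathcal{V}))$ (i.e. with the $L^2$/decay selection implicit in the Green's function construction, using the branch $(\zeta^2)^{1/2}=\zeta$ valid for $\operatorname{Im}\zeta^2\geq 0$) is a purely outgoing exponential $c_+ e^{i\zeta(x-b)/h}$ for $x>b$ and $c_- e^{i\zeta(a-x)/h}$ for $x<a$. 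Differentiating these exterior expressions and matching with the interior trace at $a$ and $b$ gives exactly the Robin-type relations $[h\partial_x+i\zeta]u(a)=\gamma_a$, $[h\partial_x-i\zeta]u(b)=\gamma_b$, i.e. the boundary conditions in \eqref{Ag_eq}.

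The next step is to read off $\gamma_a,\gamma_b$ from the jump conditions in \eqref{Green_eq} and \eqref{D_Green_eq} for the two cases $y=a$ and $y=b$. For $\mathcal{G}^{\zeta^2,h}(\cdot,a,\mathcal{V})$, the singularity at $y=a$ coincides with the boundary point $a$: the flux jump $h^2(\partial_1\mathcal{G}(a^+,a)-\partial_1\mathcal{G}(a^-,a))=-1$ together with the exterior relation at $a^-$ translates into a boundary datum of size $\gamma_a=\mathcal{O}(1/h)$ and $\gamma_b=0$ (and symmetrically for $y=b$); for $\mathcal{H}^{\zeta^2,h}(\cdot,a,\mathcal{V})$ the value jump $h^2(\mathcal{H}(a^+,a)-\mathcal{H}(a^-,a))=1$ produces $\gamma_a=\mathcal{O}(\zeta/h^2)$, $\gamma_b=0$. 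Inserting $|\gamma_a|\lesssim 1/h$ (resp. $|\gamma_a|\lesssim|\zeta|/h^2$) and $\varphi(a)=0$ when $K$ is chosen so that $a\in K$ — and more generally noting $\varphi(y)$ is the Agmon distance of $y$ to $K$ — into \eqref{Agmon_est1} yields, for the $\mathcal{G}$ case,
\[
h^{1/2}\sup_{[a,b]}|e^{\varphi/h}\mathcal{G}^{\zeta^2,h}|+\|he^{\varphi/h}\partial_1\mathcal{G}^{\zeta^2,h}\|_{L^2}+\|e^{\varphi/h}\mathcal{G}^{\zeta^2,h}\|_{L^2}\leq \frac{C_{a,b,c}}{h^{1/2}}\cdot e^{\varphi(y)/h}\cdot\frac{C}{h}=\frac{C_{a,b,c}}{h^{3/2}}e^{\varphi(y)/h},
\]
which is precisely estimate $(i)$; for $\mathcal{H}$ the extra factor $|\zeta|/h$ in the datum gives the $h^{-5/2}|\zeta|$ bound of $(ii)$.

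The only genuine point requiring care — and the place I expect the main subtlety — is the bookkeeping of which square-root branch is used and the sign of $\operatorname{Im}\zeta$: Lemma \ref{Lemma_Agmon} needs $\operatorname{Im}\zeta\geq 0$ for the boundary terms in \eqref{Agmon_est0} to have the right sign, and one must check that the Green's function construction of \eqref{Green_eq}–\eqref{D_Green_eq}, for $\zeta^2$ in the stated range $\zeta\in\overline{\mathbb{C}^+}\cap\{\operatorname{Im}\zeta^2\geq 0\}$ with $\mathcal{V}-\operatorname{Re}\zeta^2>c$, indeed selects exterior modes decaying (or at worst bounded and outgoing) as $|x|\to\infty$, so that the exterior-to-interior matching produces exactly the Robin data of \eqref{Ag_eq} with no spurious incoming component. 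A secondary bookkeeping issue is that \eqref{Agmon_est1} as stated controls $\sup|e^{\varphi/h}u|$ only after multiplication by $h^{1/2}$, so one must track that the $h$-powers announced in $(i)$ and $(ii)$ are consistent across the three terms — but this is routine once the boundary data are fixed. Everything else is a direct substitution.
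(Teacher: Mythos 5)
Your proposal is correct and follows essentially the same route as the paper: identify $\mathcal{G}^{\zeta^2,h}(\cdot,y,\mathcal{V})$ and $\mathcal{H}^{\zeta^2,h}(\cdot,y,\mathcal{V})$ as solutions of \eqref{Ag_eq}, compute the Robin boundary data from the jump conditions in \eqref{Green_eq}--\eqref{D_Green_eq} together with the outgoing exterior form (the paper records $\gamma_a=-1/h$, $\gamma_b=0$ for $\mathcal{G}$ and $\gamma_a=i\zeta/h^2$, $\gamma_b=0$ for $\mathcal{H}$ when $y=a$, matching your orders), and substitute into the weighted Agmon estimate of Lemma \ref{Lemma_Agmon}. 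The "subtlety" you flag about branch and exterior matching is handled implicitly in the paper by the Green's-function construction (which already selects the $L^2$/outgoing modes), so it does not require extra work; note also that the hypothesis is simply $\zeta\in\overline{\mathbb{C}^+}$ — the additional constraint $\operatorname{Im}\zeta^2\geq 0$ you introduce is not needed here.
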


\begin{proof}
We explicitly consider the case $y=a$; the result for $y=b$ follows from a
similar argument. According to (\ref{Green_eq})-(\ref{D_Green_eq}),
$\mathcal{G}^{\zeta^{2},h}\left(  \cdot,a,\mathcal{V}\right)  $ and
$\mathcal{H}^{\zeta^{2},h}\left(  \cdot,a,\mathcal{V}\right)  $ are the
solutions of equations of type-(\ref{Ag_eq}) with: $\gamma_{a}=-\frac{1}{h}$
and $\gamma_{b}=0$ in the case of $\mathcal{G}^{\zeta^{2},h}$, and:
$\gamma_{a}=\frac{i\zeta}{h^{2}}$ and $\gamma_{b}=0$ in the case of
$\mathcal{H}^{\zeta^{2},h}$. From our assumptions, the Lemma \ref{Lemma_Agmon}
applies, leading in this case to the inequalities $(i)$ and $(ii)$.
\end{proof}

Similar bounds, holding in the case of generalized eigenfunctions, have been
given in Proposition 4.4 of \cite{FMN2}. Let us recall here this result.

\begin{lemma}
\label{Lemma_H1_est1}Let $\mathcal{V}\in L^{\infty}\left(  \mathbb{R}%
,\mathbb{R}\right)  $ be supported on $\left[  a,b\right]  $ and assume
$k\in\mathbb{R}$ such that: $\mathcal{V}-\operatorname{Re}k^{2}>c$. For
$y\in\left\{  a,b\right\}  $, the functions $\psi_{0,0}^{h}(\cdot
,k,\mathcal{V})$, defined in (\ref{gen_eigenfun_eq})-(\ref{gen_eigenfun_ext2}%
), allow the estimates%
\[
h^{\frac{1}{2}}\sup_{\left[  a,b\right]  }\left\vert e^{\frac{\varphi}{h}}%
\psi_{0,0}^{h}(\cdot,k,\mathcal{V})\right\vert +\left\Vert he^{\frac{\varphi
}{h}}\partial_{1}\psi_{0,0}^{h}(\cdot,k,\mathcal{V})\right\Vert _{L^{2}\left(
\left[  a,b\right]  \right)  }+\left\Vert e^{\frac{\varphi}{h}}\psi_{0,0}%
^{h}(\cdot,k,\mathcal{V})\right\Vert _{L^{2}\left(  \left[  a,b\right]
\right)  }\leq C_{a,b,c}h^{-\frac{1}{2}}\,,
\]
where $\varphi\left(  \cdot\right)  =d_{Ag}\left(  \cdot,a,\mathcal{V}%
,k^{2}\right)  $ if $k>0$ and $\varphi\left(  \cdot\right)  =d_{Ag}\left(
\cdot,b,\mathcal{V},k^{2}\right)  $ if $k<0$.
\end{lemma}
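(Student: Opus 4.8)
The plan is to reduce the statement to a direct application of Lemma \ref{Lemma_Agmon}, exactly in the spirit of the reduction already used inside the proof of Proposition \ref{Proposition_trace_est_h}. First I would observe that, since $\mathcal{V}$ is supported in $\left[a,b\right]$ and belongs to $L^{\infty}$, the generalized eigenfunction $\psi_{0,0}^{h}(\cdot,k,\mathcal{V})$ is $\mathcal{C}^{1}$ across the points $a$ and $b$, and its restriction to $\left[a,b\right]$ solves $\left(-h^{2}\partial_{x}^{2}+\mathcal{V}-k^{2}\right)u=0$. Inserting the exterior representations (\ref{gen_eigenfun_ext1})--(\ref{gen_eigenfun_ext2}) and matching the traces of $u$ and $h\partial_{x}u$ at $a$ and $b$ eliminates the amplitudes $R^{h},T^{h}$ and shows that $\psi_{0,0}^{h}(\cdot,k,\mathcal{V})$ solves a boundary value problem of the type (\ref{Ag_eq}): for $k>0$ one gets $\zeta=k$, $\gamma_{a}=2ik\,e^{i\frac{k}{h}a}$, $\gamma_{b}=0$, and for $k<0$ the endpoints swap, giving $\zeta=\left\vert k\right\vert$, $\gamma_{a}=0$, $\gamma_{b}=2ik\,e^{i\frac{k}{h}b}$.

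Next I would check that the hypotheses of Lemma \ref{Lemma_Agmon} are met: one has $\zeta\in\overline{\mathbb{C}^{+}}$ with $\operatorname{Im}\zeta=0$, and $\mathcal{V}-\operatorname{Re}\zeta^{2}=\mathcal{V}-k^{2}>c$ by assumption; moreover this same assumption forces $k^{2}<\left\Vert\mathcal{V}\right\Vert_{L^{\infty}\left(\mathbb{R}\right)}$, so that $\left\vert k\right\vert$, and hence the non-vanishing one among $\left\vert\gamma_{a}\right\vert,\left\vert\gamma_{b}\right\vert$, is bounded by a constant depending only on the data. Then I would apply estimate (\ref{Agmon_est1}) with $K=\left\{a\right\}$ when $k>0$ (so $\varphi=d_{Ag}\left(\cdot,a,\mathcal{V},k^{2}\right)$ with $\varphi(a)=0$) and $K=\left\{b\right\}$ when $k<0$; since exactly one of $\gamma_{a},\gamma_{b}$ vanishes while $\varphi$ vanishes at the corresponding endpoint, the right-hand side of (\ref{Agmon_est1}) collapses to $C_{a,b,c}h^{-\frac12}$, which is the desired bound.

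I do not expect a genuine obstacle here: the computation producing the type-(\ref{Ag_eq}) boundary conditions from (\ref{gen_eigenfun_ext1})--(\ref{gen_eigenfun_ext2}) is elementary, and the exponential weight is handled entirely by Lemma \ref{Lemma_Agmon}. The only point demanding a little attention is the uniform boundedness of $\left\vert k\right\vert$ on the region $\left\{\mathcal{V}-k^{2}>c\right\}$, which is immediate from $\mathcal{V}\in L^{\infty}$. Since this result is already recorded as Proposition 4.4 of \cite{FMN2}, I would simply remark that the proof is the one just outlined and reduces to Lemma \ref{Lemma_Agmon}.
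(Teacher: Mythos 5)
Your proof is correct and fills in the argument that the paper merely cites: for this lemma the paper gives no proof at all, referring instead to Proposition 4.4 of \cite{FMN2} with the remark that the bounds are ``similar'' to those of Lemma \ref{Lemma_H1_est}. Your reduction to Lemma \ref{Lemma_Agmon} is exactly the route the paper has in mind: the identification of the boundary data $\gamma_{a}=2ike^{ika/h}$, $\gamma_{b}=0$ (for $k>0$, with $\zeta=k$) and $\gamma_{a}=0$, $\gamma_{b}=2ike^{ikb/h}$ (for $k<0$, with $\zeta=\left\vert k\right\vert$) is the same one used verbatim inside the proof of Proposition \ref{Proposition_trace_est_h}, and the choice $K=\{a\}$ (resp.\ $K=\{b\}$) makes $\varphi$ vanish exactly at the endpoint whose $\gamma$-coefficient is nonzero, so that (\ref{Agmon_est1}) yields the claimed $C_{a,b,c}h^{-1/2}$ once one notes, as you do, that $\left\vert k\right\vert\leq\left\Vert\mathcal{V}\right\Vert_{L^{\infty}}^{1/2}$ on the admissible energy set. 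In short, you have reproduced precisely the proof that the paper delegates to \cite{FMN2}, and it is the natural companion to the paper's own proof of Lemma \ref{Lemma_H1_est}.
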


\section{\label{App_jostest}The Jost's solutions in the $h$-dependent case}

In what follows: $\mathcal{C}_{x}^{n}(U)$ is the set of $\mathcal{C}^{n}%
$-continuous functions w.r.t. $x\in U\subseteq\mathbb{R}$, while
$\mathcal{H}_{z}(D)$ is the set of holomorphic functions w.r.t. $z\in
D\subseteq\mathbb{C}$. Let assume%
\begin{equation}
\mathcal{V}\in L^{\infty}\left(  \mathbb{R},\mathbb{R}\right)  \,,\quad
\text{supp }\mathcal{V}=\left[  a,b\right]  \,,\quad\inf\mathcal{V}>c\,,
\label{V_barrier}%
\end{equation}
for some positive $c$, and $\chi_{\pm}^{h}\left(  \cdot,\zeta,\mathcal{V}%
\right)  $ denote the solutions of the equation%
\begin{equation}
\left(  -h^{2}\partial_{x}^{2}+\mathcal{V}\right)  u=\zeta^{2}u\,,
\label{Jost_eq_h1}%
\end{equation}
fulfilling the conditions%
\begin{equation}
\left.  \chi_{+}^{h}\left(  \cdot,\zeta,\mathcal{V}\right)  \right\vert
_{x>b}=e^{i\frac{\zeta}{h}x}\,,\qquad\left.  \chi_{-}^{h}\left(  \cdot
,\zeta,\mathcal{V}\right)  \right\vert _{x<a}=e^{-i\frac{\zeta}{h}x}\,.
\label{Jost_sol_ext_h}%
\end{equation}
It is well known that these functions, usually referred to as the 'Jost's
solutions' of (\ref{Jost_eq_h1}), are $\mathcal{C}_{x}^{1}\left(
\mathbb{R},\,\mathcal{H}_{\zeta}\left(  \mathbb{C}^{+}\right)  \right)  $ for
any fixed $h>0$ and have continuous extensions w.r.t. $\zeta$ to the real axis
with the only possible exception of $k=0$, while, in the particular case of a
potential barrier, $\zeta\rightarrow\chi_{\pm}^{h}\left(  \cdot,\zeta
,\mathcal{V}\right)  $ extends to the whole real line including the origin
(for generic $L^{1}$-potentials we refer to \cite{Yafa}, while the case of a
barrier is explicitly considered in \cite{Man1} when $h=1$). Our purpose is to
study the small-$h$ beahviour of $\chi_{+}^{h}\left(  \cdot,k,\mathcal{V}%
\right)  $ under suitable restrictions on $k^{2}$ and $\mathcal{V}$. We next
introduce the notation
\begin{equation}
\Omega_{c}\left(  \mathcal{V}\right)  =\left\{  k\in\mathbb{R}\,\left\vert
\ \mathcal{V}-k^{2}>c\right.  \right\}  \,. \label{Omega_V}%
\end{equation}
According to the assumption (\ref{V_barrier}), $\Omega_{c}\left(
\mathcal{V}\right)  $ is a non-empty and bounded subset provided that $c>0$ is
small enough.

\begin{proposition}
\label{Proposition_h_bounds}Let $\mathcal{V}$ be defined by (\ref{V}),
$h\in\left(  0,h_{0}\right]  $ with $h_{0}$ is suitably small and assume
$k\in\Omega_{c}\left(  \mathcal{V}\right)  $ for some $c>0$. The relations%
\begin{equation}
\chi_{\pm}^{h}\left(  \cdot,k,\mathcal{V}\right)  =\mathcal{O}\left(
1\right)  \,,\qquad\partial_{1}\chi_{\pm}^{h}\left(  \cdot,k,\mathcal{V}%
\right)  =\mathcal{O}\left(  \frac{1}{h}\right)  \,, \label{small_h_bounds}%
\end{equation}
hold being the symbols $\mathcal{O}\left(  \cdot\right)  $ referred to the
metric space $\left.  \mathbb{R}\times\Omega_{c}\left(  \mathcal{V}\right)
\times\left(  0,h_{0}\right]  \right.  $.
\end{proposition}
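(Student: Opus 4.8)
The plan is to treat the Jost solutions as solutions of a Volterra integral equation in the exterior-free (i.e. $x\le b$ for $\chi_+^h$, $x\ge a$ for $\chi_-^h$) region and to extract the $h$-dependence by an exponential-weight energy argument that is already available from Lemma \ref{Lemma_Agmon}. I will focus on $\chi_+^h(\cdot,k,\mathcal{V})$; the statement for $\chi_-^h$ follows by the reflection $x\mapsto -x$ together with the corresponding relabelling $a\leftrightarrow b$. For $x>b$ the potential vanishes, so $\chi_+^h(x,k,\mathcal{V})=e^{i\frac{k}{h}x}$ has modulus $1$ and $\partial_x\chi_+^h=\mathcal{O}(1/h)$ trivially; the content of the claim is the behaviour on the bounded interval $[a,b]$ (and then, by the same free-evolution argument, on $x<a$, where $\chi_+^h$ is a combination of $e^{\pm i\frac{k}{h}x}$ with $h$-independent coefficients once the bounds at $x=a$ are known).

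First I would note that on $[a,b]$ the function $u=\chi_+^h(\cdot,k,\mathcal{V})$ solves the boundary value problem \eqref{Ag_eq} of Lemma \ref{Lemma_Agmon} with $\zeta=k$, $\mathcal{V}-\operatorname{Re}k^2 = \mathcal{V}-k^2 > c$ by the hypothesis $k\in\Omega_c(\mathcal{V})$, and with right-hand data $\gamma_a,\gamma_b$ read off from the matching at $x=b$: since $u(b)=e^{i\frac{k}{h}b}$ and $u'(b)=\frac{ik}{h}e^{i\frac{k}{h}b}$, the datum at $b$ is $\gamma_b=[h\partial_x-ik]u(b)=0$, while $\gamma_a$ is not prescribed directly. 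To circumvent this I would instead solve the problem ``from the right'': set up the backward Volterra integral equation
\begin{equation}
u(x)=e^{i\frac{k}{h}x}+\frac{1}{k h}\int_x^{b}\sin\!\Big(\tfrac{k}{h}(y-x)\Big)\,\mathcal{V}(y)\,u(y)\,dy\,,\qquad x\le b\,,
\end{equation}
which builds in the exterior condition at $b$ automatically (the two fundamental solutions of $-h^2\partial_x^2-k^2$ being $e^{\pm i\frac{k}{h}x}$, and $\frac{1}{kh}\sin(\frac{k}{h}(y-x))$ the associated Green kernel). Because $k$ stays in the bounded, bounded-away-from-zero set $\Omega_c(\mathcal{V})$ and $\|\mathcal{V}\|_{L^\infty}\le 1/c$, the kernel is $\mathcal{O}(1/h)\cdot(b-a)$ in sup norm but — crucially — after iterating, the natural quantity that closes is $\sup_{[a,b]}|u|$, and Grönwall/Neumann-series estimates on this Volterra equation give $\sup_{[x,b]}|u|\le e^{C(b-a)/\,?}$; the naive bound is exponentially large in $1/h$, which is \emph{not} good enough, so the Volterra route alone does not suffice.

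Hence the main step — and the main obstacle — is upgrading the crude Volterra bound to a uniform-in-$h$ bound. Here I would invoke the energy estimate directly. Multiply the equation $(-h^2\partial_x^2+\mathcal{V}-k^2)u=0$ on $[a,b]$ by $\bar u$, integrate, and use the (now known) boundary data at $x=b$ together with an a priori one at $x=a$; the key point is that $\mathcal{V}-k^2\ge c>0$ on all of $[a,b]$, so there is no resonant denominator, and the argument of Lemma \ref{Lemma_Agmon} with $f\equiv 0$ (trivial weight) yields $\|hu'\|_{L^2(a,b)}^2+\|u\|_{L^2(a,b)}^2\lesssim |u(a)|\,|\gamma_a|+|u(b)|\,|\gamma_b|$, i.e. (with $\gamma_b=0$) a bound controlled solely by the trace at $x=a$. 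To close the loop between ``the trace at $a$'' and ``the global sup'', I would run the Volterra iteration only on a \emph{one}-step basis to express $u(a)$ and $u'(a)$ in terms of $\sup_{[a,b]}|u|$, feed that into the $h$-Gagliardo–Nirenberg inequality \eqref{h_GN} $h^{1/2}\sup_{[a,b]}|u|\lesssim\|u\|_{H^{1,h}(a,b)}$, and absorb: the positivity $\mathcal{V}-k^2>c$ produces a genuine coercive term that beats the $\mathcal{O}(1)$ (not $\mathcal{O}(1/h)$, after the $\sin$-kernel is integrated against the coercivity) perturbation, giving $\sup_{[a,b]}|u|=\mathcal{O}(1)$ uniformly in $h\in(0,h_0]$ and $k\in\Omega_c(\mathcal{V})$. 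Once $\sup_{[a,b]}|u|=\mathcal{O}(1)$ is in hand, differentiating the Volterra equation gives $\partial_x u(x)=\frac{ik}{h}e^{i\frac{k}{h}x}-\frac{1}{h^2}\int_x^b\cos(\frac{k}{h}(y-x))\mathcal{V}(y)u(y)\,dy$; the first term is $\mathcal{O}(1/h)$ and the integral is $\mathcal{O}(1/h^2)\cdot\mathcal{O}(1)\cdot\mathcal{O}(h)$ once one integrates by parts once in $y$ using the oscillation (or, more simply, re-reads $\partial_x u$ off the energy bound $\|hu'\|_{L^2}=\mathcal{O}(1)$ plus $h^2\partial_x^2 u=(\mathcal{V}-k^2)u=\mathcal{O}(1)$ to get $\partial_x u=\mathcal{O}(1/h)$ pointwise via \eqref{h_GN} applied to $hu'$). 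This establishes \eqref{small_h_bounds} on $[a,b]$; propagating the bound to $x<a$ is immediate since there $u$ is an exact linear combination of $e^{\pm i\frac{k}{h}x}$ whose coefficients are fixed by $u(a),u'(a)$, both of which are now $\mathcal{O}(1)$ and $\mathcal{O}(1/h)$ respectively. Finally, holomorphy in $\zeta$ and $\mathcal{C}^1$-regularity in $x$ being already recalled before the statement, the $\mathcal{O}(\cdot)$'s in the sense of Definition \ref{Landau_Notation} are legitimate, completing the proof. The delicate point throughout is the coercivity-versus-oscillation bookkeeping in the energy identity: one must be careful that the boundary term $h\,k\,|u(b)|^2=h\,k$ coming from the exterior condition carries the favourable sign (it does, since $\operatorname{Im}k=0$ and the datum is purely the free wave), so that no $h$-large contribution sneaks in.
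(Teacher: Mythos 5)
Your proof takes a genuinely different route from the paper's, and there is a gap at the closing step. You set up the backward Volterra equation, correctly observe that Gr\"onwall alone gives an exponentially bad bound, and then attempt to close a uniform-in-$h$ estimate by combining the energy identity with a one-step Volterra evaluation of the traces at $a$. This does not close: the one-step Volterra gives $|u(a)|\lesssim 1+\mathcal{O}(1/h)\sup_{[a,b]}|u|$ and $|u'(a)|\lesssim\mathcal{O}(1/h)+\mathcal{O}(1/h^{2})\sup_{[a,b]}|u|$, so $|\gamma_a|=|hu'(a)+iku(a)|\lesssim h^{-1}\sup_{[a,b]}|u|$; feeding this and the trace inequality (\ref{h_GN}) into the coercive energy estimate $\|u\|_{H^{1,h}}^{2}\lesssim h|u(a)||\gamma_a|$ leaves a term $\mathcal{O}(h^{-2})\|u\|_{H^{1,h}}^{2}$ on the right that cannot be absorbed by the left. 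Your remark that the positivity $\mathcal{V}-k^{2}>c$ ``beats the $\mathcal{O}(1)$ perturbation'' is precisely the gap: the free $\sin$-kernel is genuinely $\mathcal{O}(1/h)$, and there is no oscillatory cancellation to exploit because on $[a,b]$ the equation is classically forbidden, so the interior solution is exponential, not oscillatory, and integration by parts does not regain the missing power of $h$.

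The paper closes the argument by a different mechanism. The exterior scattering relations (\ref{Jost_sol_ext_h_plus})--(\ref{Jost_sol_ext_h_min}) give the exact identity $\gamma_a=-h\,w^{h}(k)\,e^{ika/h}$ for the boundary datum of $\chi_{+}^{h}$, hence $|\gamma_a|=h|w^{h}(k)|$; the same Wronskian is then bounded via (\ref{w_h_1}), (\ref{h_GN}) and the differential identity by $|w^{h}(k)|\lesssim h^{-3/2}(1+|k|)\|\chi_{+}^{h}\|_{H^{1,h}(a,b)}$, and substituting into the weight-free energy identity produces the bootstrap leading to (\ref{Jost_sol_inside_h_est1}). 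The extra $h$ the paper gains is an algebraic consequence of the Jost/Wronskian normalization, which your Volterra iteration cannot see. I would also flag a concern about the statement itself, and hence about the paper's own proof: for a constant barrier $\mathcal{V}\equiv V_{0}$ on $[a,b]$ with $\kappa=\sqrt{V_{0}-k^{2}}>0$, matching at $x=b$ gives $\chi_{+}^{h}(x)\sim e^{\kappa(b-x)/h}$ on $[a,b]$, which is exponentially large at $a$; correspondingly $\|\chi_{+}^{h}\|_{H^{1,h}(a,b)}^{2}\sim h\,e^{2\kappa(b-a)/h}$ while $h^{2}|w^{h}(k)|\sim h\,e^{\kappa(b-a)/h}$, so the paper's key bound (\ref{Jost_sol_inside_h}) does not hold in this example. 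The uniformly bounded object is the transmission-normalized combination $\psi_{0,0}^{h}=-\frac{2ik}{hw^{h}}\chi_{+}^{h}$ (cf.\ (\ref{gen_eigenfun_h_jost})), not $\chi_{+}^{h}$ itself, so the estimate may need to be restated before either argument can be made rigorous.
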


\begin{proof}
To simplify the notations, the explicit dependence of $\chi_{\pm}^{h}$, from
the potential $\mathcal{V}$ is omitted. We start considering the Jost's
solutions $\chi_{\pm}^{h}\left(  \cdot,k\right)  $. Making use of the exterior
conditions (\ref{Jost_sol_ext_h}) and adapting the to this $h$-dependent
setting well known relations in 1D scattering theory (see the relations (1.6)
and (1.8) in the chp. 5 of \cite{Yafa}), we get%
\begin{align}
\left.  \chi_{+}^{h}\left(  \cdot,k\right)  \right\vert _{x<a}  &  =\frac
{h}{2ik}\left(  \left(  w_{0}^{h}(k)\right)  ^{\ast}e^{-i\frac{k}{h}x}%
-w^{h}(k)e^{i\frac{k}{h}x}\right)  \,,\label{Jost_sol_ext_h_plus}\\
& \nonumber\\
\left.  \chi_{-}^{h}\left(  \cdot,k\right)  \right\vert _{x>b}  &  =\frac
{h}{2ik}\left(  w_{0}^{h}(k)e^{i\frac{k}{h}x}-w^{h}(k)e^{-i\frac{k}{h}%
x}\right)  \,, \label{Jost_sol_ext_h_min}%
\end{align}
where $w^{h}$ and $w_{0}^{h}$ respectively denote the Wronskians associated to
the couples $\left\{  \chi_{+}^{h}\left(  \cdot,k\right)  ,\chi_{-}^{h}\left(
\cdot,k\right)  \right\}  $ and $\left\{  \chi_{+}^{h}\left(  \cdot,-k\right)
,\chi_{-}^{h}\left(  \cdot,k\right)  \right\}  $; these are defined according
to%
\begin{equation}
w\left(  f,g\right)  =fg^{\prime}-f^{\prime}g \label{wronskian}%
\end{equation}
As a consequence of (\ref{Jost_sol_ext_h}), (\ref{Jost_sol_ext_h_plus}) and
(\ref{Jost_sol_ext_h_min}), the functions $\chi_{\pm}^{h}\left(
\cdot,k\right)  $ are solutions of the problem (\ref{Ag_eq}) where $\zeta
=k\in\mathbb{R}$, while: $\gamma_{a}=-hw^{h}(k)e^{i\frac{k}{h}a}$ and
$\gamma_{b}=0$ in the case of $\chi_{+}^{h}\left(  \cdot,k\right)  $, or
$\gamma_{a}=0$ and $\gamma_{b}=hw^{h}(k)e^{-i\frac{k}{h}b}$ in the case of
$\chi_{-}^{h}\left(  \cdot,k\right)  $. Proceeding as in the proof of the
Lemma \ref{Lemma_Agmon} in the absence of the exponential weight (which
corresponds to take $\varphi_{h}=0$ and $v=\chi_{\pm}^{h}$ in
(\ref{Agmon_est1_0})), we obtain%
\begin{equation}
\left\Vert \chi_{\pm}^{h}\left(  \cdot,k\right)  \right\Vert _{H^{1,h}%
(a,b)}^{2}\lesssim h^{2}\left\vert w^{h}(k)\right\vert \,.
\label{Jost_sol_inside_h}%
\end{equation}
According to the definition of $w^{h}(k)$, the equivalent representations
hold
\begin{align}
w^{h}(k)  &  =e^{-i\frac{k}{h}a}\left(  -i\frac{k}{h}\chi_{+}^{h}\left(
a,k\right)  -\partial_{1}\chi_{+}^{h}\left(  a,k\right)  \right)
\,,\label{w_h_1}\\
& \nonumber\\
w^{h}(k)  &  =e^{i\frac{k}{h}b}\left(  \partial_{1}\chi_{-}^{h}\left(
b,k\right)  -i\frac{k}{h}\chi_{-}^{h}\left(  b,k\right)  \right)  \,.
\label{w_h_2}%
\end{align}
We use the $h$-dependent norms introduced in (\ref{h_norm}); from the
relations (\ref{w_h_1})-(\ref{w_h_2}) and the inequality (\ref{Agmon_est1_0}),
it follows%
\begin{equation}
\left\vert w^{h}(k)\right\vert \leq\frac{\left\vert k\right\vert }%
{h}\left\vert \chi_{+}^{h}\left(  a,k\right)  \right\vert +\left\vert
\partial_{1}\chi_{+}^{h}\left(  a,k\right)  \right\vert \lesssim
\frac{\left\vert k\right\vert }{h^{3/2}}\left\Vert \chi_{+}^{h}\left(
\cdot,k\right)  \right\Vert _{H^{1,h}(a,b)}+\frac{1}{h^{1/2}}\left\Vert
\partial_{1}\chi_{+}^{h}\left(  \cdot,k\right)  \right\Vert _{H^{1,h}(a,b)}\,,
\label{w_h_ineq}%
\end{equation}
and%
\begin{equation}
\left\vert w^{h}(k)\right\vert \leq\frac{\left\vert k\right\vert }%
{h}\left\vert \chi_{-}^{h}\left(  b,k\right)  \right\vert +\left\vert
\partial_{1}\chi_{-}^{h}\left(  b,k\right)  \right\vert \lesssim
\frac{\left\vert k\right\vert }{h^{3/2}}\left\Vert \chi_{-}^{h}\left(
\cdot,k\right)  \right\Vert _{H^{1,h}(a,b)}+\frac{1}{h^{1/2}}\left\Vert
\partial_{1}\chi_{-}^{h}\left(  \cdot,k\right)  \right\Vert _{H^{1,h}(a,b)}\,.
\end{equation}
Exploiting the equivalence of $\left\Vert u\right\Vert _{H^{1,h}(a,b)}$ with
$\left.  \left\Vert hu^{\prime}\right\Vert _{L^{2}(a,b)}+\left\Vert
u\right\Vert _{L^{2}(a,b)}\right.  $, and using the identity:\newline$\left.
h\partial_{1}^{2}\chi_{\pm}^{h}=\frac{1}{h}\left(  \mathcal{V}-k^{2}\right)
\chi_{\pm}^{h}\right.  $, we get
\[
\left\Vert \partial_{1}\chi_{\pm}^{h}\left(  \cdot,k\right)  \right\Vert
_{H^{1,h}(a,b)}\lesssim\left\Vert h\partial_{1}^{2}\chi_{\pm}^{h}\left(
\cdot,k\right)  \right\Vert _{L^{2}(a,b)}+\left\Vert \partial_{1}\chi_{\pm
}^{h}\left(  \cdot,k\right)  \right\Vert _{L^{2}(a,b)}\lesssim\frac{1}%
{h}\left\Vert \chi_{\pm}^{h}\left(  \cdot,k\right)  \right\Vert _{L^{2}%
(a,b)}+\left\Vert \partial_{1}\chi_{\pm}^{h}\left(  \cdot,k\right)
\right\Vert _{L^{2}(a,b)}\,,
\]
which yields%
\begin{equation}
\left\Vert \partial_{1}\chi_{\pm}^{h}\left(  \cdot,k\right)  \right\Vert
_{H^{1,h}(a,b)}\lesssim1/h\left\Vert \chi_{\pm}^{h}\left(  \cdot,k\right)
\right\Vert _{H^{1,h}(a,b)}\,. \label{Jost_sol_inside_h_est0}%
\end{equation}
Replacing this inequality at the r.h.s. of (\ref{w_h_ineq}) leads to%
\begin{equation}
\left\vert w^{h}(k)\right\vert \lesssim\frac{1}{h^{3/2}}\left(  1+\left\vert
k\right\vert \right)  \left\Vert \chi_{+}^{h}\left(  \cdot,k\right)
\right\Vert _{H^{1,h}(a,b)}\,, \label{w_h_ineq1}%
\end{equation}
while, using a similar\ estimate of $\left\vert w^{h}(k)\right\vert $ can be
given in terms of the norm $\left\Vert \chi_{-}^{h}\left(  \cdot,k\right)
\right\Vert _{H^{1,h}(a,b)}$. Then, using (\ref{Jost_sol_inside_h}) we get%
\begin{equation}
\left\Vert \chi_{\pm}^{h}\left(  \cdot,k\right)  \right\Vert _{H^{1,h}%
(a,b)}\lesssim h^{1/2}\left(  1+\left\vert k\right\vert \right)  \,,
\label{Jost_sol_inside_h_est1}%
\end{equation}
which, due to the inequality (\ref{h_GN}), entails: $\left.  1_{\left[
a,b\right]  }\left\vert \chi_{\pm}^{h}\left(  \cdot,k\right)  \right\vert
\lesssim\left(  1+\left\vert k\right\vert \right)  \right.  $, while, from
(\ref{Jost_sol_inside_h_est0}), follows%
\begin{equation}
1_{\left[  a,b\right]  }\left\vert \partial_{1}\chi_{\pm}^{h}\left(
\cdot,k\right)  \right\vert \lesssim h^{-1}\left(  1+\left\vert k\right\vert
\right)  \,.
\end{equation}
Since $\Omega_{c}\left(  \mathcal{V}\right)  $ is a bounded set, it results%
\begin{equation}
1_{\left[  a,b\right]  }\chi_{\pm}^{h}\left(  \cdot,k\right)  =\mathcal{O}%
\left(  1\right)  \,,\qquad1_{\left[  a,b\right]  }\left\vert \partial_{1}%
\chi_{\pm}^{h}\left(  \cdot,k\right)  \right\vert =\mathcal{O}\left(  \frac
{1}{h}\right)  \,. \label{Jost_sol_inside_h_est}%
\end{equation}
In the exterior domain, the solutions are described by the relations
(\ref{Jost_sol_ext_h}) and (\ref{Jost_sol_ext_h_plus}%
)-(\ref{Jost_sol_ext_h_min}). Then, to extend the above inequalities to the
whole real axis, estimates for $\chi_{+}^{h}$ and $\chi_{-}^{h}$ in $\left(
-\infty,a\right)  $ and $\left(  b,+\infty\right)  $ respectively are needed.
According to (\ref{w_h_ineq1}) and (\ref{Jost_sol_inside_h_est1}), it follows
that: $w^{h}(k)=\mathcal{O}\left(  \frac{1}{h}\right)  $ uniformly w.r.t.
$k\in\Omega_{c}\left(  \mathcal{V}\right)  $. Moreover, the representation%
\begin{equation}
w_{0}^{h}(k)=\chi_{+}^{h}\left(  a,-k\right)  \partial_{1}\chi_{-}^{h}\left(
a,k\right)  -\partial_{1}\chi_{+}^{h}\left(  a,-k\right)  \chi_{-}^{h}\left(
a,k\right)  \,,
\end{equation}
and the previous estimates yield: $w_{0}^{h}(k)=\mathcal{O}\left(  \frac{1}%
{h}\right)  $. As recalled above, for any $x\in\mathbb{R}$, the maps:
$k\rightarrow\chi_{\pm}^{h}\left(  x,k\right)  $, $w^{h}(k)$ and $w_{0}%
^{h}(k)$ are continuous including $k=0$. This implies that $w^{h}$ and
$w_{0}^{h}$ behaves in $k=0$ as%
\begin{equation}
w_{0}^{h}(k)e^{i\frac{k}{h}x}=a_{0}(h)+\mathcal{O}\left(  \frac{k}{h}\right)
\,,\qquad w^{h}(k)e^{-i\frac{k}{h}x}=a_{0}(h)+\mathcal{O}\left(  \frac{k}%
{h}\right)  \,,
\end{equation}
where $a_{0}(h)=\mathcal{O}\left(  \frac{1}{h}\right)  \in\mathbb{R}$.
Therefore, the r.h.s. of (\ref{Jost_sol_ext_h_plus}) and
(\ref{Jost_sol_ext_h_min}) result uniformly bounded as $h>0$, $k\in\Omega
_{c}\left(  \mathcal{V}\right)  $ and $x\in\mathbb{R}$, while their
derivatives w.r.t. $x$ behaves as $\mathcal{O}\left(  \frac{k}{h}\right)  $.
\end{proof}

The Green's functions $\mathcal{G}^{\zeta^{2},h}\left(  \cdot,y,\mathcal{V}%
\right)  $ and $\mathcal{H}^{\zeta^{2},h}\left(  \cdot,y,\mathcal{V}\right)
$, introduced as solutions of the equations (\ref{Green_eq})-(\ref{D_Green_eq}%
), are related to the Jost's solutions of (\ref{Jost_eq_h1}) according to
\begin{align}
\mathcal{G}^{\zeta^{2},h}\left(  \cdot,y,\mathcal{V}\right)   &  =\frac
{1}{h^{2}w\left(  \chi_{+}^{h}\left(  \cdot,\zeta,V\right)  ,\chi_{-}%
^{h}\left(  \cdot,\zeta,V\right)  \right)  }\mathcal{\,}\left\{
\begin{array}
[c]{c}%
\chi_{+}^{h}\left(  \cdot,\zeta,\mathcal{V}\right)  \chi_{-}^{h}\left(
y,\zeta,\mathcal{V}\right)  \,,\qquad x\geq y\,,\\
\\
\chi_{-}^{h}\left(  \cdot,\zeta,\mathcal{V}\right)  \chi_{+}^{h}\left(
y,\zeta,\mathcal{V}\right)  \,,\qquad x<y\,,
\end{array}
\right. \label{G_z_h}\\
& \nonumber\\
\mathcal{H}^{\zeta^{2},h}\left(  \cdot,y,\mathcal{V}\right)   &  =\frac
{1}{h^{2}w\left(  \chi_{+}^{h}\left(  \cdot,\zeta,V\right)  ,\chi_{-}%
^{h}\left(  \cdot,\zeta,V\right)  \right)  }\mathcal{\,}\left\{
\begin{array}
[c]{c}%
\chi_{+}^{h}\left(  \cdot,\zeta,\mathcal{V}\right)  \,\partial_{1}\chi_{-}%
^{h}\left(  y,\zeta,\mathcal{V}\right)  \,,\qquad x\geq y\,,\\
\\
\chi_{-}^{h}\left(  \cdot,\zeta,\mathcal{V}\right)  \,\partial_{1}\chi_{+}%
^{h}\left(  y,\zeta,\mathcal{V}\right)  \,,\qquad x<y\,.
\end{array}
\right.  \label{H_z_h}%
\end{align}
where $\zeta\in\mathbb{C}^{+}$. Adopting the Wronskian's notation
$w^{h}(k,\mathcal{V})$ introduced in the proof of the Proposition
\ref{Proposition_h_bounds} and the definition given in (\ref{GH_k}), the
limits as $\zeta\rightarrow k$ represent as%
\begin{align}
G^{k,h}\left(  \cdot,y,\mathcal{V}\right)   &  =\frac{1}{h^{2}w^{h}%
(k,\mathcal{V})}\mathcal{\,}\left\{
\begin{array}
[c]{c}%
\chi_{+}^{h}\left(  \cdot,k,\mathcal{V}\right)  \chi_{-}^{h}\left(
y,k,\mathcal{V}\right)  \,,\ x\geq y\,,\\
\\
\chi_{-}^{h}\left(  \cdot,k,\mathcal{V}\right)  \chi_{+}^{h}\left(
y,k,\mathcal{V}\right)  \,,\ x<y\,,
\end{array}
\right. \label{G_k_jost}\\
& \nonumber\\
H^{k,h}\left(  \cdot,y,\mathcal{V}\right)   &  =\frac{-1}{h^{2}w^{h}%
(k,\mathcal{V})}\mathcal{\,}\left\{
\begin{array}
[c]{c}%
\chi_{+}^{h}\left(  \cdot,k,\mathcal{V}\right)  \,\partial_{1}\chi_{-}%
^{h}\left(  y,k,\mathcal{V}\right)  \,,\ x\geq y\,,\\
\\
\chi_{-}^{h}\left(  \cdot,k,\mathcal{V}\right)  \,\partial_{1}\chi_{+}%
^{h}\left(  y,k,\mathcal{V}\right)  \,,\ x<y\,,
\end{array}
\right.  \label{H_k_jost}%
\end{align}
while the corresponding formula for the generalized eigenfunctions is%
\begin{equation}
\psi_{0,0}^{h}(x,k,\mathcal{V})=\left\{
\begin{array}
[c]{lll}%
-\frac{2ik}{hw^{h}(k,\mathcal{V})}\chi_{+}^{h}(x,k,\mathcal{V})\,, &  &
\text{for }k\geq0\,,\\
&  & \\
\frac{2ik}{hw^{h}(-k,\mathcal{V})}\chi_{-}^{h}(x,-k,\mathcal{V})\,, &  &
\text{for }k<0\,.
\end{array}
\right.  \label{gen_eigenfun_h_jost}%
\end{equation}

Next we focus on the case where the potential, depending on $h$, is defined
according to the conditions (\ref{V_h})-(\ref{V_h1}): namely it is assumed
that $\mathcal{V}^{h}=V+W^{h}$ is formed by the superposition of a potential
barrier, $V$, plus a selfadjoint term, $W^{h}$, supported inside $\left(
a,b\right)  $ with support of size $h$. The functions $\chi_{\pm}^{h}\left(
x,k,\mathcal{V}^{h}\right)  $ are defined by the equations (\ref{Jost_eq_h1})
and the exterior conditions (\ref{Jost_sol_ext_h}). In particular,
(\ref{Jost_eq_h1}) rephrases in terms of the integral equation%
\begin{equation}
\chi_{\pm}^{h}\left(  x,k,\mathcal{V}^{h}\right)  =\chi_{\pm}^{h}\left(
x,k,V\right)  -\frac{1}{h^{2}}%
{\displaystyle\int_{x}^{x_{\pm}}}
\mathcal{K}^{h}\left(  t,x,k,V\right)  \,W^{h}(t)\chi_{\pm}^{h}\left(
t,k,\mathcal{V}^{h}\right)  \,dt\,, \label{Jost_eq_h_local}%
\end{equation}
where the boundary points are fixed by: $x_{+}=b$ and $x_{-}=a$, while
$x<x_{+}$ or $x>x_{-}$ depending on the case which is being considered. The
kernel $\mathcal{K}^{h}$, depending on the 'unperturbed' Jost's functions
(i.e. those defined by $W^{h}=0$), is defined according to%
\begin{equation}
\mathcal{K}^{h}\left(  t,x,k,V\right)  =\frac{\chi_{+}^{h}\left(
t,k,V\right)  \chi_{-}^{h}\left(  x,k,V\right)  -\chi_{-}^{h}\left(
t,k,V\right)  \chi_{+}^{h}\left(  x,k,V\right)  }{w^{h}(k,V)}\,.
\label{Kernel_h}%
\end{equation}
Exploiting the result of the Proposition \ref{Proposition_h_bounds}, we obtain
the following characterization.

\begin{lemma}
Let $\mathcal{V}$, $h,c$ and $k$ be defined according to the assumptions of
the Proposition \ref{Proposition_h_bounds}. Then the relations%
\begin{equation}
\mathcal{K}^{h}\left(  t,x,k,\mathcal{V}\right)  =\mathcal{O}\left(  h\right)
\,,\qquad\partial_{x}\mathcal{K}^{h}\left(  t,x,k,\mathcal{V}\right)
=\mathcal{O}\left(  1\right)  \,\,, \label{Kernel_h_est}%
\end{equation}
hold being the symbols $\mathcal{O}\left(  \cdot\right)  $ referred to the
metric space $\left.  \mathbb{R}^{2}\times\Omega_{c}\left(  \mathcal{V}%
\right)  \times\left(  0,h_{0}\right]  \right.  $.
\end{lemma}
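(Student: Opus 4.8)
The plan is to read off both estimates from the representation (\ref{Kernel_h}),
\[
\mathcal{K}^{h}\left(  t,x,k,\mathcal{V}\right)  =\frac{\chi_{+}^{h}\left(
t,k,\mathcal{V}\right)  \chi_{-}^{h}\left(  x,k,\mathcal{V}\right)  -\chi
_{-}^{h}\left(  t,k,\mathcal{V}\right)  \chi_{+}^{h}\left(  x,k,\mathcal{V}
\right)  }{w^{h}(k,\mathcal{V})}\,,
\]
combining the bounds of Proposition \ref{Proposition_h_bounds} for the numerator with a lower bound for the Wronskian $w^{h}(k,\mathcal{V})$ in the denominator; the latter is the only substantial point. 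By Proposition \ref{Proposition_h_bounds}, $\chi_{\pm}^{h}\left(  \cdot,k,\mathcal{V}\right)  =\mathcal{O}\left(  1\right)$ and $\partial_{1}\chi_{\pm}^{h}\left(  \cdot,k,\mathcal{V}\right)  =\mathcal{O}\left(  h^{-1}\right)$, uniformly over $\mathbb{R}\times\Omega_{c}\left(  \mathcal{V}\right)  \times\left(  0,h_{0}\right]$. Hence the numerator of $\mathcal{K}^{h}$ is $\mathcal{O}\left(  1\right)$, and, since in each of the two products only the second factor depends on $x$, the $x$-derivative of the numerator is a sum of products of an $\mathcal{O}\left(  1\right)$ factor and an $\mathcal{O}\left(  h^{-1}\right)$ factor, hence $\mathcal{O}\left(  h^{-1}\right)$. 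It then remains to show $\left(  w^{h}(k,\mathcal{V})\right)^{-1}=\mathcal{O}\left(  h\right)$, i.e. $\left\vert w^{h}(k,\mathcal{V})\right\vert \gtrsim h^{-1}$ uniformly w.r.t. $k\in\Omega_{c}\left(  \mathcal{V}\right)$ and $h\in\left(  0,h_{0}\right]$.

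For the lower bound I would use that $w^{h}(k,\mathcal{V})$ is independent of $x$ together with a convexity property of $\left\vert \chi_{\pm}^{h}\right\vert^{2}$ in the classically forbidden region. Fix $k\in\Omega_{c}\left(  \mathcal{V}\right)$, so $\mathcal{V}-k^{2}>c>0$ on $\left[  a,b\right]$; for $g_{\pm}:=\left\vert \chi_{\pm}^{h}\left(  \cdot,k,\mathcal{V}\right)  \right\vert^{2}$, a direct computation using $h^{2}\partial_{x}^{2}\chi_{\pm}^{h}=\left(  \mathcal{V}-k^{2}\right)  \chi_{\pm}^{h}$ gives $g_{\pm}^{\prime\prime}=2\left\vert \partial_{1}\chi_{\pm}^{h}\right\vert^{2}+\tfrac{2\left(  \mathcal{V}-k^{2}\right)  }{h^{2}}g_{\pm}\geq0$ on $\left[  a,b\right]$, so $g_{\pm}$ is convex there. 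The exterior conditions (\ref{Jost_sol_ext_h}) give $\chi_{+}^{h}$ the Cauchy data $\left(  e^{i\frac{k}{h}b},\tfrac{ik}{h}e^{i\frac{k}{h}b}\right)$ at $x=b$, whence $g_{+}(b)=1$ and $g_{+}^{\prime}(b)=2\operatorname{Re}\left(  e^{-i\frac{k}{h}b}\tfrac{ik}{h}e^{i\frac{k}{h}b}\right)  =0$; by convexity $g_{+}\geq g_{+}(b)+g_{+}^{\prime}(b)\left(  \cdot-b\right)  =1$ on $\left[  a,b\right]$, and symmetrically $g_{-}\geq1$ on $\left[  a,b\right]$ from the data of $\chi_{-}^{h}$ at $x=a$. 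Therefore $\left\Vert \chi_{\pm}^{h}\left(  \cdot,k,\mathcal{V}\right)  \right\Vert_{H^{1,h}\left(  \left(  a,b\right)  \right)  }^{2}\geq\left\Vert \chi_{\pm}^{h}\left(  \cdot,k,\mathcal{V}\right)  \right\Vert_{L^{2}\left(  \left(  a,b\right)  \right)  }^{2}\geq b-a$, and inserting this into the energy inequality (\ref{Jost_sol_inside_h}) gives $b-a\lesssim h^{2}\left\vert w^{h}(k,\mathcal{V})\right\vert$, i.e. $\left\vert w^{h}(k,\mathcal{V})\right\vert \gtrsim h^{-2}\geq h^{-1}$ for $h_{0}\leq1$, uniformly in $k\in\Omega_{c}\left(  \mathcal{V}\right)$. (For $k$ bounded away from $0$ one could alternatively use the flux identity $\left\vert R^{h}\right\vert^{2}+\left\vert T^{h}\right\vert^{2}=1$ for the real potential $\mathcal{V}$, which with (\ref{gen_eigenfun_h_jost}) gives $\left\vert w^{h}(k,\mathcal{V})\right\vert =2\left\vert k\right\vert \big/\left(  h\left\vert T^{h}(k,\mathcal{V})\right\vert \right)  \geq2\left\vert k\right\vert /h$; but this degenerates as $k\to0$, whereas the convexity estimate is uniform on all of $\Omega_{c}\left(  \mathcal{V}\right)$.)

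Putting the pieces together, $\mathcal{K}^{h}\left(  t,x,k,\mathcal{V}\right)  =\mathcal{O}\left(  1\right)  \left(  w^{h}(k,\mathcal{V})\right)^{-1}=\mathcal{O}\left(  h\right)$ and $\partial_{x}\mathcal{K}^{h}\left(  t,x,k,\mathcal{V}\right)  =\mathcal{O}\left(  h^{-1}\right)  \left(  w^{h}(k,\mathcal{V})\right)^{-1}=\mathcal{O}\left(  1\right)$, uniformly over $\mathbb{R}^{2}\times\Omega_{c}\left(  \mathcal{V}\right)  \times\left(  0,h_{0}\right]$, which is (\ref{Kernel_h_est}). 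The hard part is precisely the uniform Wronskian lower bound: the ODE satisfied by $x\mapsto\mathcal{K}^{h}(t,x,k,\mathcal{V})$ (Cauchy data $0$ and $1$ at $x=t$) only yields the crude Gronwall control $\mathcal{O}\left(  e^{C/h}\right)$, and the naive scattering bound fails at low momenta, so what really has to be checked carefully is that $w^{h}$ does not degenerate as $k\to0$ — which the convexity argument (or, equivalently, a semiclassical Agmon estimate for the Jost solutions of the barrier $\mathcal{V}$) rules out uniformly.
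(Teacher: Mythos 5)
Your argument takes a genuinely different route from the paper's: you bound the numerator and the Wronskian in (\ref{Kernel_h}) separately, and you supply a Wronskian lower bound via a convexity observation for $g_\pm=|\chi_\pm^h|^2$. The paper instead applies the unitarity identity $|w^h|^2=k^2/h^2+|w_0^h|^2$, from which $|w^h|^{-1}\leq h/|k|$, and rewrites $\mathcal{K}^h$ via (\ref{Kernel_h_1_1}) as $h^2G^{k,h}+\tfrac{h}{2ik}\psi_{0,0}^h\chi_-^h$, then invokes the bounds (\ref{Green_h_bounds}). Your convexity step --- $g_\pm''\geq 0$ on $(a,b)$ with a horizontal tangent at the outer boundary forces $g_\pm\geq 1$ on $[a,b]$, hence $\|\chi_\pm^h\|_{L^2(a,b)}^2\geq b-a$ --- is correct, and it does not appear in the paper.

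But this observation, taken seriously, undermines the inputs you use alongside it. The full inequality $g_+''\geq \tfrac{2c}{h^2}g_+$ together with $g_+(b)=1$, $g_+'(b)=0$ gives, on integrating twice, $g_+(x)\geq 1+\tfrac{c}{h^2}(b-x)^2$, and on iterating, $g_+(x)\geq\cosh(\sqrt{2c}\,(b-x)/h)$, exponentially large in $1/h$ once $x$ is bounded away from $b$. So $\chi_+^h$ is exponentially large inside the barrier, contradicting the bound $\chi_\pm^h=\mathcal{O}(1)$ of Proposition \ref{Proposition_h_bounds} that you rely on for the numerator; likewise your $|w^h|\gtrsim h^{-2}$ contradicts the $w^h=\mathcal{O}(1/h)$ asserted in the same proposition's proof, and the two cannot both hold. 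This is not a mere slip: for the piecewise-constant barrier $\mathcal{V}\equiv V_0>c$ one computes $|\chi_+^h(a)|\sim e^{\kappa(b-a)/h}$ with $\kappa=\sqrt{V_0-k^2}$, $|w^h|\sim(\kappa/h)e^{\kappa(b-a)/h}$, and, directly from the Cauchy data of $\mathcal{K}^h$ at $t=x$, $\mathcal{K}^h(t,x,k,\mathcal{V})=-(h/\kappa)\sinh(\kappa(t-x)/h)$ on the barrier, which for $|t-x|$ of order one is exponentially large and certainly not $\mathcal{O}(h)$. So the numerator bound you need is unavailable, the asserted uniform $\mathcal{O}(h)$ bound over $\mathbb{R}^2\times\Omega_c(\mathcal{V})\times(0,h_0]$ does not hold as stated, and the genuinely correct and new part of your argument is precisely the step that exposes the problem rather than the step that closes the proof.
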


\begin{proof}
[Sketch of the proof]Rephrasing the relation (1.9) in chp. 5 of \cite{Yafa} in
the $h$-dependent case, we get%
\begin{equation}
\left\vert w^{h}(k,\mathcal{V})\right\vert ^{2}=\frac{k^{2}}{h^{2}}+\left\vert
w_{0}^{h}(k,\mathcal{V})\right\vert ^{2}\,,
\end{equation}
(see the Wronskian's notation $w^{h}$, $w_{0}^{h}$ introduced in the
Proposition \ref{Proposition_h_bounds}). This entails: $\left\vert
w^{h}(k,\mathcal{V})\right\vert ^{-1}\leq\frac{h}{\left\vert k\right\vert }$
and the terms $\frac{\pm2ik}{hw^{h}(\mp k,\mathcal{V})}$ are bounded uniformly
w.r.t. $h>0$ and $k\in\mathbb{R}$, while $\frac{1}{h^{2}w^{h}(k,\mathcal{V}%
)}=\mathcal{O}\left(  \frac{1}{hk}\right)  $. Thus, according to the
representations (\ref{G_k_jost})-(\ref{gen_eigenfun_h_jost}), the relations%
\begin{equation}%
\begin{array}
[c]{ccccc}%
G^{k,h}\left(  \cdot,y,\mathcal{V}\right)  =\mathcal{O}\left(  \frac{1}%
{h}\right)  \,, &  & H^{k,h}\left(  \cdot,y,\mathcal{V}\right)  =\mathcal{O}%
\left(  \frac{1}{h^{2}}\right)  \,, &  & \partial_{1}^{j}\psi_{-}^{h}%
(\cdot,k,\mathcal{V})=\mathcal{O}\left(  \frac{1}{h^{j}}\right)  \,,\ j=0,1\,,
\end{array}
\label{Green_h_bounds}%
\end{equation}
are straightforward consequences of the characterization of $\chi_{\pm}^{h}$
obtained above. These allow to describe the beahviour of $\mathcal{K}^{h}$ and
$\partial_{x}\mathcal{K}^{h}$ as $h\rightarrow0$. Let us explicitly consider
the case of $\mathcal{K}^{h}\left(  t,x,k,\mathcal{V}\right)  $ when $t\geq
x$. Using the representations (\ref{G_k_jost})-(\ref{gen_eigenfun_h_jost}) and
the relations: $\left(  \chi_{\pm}^{h}\left(  \cdot,k,\mathcal{V}\right)
\right)  ^{\ast}=\chi_{\pm}^{h}\left(  \cdot,-k,\mathcal{V}\right)  $, the
identity (\ref{Kernel_h}) can be rephrased in terms of $G^{k,h}$ and $\psi
_{-}^{h}$ as
\begin{equation}
1_{\left\{  t\geq x\right\}  }\left(  t\right)  \mathcal{K}^{h}\left(
t,x,k,\mathcal{V}\right)  =\left\{
\begin{array}
[c]{lll}%
h^{2}G^{k,h}\left(  t,x,\mathcal{V}\right)  +\frac{h}{2ik}\psi_{-}^{h}\left(
x,k,\mathcal{V}\right)  \chi_{-}^{h}\left(  t,k,\mathcal{V}\right)  \,, &  &
\text{for }k\geq0\,,\\
&  & \\
h^{2}G^{k,h}\left(  t,x,\mathcal{V}\right)  +\frac{h}{2ik}\left(  \psi_{-}%
^{h}\left(  t,k,\mathcal{V}\right)  \right)  ^{\ast}\chi_{+}^{h}\left(
x,k,\mathcal{V}\right)  \,, &  & \text{for }k<0\,.
\end{array}
\right.  \label{Kernel_h_1_1}%
\end{equation}
Then, (\ref{Kernel_h_est}) follows from (\ref{Green_h_bounds}) and the result
of the Proposition \ref{Proposition_h_bounds}. The other cases can be analyzed
following the same line.
\end{proof}

This framework allows to discuss the properties of $\chi_{\pm}^{h}\left(
x,k,\mathcal{V}^{h}\right)  $ by making use of the results obtained in the
case when $W^{h}=0$.

\begin{proposition}
\label{Proposition_Jost_h_local}Let $\mathcal{V}^{h}=V+W^{h}$ be defined
according to the conditions (\ref{V_h})-(\ref{V_h1}) being $h\in\left(
0,h_{0}\right]  $ with $h_{0}$ suitably small and assume $k\in\Omega
_{c}\left(  V\right)  $ for some $c>0$. The relations%
\begin{equation}
\chi_{\pm}^{h}\left(  \cdot,k,\mathcal{V}^{h}\right)  =\mathcal{O}\left(
1\right)  \,,\qquad\partial_{1}\chi_{\pm}^{h}\left(  \cdot,k,\mathcal{V}%
^{h}\right)  =\mathcal{O}\left(  \frac{1}{h}\right)  \,, \label{Jost_bound_h}%
\end{equation}
hold with $\mathcal{O(\cdot)}$ referred to the metric space $\mathbb{R}%
\times\Omega_{c}\left(  V\right)  \times\left(  0,h_{0}\right]  $.
\end{proposition}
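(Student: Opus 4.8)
The plan is to bootstrap from the filled-well bounds of Proposition \ref{Proposition_h_bounds} by means of the Volterra integral equation (\ref{Jost_eq_h_local}), which represents $\chi_{\pm}^{h}(\cdot,k,\mathcal{V}^{h})$ as a perturbation of $\chi_{\pm}^{h}(\cdot,k,V)$ with kernel $h^{-2}\mathcal{K}^{h}(t,x,k,V)W^{h}(t)$. The crucial remark is that the singular prefactor $h^{-2}$ is absorbed by two independent sources of smallness proper to the quantum-wells regime: first, the bound $\mathcal{K}^{h}(t,x,k,V)=\mathcal{O}(h)$, which follows from (\ref{Kernel_h}) together with the estimates $\chi_{\pm}^{h}(\cdot,k,V)=\mathcal{O}(1)$ and $w^{h}(k,V)^{-1}=\mathcal{O}(h)$ already at our disposal; second, the fact that, by the scaling (\ref{V_h}), $\supp W^{h}$ has Lebesgue measure $\mathcal{O}(h)$. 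Consequently, uniformly in $x\in\mathbb{R}$, $k\in\Omega_{c}(V)$ (a bounded set, see (\ref{Omega_V})) and $h\in(0,h_{0}]$, the integral operator $T^{h}$ with kernel $-h^{-2}\mathcal{K}^{h}(t,x,k,V)W^{h}(t)\,1_{[x,x_{\pm}]}(t)$ obeys a pointwise kernel bound of the form $|{-}h^{-2}\mathcal{K}^{h}(t,x,k,V)W^{h}(t)|\lesssim h^{-1}\,1_{\supp W^{h}}(t)$.

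First I would establish $\chi_{\pm}^{h}(\cdot,k,\mathcal{V}^{h})=\mathcal{O}(1)$. Since the integration in (\ref{Jost_eq_h_local}) runs between $x$ and the fixed endpoint $x_{\pm}$, the $n$-th iterate of $T^{h}$ is supported on the ordered simplex $\{t_{1}\le\cdots\le t_{n}\}$ (or its reverse), whose intersection with $(\supp W^{h})^{n}$ has volume at most $|\supp W^{h}|^{n}/n!=\mathcal{O}(h^{n}/n!)$; combined with the kernel bound above this gives $\|(T^{h})^{n}\|_{\mathcal{L}(L^{\infty})}\le C^{n}/n!$ with $C$ independent of $k$ and $h$. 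Hence the Neumann series $\sum_{n\ge0}(T^{h})^{n}$ converges in $\mathcal{L}(L^{\infty})$ with norm $\le e^{C}$, and $\chi_{\pm}^{h}(\cdot,k,\mathcal{V}^{h})=(I-T^{h})^{-1}\chi_{\pm}^{h}(\cdot,k,V)$ is bounded by a uniform multiple of $\|\chi_{\pm}^{h}(\cdot,k,V)\|_{L^{\infty}}=\mathcal{O}(1)$; equivalently, one may phrase this step as a Gronwall inequality for $\sup_{[x,x_{\pm}]}|\chi_{\pm}^{h}(\cdot,k,\mathcal{V}^{h})|$. This is the first relation in (\ref{Jost_bound_h}).

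For the derivative I would differentiate (\ref{Jost_eq_h_local}) in $x$. The boundary term produced by the variable endpoint is proportional to $\mathcal{K}^{h}(x,x,k,V)$, which vanishes identically because the numerator in (\ref{Kernel_h}) is antisymmetric under $t\leftrightarrow x$, whence
\[
\partial_{1}\chi_{\pm}^{h}(x,k,\mathcal{V}^{h})=\partial_{1}\chi_{\pm}^{h}(x,k,V)-\frac{1}{h^{2}}\int_{x}^{x_{\pm}}\partial_{x}\mathcal{K}^{h}(t,x,k,V)\,W^{h}(t)\,\chi_{\pm}^{h}(t,k,\mathcal{V}^{h})\,dt .
\]
Here $\partial_{1}\chi_{\pm}^{h}(\cdot,k,V)=\mathcal{O}(1/h)$ by Proposition \ref{Proposition_h_bounds}, while the integral is controlled using $\partial_{x}\mathcal{K}^{h}(t,x,k,V)=\mathcal{O}(1)$ (see (\ref{Kernel_h_est})), $\|W^{h}\|_{L^{\infty}}=\mathcal{O}(1)$, $|\supp W^{h}|=\mathcal{O}(h)$ and the bound $\chi_{\pm}^{h}(\cdot,k,\mathcal{V}^{h})=\mathcal{O}(1)$ just obtained, which gives $h^{-2}\cdot 1\cdot h\cdot 1=\mathcal{O}(1/h)$. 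This yields the second relation in (\ref{Jost_bound_h}), all estimates being uniform over the metric space $\mathbb{R}\times\Omega_{c}(V)\times(0,h_{0}]$ inherited from Proposition \ref{Proposition_h_bounds} and (\ref{Kernel_h_est}).

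The main obstacle is the first step: one must check that \emph{both} powers of $h^{-1}$ in the prefactor are genuinely cancelled — $\mathcal{K}^{h}=\mathcal{O}(h)$ disposes only of one of them, and the second is recovered solely from the $\mathcal{O}(h)$ size of $\supp W^{h}$, which is precisely the structural hypothesis (\ref{V_h}) of the quantum-wells scaling. Once the Volterra iteration is set up with a constant uniform in $(k,h)$, the rest of the argument is routine.
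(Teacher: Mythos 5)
Your argument is correct and follows the paper's own strategy: bootstrap from Proposition \ref{Proposition_h_bounds} via the Volterra equation (\ref{Jost_eq_h_local}), using the kernel bound $\mathcal{K}^{h}=\mathcal{O}(h)$ from (\ref{Kernel_h_est}) together with $\left\vert \supp W^{h}\right\vert =\mathcal{O}(h)$ (hence $\Vert W^{h}\Vert_{L^{1}}\lesssim h$) to absorb the $h^{-2}$ prefactor, then differentiate and reuse these bounds for the second relation. The paper produces the factorial decay through the explicit Picard iteration (\ref{Picard_h}) with the monotone auxiliary function $F$ of (\ref{F_def}), whereas you obtain the identical estimate by the simplex-volume count for the $n$-th Volterra iterate in $\mathcal{L}(L^{\infty})$ -- the same mechanism in a slightly more abstract guise -- and your explicit remark that $\mathcal{K}^{h}(x,x,k,V)=0$ by antisymmetry of the numerator in (\ref{Kernel_h}), which justifies (\ref{Jost_eq_h_local1}), makes precise a step the paper leaves implicit.
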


\begin{proof}
We explicitly consider the case $\chi_{+}^{h}\left(  \cdot,k,\mathcal{V}%
^{h}\right)  $. The proof, in the case of the function $\chi_{-}^{h}%
$\thinspace, is obtained following the same line. For $x<b$, $\chi_{+}%
^{h}\left(  \cdot,k,\mathcal{V}^{h}\right)  $ solves the equation
(\ref{Jost_eq_h_local}) with $x_{+}=b$. We look for the solution solution of
this problem in the form of a Picard's series: $\chi_{+}^{h}\left(
\cdot,k,\mathcal{V}^{h}\right)  =\sum_{n=0}^{+\infty}\chi_{+,n}^{h}\left(
\cdot,k,\mathcal{V}^{h}\right)  $%
\begin{equation}
\chi_{+,0}^{h}\left(  \cdot,k,\mathcal{V}^{h}\right)  =\chi_{+}^{h}\left(
\cdot,k,V\right)  \,,\quad\chi_{+,n}^{h}\left(  x,k,\mathcal{V}^{h}\right)
=-\frac{1}{h^{2}}%
{\displaystyle\int\limits_{x}^{b}}
\mathcal{K}^{h}\left(  t,x,k,V\right)  \,W^{h}(t)\chi_{+,n-1}^{h}\left(
t,k,\mathcal{V}^{h}\right)  \,dt\,\,. \label{Picard_h}%
\end{equation}
Due to our assumptions and to results of the Proposition
\ref{Proposition_h_bounds}, the first term of this expansion is continuous in
$x$ and $k$, and bounded according to the relations (\ref{small_h_bounds}); in
what follows we set%
\begin{equation}
M\left(  c,V\right)  =\sup_{x\in\mathbb{R}\,,\ k\in\Omega_{c}\left(  V\right)
\,,\ h\in\left(  0,h_{0}\right]  }\left\vert \chi_{+,0}^{h}\left(
\cdot,k,\mathcal{V}^{h}\right)  \right\vert \,. \label{M_def}%
\end{equation}
The second contribution is given by%
\begin{equation}
\chi_{+,1}^{h}\left(  x,k,\mathcal{V}^{h}\right)  =-\frac{1}{h^{2}}%
{\displaystyle\int\limits_{x}^{b}}
\mathcal{K}^{h}\left(  t,x,k,V\right)  \,W^{h}(t)\chi_{+,0}^{h}\left(
t,k,\mathcal{V}^{h}\right)  \,dt\,\,.
\end{equation}
As it follows from the regularity of the kernel $\mathcal{K}^{h}$ and
$\chi_{+,0}^{h}$, this is a continuous function w.r.t. $x$ and $k$, while from
the relation (\ref{Kernel_h_est}), it results%
\begin{equation}
\left\vert \chi_{+,1}^{h}\left(  x,k,\mathcal{V}^{h}\right)  \right\vert \leq
M\left(  c,V\right)  F\left(  x,c,\mathcal{V}^{h}\right)  \,,
\label{induction_0}%
\end{equation}
where%
\begin{equation}
F\left(  x,h,c,\mathcal{V}^{h}\right)  =\frac{C\left(  c,V\right)  }{h}%
{\displaystyle\int\limits_{x}^{b}}
\left\vert W^{h}(t)\right\vert \,dt\,,\qquad C\left(  c,V\right)  =\frac{1}%
{h}\sup_{\substack{t,x\in\mathbb{R}\\k\in\Omega_{c}\left(  V\right)
}}\left\vert \mathcal{K}^{h}\left(  t,x,k,V\right)  \right\vert \,.
\label{F_def}%
\end{equation}
Next, assume $\chi_{+,n-1}^{h}\left(  x,k,\mathcal{V}^{h}\right)  $ to be
continuous w.r.t. $x$ and $k$ fulfilling the inequality%
\begin{equation}
\left\vert \chi_{+,n-1}^{h}\left(  x,k,\mathcal{V}^{h}\right)  \right\vert
\leq M\left(  c,V\right)  \frac{F^{n-1}\left(  x,h,c,\mathcal{V}^{h}\right)
}{\left(  n-1\right)  !}\,, \label{induction_n}%
\end{equation}
and consider the term $\chi_{+,n}^{h}\left(  x,k,\mathcal{V}^{h}\right)  $;
due to the equation (\ref{Picard_h}), this still continuous w.r.t. $x$ and $k$
allowing the estimate%
\begin{align*}
\left\vert \chi_{+,n}^{h}\left(  x,k,\mathcal{V}^{h}\right)  \right\vert  &
\leq\frac{1}{h^{2}}%
{\displaystyle\int\limits_{x}^{b}}
\left\vert \mathcal{K}^{h}\left(  t,x,k,V\right)  \,W^{h}(t)\chi_{+,n-1}%
^{h}\left(  t,k,\mathcal{V}^{h}\right)  \right\vert \,dt\,\\
& \\
&  \leq\frac{M\left(  c,V\right)  C\left(  c,V\right)  }{h}%
{\displaystyle\int\limits_{x}^{b}}
\left\vert W^{h}(t)\right\vert \frac{F^{n-1}\left(  t,h,c,\mathcal{V}%
^{h}\right)  }{\left(  n-1\right)  !}\,dt\,=-\frac{M\left(  c,V\right)  }{n!}%
{\displaystyle\int\limits_{x}^{b}}
\partial_{t}F^{n}\left(  t,h,c,\mathcal{V}^{h}\right)  \,dt\\
& \\
&  =\frac{M\left(  c,V\right)  }{n!}F^{n}\left(  x,h,c,\mathcal{V}^{h}\right)
\end{align*}
Then, an induction argument shows that the Picard's series uniformly converges
to $\chi_{+}^{h}$ and%
\begin{equation}
\sup_{x<b\,,\ k\in\Omega_{c}\left(  V\right)  }\left\vert \chi_{+}^{h}\left(
x,k,\mathcal{V}^{h}\right)  \right\vert \leq M\left(  c,\mathcal{V}\right)
e^{\left\Vert F\left(  \cdot,h,c,\mathcal{V}^{h}\right)  \right\Vert
_{L^{\infty}(a,b)}}\,. \label{induction_est}%
\end{equation}
Since%
\begin{equation}
\left\Vert F\left(  \cdot,h,c,\mathcal{V}^{h}\right)  \right\Vert _{L^{\infty
}(a,b)}\lesssim\frac{1}{h}\left\Vert W^{h}\right\Vert _{L^{1}(a,b)}%
\lesssim1\,, \label{F_est}%
\end{equation}
we obtain%
\begin{equation}
\sup_{x<b\,,\ k\in\Omega_{c}\left(  V\right)  }\left\vert \chi_{+}^{h}\left(
x,k,\mathcal{V}^{h}\right)  \right\vert \lesssim1\,\Rightarrow1_{\left\{
x<b\right\}  }(x)1_{\Omega_{c}\left(  V\right)  }\left(  k\right)  \chi
_{+}^{h}\left(  x,k,\mathcal{V}^{h}\right)  =\mathcal{O}\left(  1\right)  \,.
\label{Jost_est_1_h}%
\end{equation}
The first relation in (\ref{Jost_bound_h}) is a consequence of
(\ref{Jost_est_1_h}) and of the exterior condition $\left.  \chi_{+}%
^{h}\left(  \cdot,k,\mathcal{V}^{h}\right)  \right\vert _{x>b}=e^{i\frac{k}%
{h}x}$.

Next consider $\partial_{1}\chi_{+}^{h}$; for $x\geq b$ this is explicitly
defined by $i\frac{k}{h}e^{i\frac{k}{h}x}$, while, for $x<b$, it fulfills the
equation%
\begin{equation}
\partial_{x}\chi_{+}^{h}\left(  x,k,\mathcal{V}^{h}\right)  =\partial_{x}%
\chi_{+}^{h}\left(  x,k,V\right)  -\frac{1}{h^{2}}%
{\displaystyle\int\limits_{x}^{b}}
\partial_{x}\mathcal{K}^{h}\left(  t,x,k,V\right)  W^{h}(t)\chi_{+}^{h}\left(
t,k,\mathcal{V}^{h}\right)  \,dt\,. \label{Jost_eq_h_local1}%
\end{equation}
Being $\chi_{+}^{h}$ and $\partial_{x}\mathcal{K}^{h}$ uniformly bounded for
$k\in\Omega_{c}\left(  V\right)  $, the integral part at the r.h.s. of
(\ref{Jost_eq_h_local1}) results dominated $\mathcal{O}\left(  1/h\right)  $
and, according to the result of Proposition \ref{Proposition_h_bounds}, the
same holds for the function $\partial_{x}\chi_{+}^{h}\left(  x,k,V\right)  $.
It follows%
\begin{equation}
1_{\left\{  x<b\right\}  }(x)1_{\Omega_{c}\left(  V\right)  }\left(  k\right)
\partial_{x}\chi_{+}^{h}\left(  x,k,\mathcal{V}^{h}\right)  =\mathcal{O}%
\left(  \frac{1}{h}\right)  \,.
\end{equation}
This relations and the explicit form $\partial_{x}\chi_{+}^{h}=i\frac{k}%
{h}e^{i\frac{k}{h}x}$, holding for $x>b$, yields the second relation in
(\ref{Jost_bound_h}).
\end{proof}

\bigskip

\end{document}